% !TeX spellcheck = en_US
% !TEX encoding = UTF-8 Unicode
%\documentclass[sn-mathphys,Numbered,pdflatex]{sn-jnl}
\documentclass{article}

\usepackage{ucs}
\usepackage[utf8]{inputenc}
\usepackage{amsmath}
\usepackage{amsfonts}

\usepackage{amssymb}%
\usepackage{amsthm}
\usepackage{stmaryrd}
\usepackage{tikz}
\usetikzlibrary{calc}
\usetikzlibrary{shapes}
\usetikzlibrary{arrows}
\usepackage[noend]{algpseudocode}
\usepackage{algorithm}
\usepackage{xcolor}
\usepackage{graphicx}
\usepackage{multirow}
\usepackage{changepage}
\usepackage[hidelinks]{hyperref}
\usepackage{xspace}
\usepackage{caption}
\usepackage{subcaption}
\usepackage{enumitem}

\usepackage{mathrsfs}%
\usepackage[title]{appendix}%
\usepackage{manyfoot}%

\usepackage{url}

\newtheorem{definition}{Definition}
\newtheorem{theorem}{Theorem}
\newtheorem{lemma}{Lemma}

\newtheorem{fact}{Fact}

\newtheorem{example}{Example}
\newtheorem*{remark}{Remark}
\newtheorem{problem}{Problem}
\usepackage{todonotes}
\tikzset{source/.style={draw, regular polygon, regular polygon sides = 3, minimum height=0.3cm}}	
\tikzset{switch/.style={draw, circle, minimum height=0.3cm}}
\tikzset{puits/.style={draw, regular polygon, regular polygon sides = 4, minimum height=0.3cm}}

\newcommand{\valid}{VALID\xspace}

\newcommand{\maxminm}{$\max$-m\xspace}
\newcommand{\minmaxm}{$\min$-M\xspace}
\newcommand{\minr}{$\min$-R\xspace}

\newcommand{\round}[1]{\widetilde{#1}\xspace}

\newcommand{\flowst}[1]{F_{#1}}
\newcommand{\roundedflowst}[1]{\round{F}_{#1}}
\newcommand{\flow}[2]{\flowst{#1}(#2)}
\newcommand{\roundedflow}[2]{\roundedflowst{#1}(#2)}
\newcommand{\load}[2]{load_{#1}(#2)}
\newcommand{\roundedload}[2]{\round{load}_{#1}(#2)}
\newcommand{\minload}[1]{m(#1)\xspace}
\newcommand{\roundedminload}[1]{\round{m}(#1)\xspace}
\newcommand{\maxload}[1]{M(#1)\xspace}
\newcommand{\roundedmaxload}[1]{\round{M}(#1)\xspace}
\newcommand{\loadreserve}[1]{R(#1)\xspace}
\newcommand{\roundedloadreserve}[1]{\round{R}(#1)\xspace}

\newcommand{\productionfunctionname}{Prod}
\newcommand{\capacityfunctionname}{Cap}
\newcommand{\powerfunctionname}{Pow}
\newcommand{\production}[1]{\productionfunctionname(#1)}
\newcommand{\capacity}[1]{\capacityfunctionname(#1)}
\newcommand{\power}[1]{\powerfunctionname(#1)}

\newcommand{\mininputflowst}{i}
\newcommand{\maxoutputflowst}{o}
\newcommand{\mininputflow}[5]{\mininputflowst(#1, #2, #3, #4, #5)}
\newcommand{\maxoutputflow}[5]{\maxoutputflowst(#1, #2, #3, #4, #5)}
\newcommand{\mininputflowsm}[3]{\mininputflowst(#1, #2, #3)}
\newcommand{\maxoutputflowsm}[3]{\maxoutputflowst(#1, #2, #3)}
%
%\raggedbottom
%%%\unnumbered% uncomment this for unnumbered level heads

\begin{document}
	
 \title{Configuring an heterogeneous smartgrid network: complexity and approximations for tree topologies.\footnote{This is a preprint version of \url{https://doi.org/10.1007/s10898-023-01338-0}}}
 
 \author{Dominique Barth \footnote{\url{dominique.barth@uvsq.fr}, \url{thierry.mautor@uvsq.fr}, DAVID, University of Versailles-St Quentin, 45, avenue des États-Unis, Versailles, 78035, France} \and Thierry Mauto $^\dagger$ \and Dimitri Watel \footnote{\url{dimitri.watel@ensiie.fr}, ENSIIE, 1, square de la Résistance, Evry, 91000, France}~\footnote{SAMOVAR, 9 Rue Charles Fourier, Evry, 91000, France} \and Marc-Antoine Weisser \footnote{\url{marc-antoine.weisser@centralesupelec.fr}, LISN, Centrale Supélec, Paris-Saclay University, Rue Noetzlin, Gif-sur-Yvette, 91190, France}}
 
 \date{}

\maketitle

\vspace{-0.5cm}
\begin{abstract}
	We address the problem of configuring a power distribution network with reliability and resilience objectives by satisfying the demands of the consumers and saturating each production source as little as possible. We consider power distribution networks containing source nodes producing electricity, nodes representing electricity consumers and switches between them. Configuring this network consists in deciding the orientation of the links between the nodes of the network. The electric flow is a direct consequence of the chosen configuration and can be computed in polynomial time. It is valid if it satisfies the demand of each consumer and capacity constraints on the network. In such a case, we study the problem of determining a feasible solution that balances the loads of the sources, that is their production rates. 
	We use three metrics to measure the quality of a solution: minimizing the maximum load, maximizing the minimum load and minimizing the difference of the maximum and the minimum loads. This defines optimization problems called  respectively \minmaxm, \maxminm and \minr.
	
	In the case where the graph of the network is a tree, it is known that the problem of building a valid configuration is polynomial. We show the three optimization variants have distinct properties regarding the theoretical complexity and the approximability. Particularly, we show that \minmaxm is polynomial, that \maxminm is NP-Hard but belongs to the class FPTAS and that \minr is NP-Hard, cannot be approximated to within any exponential relative ratio but, for any $\varepsilon > 0$, there exists an algorithm for which the value of the returned solution equals the value of an optimal solution shifted by at most $\varepsilon$.
\end{abstract}

\textit{Complexity, Approximability, Electrical network flow, Tree topology}

	\section{Introduction}
	Smart electrical distribution networks are becoming increasingly heterogeneous \cite{Frieden2021} with a multiplication 
	% and geographical distribution 
	of sources of production of different natures and capacities (wind, solar, biomass, etc.)
	with a diverse geographical distribution in particular at the scale of an urban territory. The configuration of these resilient networks must be dynamically optimized \cite{Calhau2019} by guaranteeing a reliability of service 
	%whatever the evolution of the conditions concerning not only the capacity of heterogeneous sources and the demands of consumers but also breakdowns or degradation of meteorological  conditions 
	taking into account changes in the capacity of sources or in consumer demands but also breakdowns or degradations due for example to weather conditions \cite{Devaux2014,Ghiani05}. This dynamic configuration of the distribution network is essential for the current development of smartgrids or microgrids \cite{Sahua2022} implementing collective self-consumption grids interconnecting on a local territory a set of prosumers (i.e. consumers able to produce, store and therefore share energy) \cite{Barth2022a,Cejka2021,Frieden2021}. 
	
	% The configuration of such a reliable smartgrid must therefore be able to manage a failure or a variation in the behavior of local sources or consumers. 
	Configuring such a distribution network means deciding which components (lines, sources, switches) should be activated or not. Reliability can be defined as the capacity of the electrical system to supply electricity in the quantity and with the quality demanded by consumers by using available sources reliably and fairly. Thus, to guarantee reliability, the objective here is to find the configuration satisfying the   demands of the consumers \cite{Calhau2019,Moslehi2010,Tang2014} by saturating as little as possible the production capacity of (heterogeneous) sources and the flow capacity of links and switches \cite{Barth2021,Blasi2022}.
	
	Configuration problems with reliability and resilience objectives are often considered from a graph theory perspective \cite{Atkins2009,Quiros-Tortos2013}. Thus, some solutions seek balanced configurations in terms of load power \cite{Shen2018} and other ones configure the network in disjoint balanced subnetworks \cite{Guo2016,Li2010,Sahua2022,Tang2014}, in particular in the form of spanning trees or sub-DAGS \cite{Li2014}. 	
	
	But, as the electric flow in a network is a direct consequence of the chosen configuration \cite{Shen2018}, the objective is more to determine if there exists a configuration satisfying the production and capacity constraints and the consumption demands than to compute an electric flow in a graph (such as in \cite{Christiano2011}). Moreover, the reliability of a network can be guaranteed by the existence of a configuration which does not use all the capacity of each link and each switch in order to contain the impact of the snowball effect during breakdowns. So given a desired maximum percentage $s$, the problem we solve allows us to determine if there is a network configuration using each switch and each link at most $s$ percents of its capacity. This percentage is called the load of the node. In \cite{Barth2021,Barth2022} we prove that this existence problem is NP-complete for general network topologies but polynomial for trees. 
	
	This paper introduces an optimization perspective to the problem so that the objective is now to determine a feasible configuration of the network where the loads of the sources are as balanced as possible. Such a balance responds, on the one hand, to the need to avoid saturating sources that would not be able to adapt to failures or sudden changes in demand (to avoid snowball effects), and, on the other hand, to propose a fair use of the energy produced by each actor in the context of a collective self-consumption grid, for example	\cite{Cejka2021,Frieden2021}.
	 
	We therefore introduce three metrics that can answer this problematic. As done, in \cite{Barth2021}, we introduce the load reserve, which is the difference between the maximum and the minimum loads of the sources. This load reserve should be minimized. Another way to balance the network is also to maximize the minimum load or minimize the maximum load. We show in this paper that, when the network topology is a tree, the three problems have distinct properties regarding the theoretical complexity and the approximability. Minimizing the maximum load is polynomial, which is a direct consequence of the result of \cite{Barth2022}, while minimizing the load reserve and maximizing the minimum load are strongly NP-hard problems. The two latter problems differ in their approximability. There exists an FPTAS when the objective is to maximize the minimum load, but no polynomial approximation (with any exponential ratio) exists for the other problem. Finally, for the two problems, there exists an FPTAS with an absolute ratio, i.e., for any $\varepsilon > 0$, there exists an algorithm for which the value of the returned solution is equal to the value of an optimal solution shifted by at most $\varepsilon$. 
	
	The FPTAS algorithms are based on the fact that we can manipulate the flows with a small precision. An interesting property is that, if we consider the physical electrical network, we cannot (and we do not need) to do the calculations with a flow that has an infinite precision, and this limitation naturally leads to an approximation algorithm that can run in polynomial time. 
	
	The paper is organized as follows. First, we define our model and the related computational problems. We then prove the polynomial and NP-hardness results, and end with the approximability result.
	
	\section{Model}
	
	The electrical distribution network is represented by a graph  $G=(V,E)$  in which there are three types of vertices $S$, $W$ and $P$. $S$ represents the set of {\em sources} of the network; each vertex $s \in S $ is characterized by a maximum supply capacity denoted $\production{s} \in \mathbb{N}^*$. $W$ corresponds to the set of {\em switches};  each vertex $ w \in W $ has a maximum flow capacity $\capacity{w} \in \mathbb{N}$. And $P$ represents the set of {\em sinks} (or consumers); each vertex $ p \in P $ is characterized by a demand $\power{p} \in \mathbb{N}$. The edges are the connections between the vertices. In the following, we focus on graphs $G$ that are trees.
	
	\begin{remark}
		Note that we do not require the sources or the sinks to be leaves of the tree. Similarly, we do not require the switches to be internal nodes. 
	\end{remark}
	
	An \textbf{orientation} $\mathcal{O}$ of a graph $G$ is a function associating each edge $ [u, v] \in E $ with a couple  $ (u, v) $ or $ (v, u) $ corresponding to the orientation of this edge. The directed graph obtained with $\mathcal{O}$ must be a DAG to avoid circuits in the graph of the electrical flow. In this paper we only focus on the case where $G$ is a tree, so the DAG constraint is necessarily satisfied.
	
	Note that for a given orientation, a switch is \emph{activated} only if it belongs to at least one induced path from a source to a sink. The activation of the switches can thus be seen as a consequence of the orientation of the edges. Consequently, in the rest of the paper, only the orientation will be considered to determine the configuration of a network. 
	
	Note also that even if the use of expensive technologies based on diodes forbidding one direction of electric flow   on a link is sometimes considered, they are not deployed in large electrical distribution networks. This orientation of each link of the graph is a consequence of the positions of sources and sinks on the network and their demand and production values, with the consequence that a brutal variation on production or demand can cause blackout situations (as the northeast one in 2023). In our model however, we consider that the orientation of the links is a decision variable (as for example in the case of restoration methods based on the search for the shortest paths considered as oriented from the source to the sinks \cite{Labrini2015,Quiros-Tortos2013}). On the one hand, this avoids  having to complicate the graph model by electrical considerations and on the other hand having to calculate a flow at each step of the optimization algorithms. In practice, with the considerations linked to the activation of the switches indicated above, the optimal solutions finally obtained turn out to be a realistic approximation of the real behavior of an electrical network in terms of orientation of the links, which is particularly true in the case of trees \cite{Hong2017}. Thus, an orientation of the arcs can therefore be considered valid from the point of view of an electric flow if for each arc (u,v), the total quantity of electricity in u is greater than the one in v, which will always be the case in the configurations discussed. 
	
	\subsection{Flow in the oriented tree and feasible orientation}
	
	Given an orientation $\mathcal{O}$, the network $G$ becomes a flow network with capacity only on the switches, with multiple sources and sinks. Such a flow network can be made compatible with the one used in  \cite{Christiano2011}, but the fundamental difference lies in the constraints satisfied by the flow. In addition to the classical conservation and capacity constraints, the (electric) flow entering each node in $W \cup P$ must be equitably distributed over all its incoming arcs (i.e., the arcs are considered equivalent to resistors with same resistance value). In the special case of a source, the same constraint applies except that the source production itself counts as an additional fictive arc. Consequently, the goal here is then not to determine a maximum flow since it is unique and determined by the orientation: since the directed graph we obtain with this orientation is a DAG, we can compute it by going up from $P$. Our objective is to know if there exists an orientation for which the implied unique flow satisfies the demands of the nodes of $P$ and respects the supply capacities of the sources of $S$ and the capacities of the switches of $W$. Figure~\ref{fig:example} gives examples of an orientation and the flow induced by that orientation. 
	
	\begin{definition}
		\label{def:valid:3}
		Let $\mathcal{O}$ be an orientation of the edges of the graph $G$. Let $\Gamma^+(v)$ and $d^-(v)$ be respectively the set of successors and the in-degree of $v$, the flow $\flow{\mathcal{O}}{u, v}$ going through each arc $(u, v)$ entering $v$ is  
		$$\begin{matrix}
			\text{ if }& v \in W, & v \in S, & v \in P\\
			\\
			\flow{\mathcal{O}}{u, v} = & \dfrac{\sum\limits_{w \in \Gamma^+(v)} \flow{\mathcal{O}}{v, w}}{d^-(v)}, 
			&\dfrac{\sum\limits_{w \in \Gamma^+(v)} \flow{\mathcal{O}}{v, w}}{d^-(v) + 1},
			&\dfrac{\sum\limits_{w \in \Gamma^+(v)} \flow{\mathcal{O}}{v, w} + \power{v}}{d^-(v)}
		\end{matrix}$$
	
		Since $u$ does not intervene in the formula, we also denote this value by $\flow{\mathcal{O}}{v}$.
		
		We say the orientation $\mathcal{O}$ is feasible if:
		\begin{itemize}
			\item it satisfies \emph{the demand constraints}: for every switch $v \in W$ and every sink $v \in P$, we have $d^-(v) > 0$.
			\item it satisfies \emph{the capacity constraints}: $\flow{\mathcal{O}}{v} \cdot d^-(v) \leq \capacity{v}$  for $v \in W$ and $\flow{\mathcal{O}}{v} \leq \production{v}$ for $v \in S$. Note that these capacity constraints include a supply capacity constraint.
		\end{itemize}
	\end{definition}

%\begin{remark}
%	The demand constraints impose every switch and every sink to have an entering arc. One could argue that a switch not having an input arc is reasonable as long as the output flow of the switch is 0. Similarly for a sink with a null called power. If such an orientation $\mathcal{O}$ exists, then we can build a feasible  orientation $\mathcal{O}'$ satisfying $\flow{\mathcal{O}'}{v} = \flow{\mathcal{O}}{v}$ for all the nodes. Let $V0$ be a maximal set of nodes with a null output flow. 
%\end{remark}
	
	\begin{example}
		Assuming the input network is the one shown in Figure~\ref{fig:example:a}. We give 5 examples of orientations, two of which satisfy the demand and the capacity constraints. Note that sources may not be leaves and may have entering arcs (as in Figure~\ref{fig:example:c}). 
		
		\begin{figure}[!ht]
			\captionsetup[subfigure]{justification=centering}
			\center
			\begin{subfigure}[t]{0.4\textwidth}
				\begin{tikzpicture}
					
					\clip (-3.75,-1.5) rectangle (0.75, 2);
					
					\node[source] [label=above:$s_1$, label=left:{\small100}] (S1) at (-3,1) {};
					\node[source] [label=above:$s_2$, label=right:{\small20}] (S2) at (0,1) {};
					
					\node[switch] [label=above left:$w_1$, label=left:{\small$60$}] (W1) at (-3,0) {};
					\node[switch] [label=above:$w_2$, label=below:\small{20}] (W12) at (-1.5,0) {};
					\node[switch] [label=above right:$w_3$, label=right:{\small$35$}] (W2) at (0,0) {};
					
					\node[puits] [label=below:$p_1$, label=left:{\small$50$}] (P1) at (-3,-1) {};
					\node[puits] [label=below:$p_{2}$, label=right:{\small$20$}] (P2) at (0,-1) {};
					\node[puits] [label=above:$p_{3}$, label=right:{\small$10$}] (P3) at (-1.5,1) {};
					
					\draw[-, >=latex] (S1) -- (W1);
					\draw[-, >=latex] (S1) -- (P3);
					\draw[-, >=latex] (S2) -- (W2);
					
					\draw (W1) -- (P1);
					\draw (W12) -- (W1);
					\draw (W12) -- (W2);
					\draw (W2) -- (P2);
				\end{tikzpicture}
				\caption{The instance, each node is associated with a production, capacity or power depending if it is a source, a switch or a sink.}	
				\label{fig:example:a}	
			\end{subfigure}%
			\begin{subfigure}[t]{0.4\textwidth}
				\begin{tikzpicture}
					\clip (-3.75,-1.5) rectangle (0.75, 2);
					
					\node[source] [label=above:$60/100$] (S1) at (-3,1) {};
					\node[source] [label=above:$20/20$] (S2) at (0,1) {};
					
					\node[switch]  (W1) at (-3,0) {};
					\node[switch, dotted]  (W12) at (-1.5,0) {};
					\node[switch]  (W2) at (0,0) {};
					
					\node[puits]  (P1) at (-3,-1) {};
					\node[puits]  (P2) at (0,-1) {};
					\node[puits]  (P3) at (-1.5,1) {};
					
					\draw[->, >=latex] (S1) -- (W1) node [midway, right] {\small 50};
					\draw[->, >=latex] (S1) -- (P3) node [midway, above] {\small 10};
					\draw[->, >=latex] (S2) -- (W2) node [midway, left] {\small 20};
					
					\draw[->, >=latex] (W1) -- (P1) node [midway, right] {\small 50};
					\draw[<-, >=latex, dotted] (W12) -- (W1);
					\draw[<-, >=latex, dotted] (W12) -- (W2);
					\draw[->, >=latex] (W2) -- (P2) node [midway, left] {\small 20};
				\end{tikzpicture}
				\caption{$\minload{\mathcal{O}} = 0.6$, $\maxload{\mathcal{O}} = 1$, $\loadreserve{\mathcal{O}} = 0.4$.}
				\label{fig:example:b}
			\end{subfigure}\\
			\begin{subfigure}[t]{0.4\textwidth}
				\begin{tikzpicture}
					\clip (-3.75,-1.5) rectangle (0.75, 2);
					
					\node[source] [label=right:$$]  [label=above:$5/100$]  (S1) at (-3,1) {};
					\node[source] [label=left:$$] [label=above:$75/20$] (S2) at (0,1) {};
					
					\node[switch]  (W1) at (-3,0) {};
					\node[switch] [label=below:$55/20$] (W12) at (-1.5,0) {};
					\node[switch]  (W2) at (0,0) {};
					
					\node[puits]  (P1) at (-3,-1) {};
					\node[puits]  (P2) at (0,-1) {};
					\node[puits]  (P3) at (-1.5,1) {};
					
					\draw[<-, >=latex] (S1) -- (W1) node [midway, right] {\small 5};
					\draw[->, >=latex] (S1) -- (P3) node [midway, above] {\small 10};
					\draw[->, >=latex] (S2) -- (W2) node [midway, left] {\small 75};
					
					\draw[->, >=latex] (W1) -- (P1) node [midway, right] {\small 50};
					\draw[->, >=latex] (W12) -- (W1) node [midway, above] {\small 55};
					\draw[<-, >=latex] (W12) -- (W2) node [midway, above] {\small 55};
					\draw[->, >=latex] (W2) -- (P2) node [midway, left] {\small 20};
				\end{tikzpicture}
				\caption{The capacity constraint is not satisfied for $w_2$ and $s_2$.}
				\label{fig:example:c}
			\end{subfigure}%
			\begin{subfigure}[t]{0.4\textwidth}
			\begin{tikzpicture}
				\clip (-3.75,-1.5) rectangle (0.75, 2);
				
				\node[source] [label=above:$70/100$] (S1) at (-3,1) {};
				\node[source] [label=above:$10/20$] (S2) at (0,1) {};
				
				\node[switch]  (W1) at (-3,0) {};
				\node[switch]  (W12) at (-1.5,0) {};
				\node[switch]  (W2) at (0,0) {};
				
				\node[puits]  (P1) at (-3,-1) {};
				\node[puits]  (P2) at (0,-1) {};
				\node[puits]  (P3) at (-1.5,1) {};
				
				\draw[->, >=latex] (S1) -- (W1) node [midway, right] {\small 60};
				\draw[->, >=latex] (S1) -- (P3) node [midway, above] {\small 10};
				\draw[->, >=latex] (S2) -- (W2) node [midway, left] {\small 10};
				
				\draw[->, >=latex] (W1) -- (P1) node [midway, right] {\small 50};
				\draw[<-, >=latex] (W12) -- (W1) node [midway, above] {\small 10};
				\draw[->, >=latex] (W12) -- (W2) node [midway, above] {\small 10};
				\draw[->, >=latex] (W2) -- (P2) node [midway, left] {\small 20};
			\end{tikzpicture}
		\caption{$\minload{\mathcal{O}} = 0.5$, $\maxload{\mathcal{O}} = 0.70$, $\loadreserve{\mathcal{O}} = 0.2$.}
			\label{fig:example:d}
		\end{subfigure}\\
		\begin{subfigure}[t]{0.4\textwidth}
			\begin{tikzpicture}
				\clip (-3.75,-1.5) rectangle (0.75, 2);
				
				\node[source] [label=right:$$] (S1) at (-3,1) {};
				\node[source] [label=left:$$] (S2) at (0,1) {};
				
				\node[switch]  (W1) at (-3,0) {};
				\node[switch] (W12) at (-1.5,0) {};
				\node[switch]  (W2) at (0,0) {};
				
				\node[puits]  (P1) at (-3,-1) {};
				\node[puits]  (P2) at (0,-1) {};
				\node[puits]  (P3) at (-1.5,1) {};
				
				\draw[->, >=latex] (S1) -- (W1);
				\draw[->, >=latex] (S1) -- (P3);
				\draw[->, >=latex] (S2) -- (W2);
				
				\draw[->, >=latex] (W1) -- (P1);
				\draw[->, >=latex] (W12) -- (W1);
				\draw[<-, >=latex] (W12) -- (W2);
				\draw[<-, >=latex] (W2) -- (P2);
			\end{tikzpicture}
			\caption{The demand constraint is not satisfied as $p_2$ has no entering arc.}
			\label{fig:example:e}
		\end{subfigure}%
		\begin{subfigure}[t]{0.4\textwidth}
			\begin{tikzpicture}
				\clip (-3.75,-1.5) rectangle (0.75, 2);
				
				\node[source] (S1) at (-3,1) {};
				\node[source] (S2) at (0,1) {};
				
				\node[switch]  (W1) at (-3,0) {};
				\node[switch]  (W12) at (-1.5,0) {};
				\node[switch]  (W2) at (0,0) {};
				
				\node[puits]  (P1) at (-3,-1) {};
				\node[puits]  (P2) at (0,-1) {};
				\node[puits]  (P3) at (-1.5,1) {};
				
				\draw[->, >=latex] (S1) -- (W1) node [midway, right] {\small 25};
				\draw[->, >=latex] (S1) -- (P3) node [midway, above] {\small 10};
				\draw[->, >=latex] (S2) -- (W2) node [midway, left] {\small 10};
				
				\draw[->, >=latex] (W1) -- (P1) node [midway, right] {\small 50};
				\draw[->, >=latex] (W12) -- (W1) node [midway, above] {\small 25};
				\draw[->, >=latex] (W12) -- (W2) node [midway, above] {\small 10};
				\draw[->, >=latex] (W2) -- (P2) node [midway, left] {\small 20};
			\end{tikzpicture}
			\caption{The demand constraint is not satisfied as $w_2$ has no entering arc.}
			\label{fig:example:f}
		\end{subfigure}\\
			\caption{Example of an instance and orientations for that instance. Throughout the paper, we use triangles, circles and squares to represent sources, switches and sinks respectively. For each orientation, it is explained if it is feasible according to Definition~\ref{def:valid:3}. In this case, the metrics introduced in Definition~\ref{def:valid:6} are given.}
			\label{fig:example}
		\end{figure}
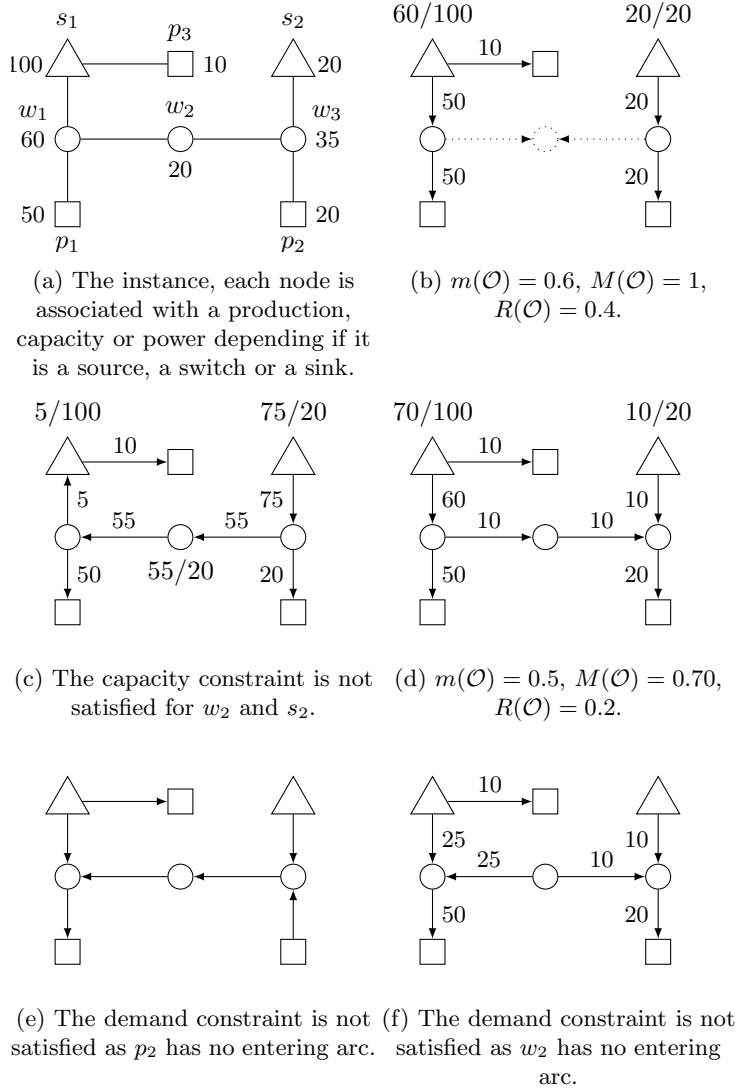
	\end{example}
	
	Determining whether there is a feasible orientation for a given electrical network leads to the following decision problem, which is shown to be NP-complete for general graphs \cite{Barth2021}, but polynomial for trees \cite{Barth2022}.
	
	\begin{problem}[\valid]
		Given a graph $G=(V,E)$, three functions  \textit{\productionfunctionname}, \textit{\capacityfunctionname} and \textit{\powerfunctionname}, does there exist a feasible orientation $\mathcal{O}$ of $G$.
	\end{problem}

	\begin{theorem}{\cite{Barth2022}}
		\label{theo:valid:poly}
		If $G$ is a tree, \valid can be solved in time $O(n^4 \log(n))$. In case of a positive answer, the algorithm builds a feasible orientation.
	\end{theorem}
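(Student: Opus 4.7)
My plan is to solve \valid on a tree by bottom-up dynamic programming. Root the tree at an arbitrary vertex $r$, and for every node $v$ maintain a table $\mathcal{T}_v$ summarising feasibility information about the subtree $T_v$ rooted at $v$, indexed by a compact description of the interface between $T_v$ and the rest of the tree: the orientation of the edge joining $v$ to its parent and, when applicable, the flow value carried by that edge. By Definition~\ref{def:valid:3} this interface is exactly what the rest of the tree needs to know about $T_v$. For a leaf $v$, the table is populated directly from the formulas of Definition~\ref{def:valid:3}, checking the capacity, production and demand constraints for each admissible orientation of the unique incident edge.

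For an internal node $v$, I would combine the children's tables one child at a time. For each candidate value $f$ of the common incoming flow $\flow{\mathcal{O}}{v}$ at $v$ and for each orientation of the edge to each child $c$ --- either $c \to v$, which constrains the child's table to supply an entry with boundary flow $f$, or $v \to c$, which admits any compatible child entry --- I accumulate the partial sum $\sum_{w\in\Gamma^+(v)}\flow{\mathcal{O}}{v,w}$. The formula of Definition~\ref{def:valid:3} then imposes a linear relation between $f$, $d^-(v)$, this sum, and $\power{v}$ or $\production{v}$; combined with the capacity constraint, and with the demand constraint $d^-(v) > 0$ when $v \in W \cup P$, it filters the admissible combinations that are inserted into $\mathcal{T}_v$. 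A feasible orientation is then reconstructed by backtracking from any entry of $\mathcal{T}_r$ compatible with the absence of a parent for $r$, using backpointers stored at each insertion; \valid has a positive answer exactly when such an entry exists.

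The main obstacle will be to control the size of the tables so that the algorithm runs in polynomial time, since in principle the flows appearing on arcs are rationals whose denominators multiply along paths and whose numerators range over subsets of sink demands. I expect to address this by proving, via induction along the tree, that the set of feasible flow values that can occur on any single arc has polynomial size: each such value can be written as $\bigl(\sum_{p \in Q}\power{p}\bigr)/D$ with $Q$ a subset of sinks of the subtree and $D$ a product of at most $n$ in-degrees, and the nested tree structure is used to argue that only polynomially many such rationals are actually reachable. Combined with a sorted-merge implementation of the child-by-child combination, which contributes the extra $\log n$ factor, this bound gives the overall $O(n^4 \log n)$ running time announced in the theorem.
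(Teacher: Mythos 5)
Your overall architecture (root the tree, propagate interface information bottom-up, reconstruct by backtracking) is the right family of algorithms, but the proposal hinges on a claim that is false in general: that the set of flow values reachable on a single arc has polynomial size. The flow on an arc $(u,v)$ is determined by the orientation of the subtree below it, and different orientations can produce exponentially many distinct values. Concretely, take a switch $w$ whose children $c_1,\dots,c_k$ are sources, each $c_i$ also attached to a leaf sink $p_i$ with $\power{p_i}=2^i$; for each $i$ the edge $[w,c_i]$ may be oriented either way, and when it is oriented $(w,c_i)$ the node $w$ must supply $2^{i-1}$ to $c_i$. The demand propagated to $w$'s own incoming arc is then $\sum_{i\in I}2^{i-1}$ over the chosen subset $I$, i.e.\ $2^k$ distinct values. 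The present paper makes the same point explicitly when motivating the rounding scheme (``roughly an exponential number $O(n^n\Pi)$ of possible values for the flow''), and that is precisely why the optimization variants require an approximation. So a table $\mathcal{T}_v$ indexed by exact boundary flow values cannot be kept polynomial, and your sketched inductive bound of the form $\bigl(\sum_{p\in Q}\power{p}\bigr)/D$ does not save it (it also misstates the structure of the values: each sink's power is divided by its own path-dependent denominator before the summation, not by a common $D$).

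The idea you are missing is \emph{monotonicity}, which is what the cited algorithm of \cite{Barth2022} (summarized in the approximation section of this paper, and formalized here as Lemma~\ref{lem:approx:3}) exploits: if the part of the tree upstream of an arc can feasibly deliver a flow $f$ into that arc, it can feasibly deliver any $f'<f$. Hence one never needs the full set of achievable boundary flows; for each edge $[u,v]$ and each of its two orientations it suffices to compute two scalars --- the minimum flow that the downstream side must be sent through the arc, and the maximum flow that the upstream side can push into it --- and to declare the instance feasible when, for some edge and some orientation, the latter dominates the former. These two quantities do satisfy a clean bottom-up recurrence over the tree (this is the unrounded ancestor of the functions $\mininputflowst$ and $\maxoutputflowst$ of Definitions~\ref{def:approx:2} and~\ref{def:approx:1}), which is what yields the stated $O(n^4\log n)$ bound. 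Without replacing your exhaustive flow-value tables by these two extremal values, the algorithm you describe is exponential.
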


	In the rest of the paper, we focus on this particular case where the input graph is a tree $T$. 

	\subsection{Optimization version}

	To evaluate the quality of a solution $\mathcal{O}$, we consider the loads of the sources.
	
	\begin{definition}
		\label{def:valid:6}
		Given a feasible orientation $\mathcal{O}$, the load of a source $s \in S$ is defined as the ratio $\load{\mathcal{O}}{s} = \flow{\mathcal{O}}{s} / \production{s}$.
	\end{definition}
	
	Maximizing the minimum load, minimizing the maximum load or minimizing the load reserve leads to a more balanced solicitation rate of the sources, knowing that the more balanced these rates are, the more resistant the network is to a reconfiguration in case of failures \cite{Guo2016,Li2014}. Reliability and resilience of the network are thus related to the following three optimization problems.
	
	\begin{problem}[\maxminm]
		Given a tree $T=(V,E)$, three functions  \textit{\productionfunctionname}, \textit{\capacityfunctionname} and \textit{\powerfunctionname}, compute an orientation $\mathcal{O}$ of $T$ satisfying the demand and the capacity constraints and maximizing the \emph{minimum load} $\minload{\mathcal{O}} = \min\limits_{s \in S} \load{\mathcal{O}}{s}$.
	\end{problem}

	\begin{problem}[\minmaxm]
		Given a tree $T=(V,E)$, three functions  \textit{\productionfunctionname}, \textit{\capacityfunctionname} and \textit{\powerfunctionname}, compute an orientation $\mathcal{O}$ of $T$ satisfying the demand and the capacity constraints and minimizing the \emph{maximum load} $\maxload{\mathcal{O}} = \max\limits_{s \in S} \load{\mathcal{O}}{s}$.
	\end{problem}

	\begin{problem}[\minr]
		Given a tree $T=(V,E)$, three functions  \textit{\productionfunctionname}, \textit{\capacityfunctionname} and \textit{\powerfunctionname}, compute an orientation $\mathcal{O}$ of $T$ satisfying the demand and the capacity constraints and minimizing the \emph{load reserve} $\loadreserve{\mathcal{O}} = \maxload{\mathcal{O}} - \minload{\mathcal{O}}$
	\end{problem}

	\begin{example}
		Figure~\ref{fig:example} gives examples of values for $\minload{\mathcal{O}}$, $\maxload{\mathcal{O}}$ and $\loadreserve{\mathcal{O}}$. Note that this example shows that maximizing $\minload{\mathcal{O}}$ is not equivalent to minimize $\loadreserve{\mathcal{O}}$. Similarly, we could build an example where the solutions minimizing $\maxload{\mathcal{O}}$ are not the solutions minimizing $\loadreserve{\mathcal{O}}$. 
	\end{example}
	
\section{Complexity of \maxminm, \minmaxm, \minr}

In this section, we explore the classical complexity of the three problems. We show that \minmaxm is the easiest problem, since it is polynomial, while the other two are strongly NP-Hard. 

\begin{theorem}
	\label{theo:minmaxm:poly}
	\minmaxm is polynomial
\end{theorem}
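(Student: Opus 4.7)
The plan is to reduce \minmaxm to a polynomial number of feasibility tests via binary search on the target maximum load. I would first observe that, for any $\lambda > 0$, scaling every source production to $\lambda \cdot \production{s}$ yields an instance $I_\lambda$ for which an orientation $\mathcal{O}$ is feasible if and only if $\mathcal{O}$ is feasible for the original instance and $\maxload{\mathcal{O}} \leq \lambda$. Thus the optimum $\lambda^* = \min_{\mathcal{O}} \maxload{\mathcal{O}}$ equals the smallest $\lambda$ for which $I_\lambda$ is a yes-instance of \valid, and Theorem~\ref{theo:valid:poly} supplies a polynomial-time test for each $\lambda$ once its denominator has been cleared by multiplying productions, demands, and capacities through by a common integer.

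To make the search polynomial, I would then establish a bit-length bound on $\lambda^*$. Unfolding the recursive definition of $\flow{\mathcal{O}}{v}$ in Definition~\ref{def:valid:3}, the flow into any node is a rational whose denominator divides a product of in-degrees taken along its induced subtree, and hence is at most $n^n$. Therefore every achievable load $\load{\mathcal{O}}{s}$ can be written as a fraction $a/b$ whose denominator satisfies $\log b = O(n \log n + \log \max_s \production{s})$, and two distinct achievable values of $\maxload{\mathcal{O}}$ differ by at least $1/b^2$. Since $\lambda^*$ is attained at some optimal orientation, it admits such a representation.

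With this bound in hand, I would run binary search on $\lambda \in [0, 1]$ using polynomially many calls to the algorithm of Theorem~\ref{theo:valid:poly}. After $O(n \log n + \log \max_s \production{s})$ iterations the current window is narrower than the minimal gap between achievable values and contains exactly one rational with denominator at most the above bound, namely $\lambda^*$ itself; I would recover this value exactly through continued-fraction rounding in the Stern--Brocot tree. A final call to \valid on $I_{\lambda^*}$ then returns an orientation realising the optimum. The hard part will be the bit-length bound on $\lambda^*$: without it, the binary search could either fail to terminate in polynomial time or return only an approximation of the optimum, whereas the rest of the construction is a routine orchestration of Theorem~\ref{theo:valid:poly}.
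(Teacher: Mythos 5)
Your proposal is correct and follows essentially the same route as the paper: binary search on the target maximum load, reduced at each step to a \valid{} test on the instance with productions scaled by $\lambda$ (clearing denominators to stay integral), with termination justified by the $n^n\cdot\max_s\production{s}$ bound on the denominators of achievable loads. The only cosmetic difference is in the endgame: you recover $\lambda^*$ by continued-fraction rounding once the window is below the minimal gap, whereas the paper simply updates the upper endpoint to the achieved value $\maxload{\mathcal{O}}$ at each successful test so that it equals the optimum at termination; both are valid.
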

\begin{proof}
	The core idea to solve \minmaxm is to use a dichotomy with the algorithm of Theorem~\ref{theo:valid:poly}. The dichotomy works as follows: at each iteration, we manipulate an interval $[a, b]$ containing the optimal maximum load (initialized with $a = 0$ and $b = 1$). We search for an orientation $\mathcal{O}$ with $\maxload{\mathcal{O}} \leq (a + b) / 2$. If no such orientation exists, then $a$ becomes $(a + b) / 2$, otherwise, $b$ becomes $\maxload{\mathcal{O}}$. 
	
	To do so, we solve the construction problem associated to \minmaxm that, given a rational $M \in [0; 1]$, asks for a feasible orientation $\mathcal{O}$ where $\maxload{\mathcal{O}} \leq M$. This problem can be answered by determining the validity of the instance where, for every source $s \in S$, the production $\production{s}$ is truncated to $M \cdot \production{s}$. By the capacity constraint, the flow outgoing from $s$ is less than $M \cdot \production{s}$, consequently, any feasible orientation $\mathcal{O}$ in that truncated instance is also feasible in the original instance and satisfies $\maxload{\mathcal{O}} \leq M$. Note that each production should be an integer and that $M \cdot \production{s}$ can be not in $\mathbb{N}$. We can get around this problem by noticing that multiplying each production, capacity and power by the same value does not change the instance. If $M = p / q$, then we can then multiply all the values by $q$ to get an equivalent instance where each production, power and capacity is an integer.
	
	The load reserve of any feasible orientation $\mathcal{O}$ is a rational that can be encoded with a polynomial number of bits in the size of the instance. Then there exists a value $\varepsilon$ such that $\lfloor \log(\varepsilon) \rfloor$ is polynomial and, for any two distinct feasible orientations $\mathcal{O}$ and $\mathcal{O}'$, $\lvert \maxload{\mathcal{O}} - \maxload{\mathcal{O}'}\rvert  \geq \varepsilon$. So when $b - a < \varepsilon$, after $O(\log(\varepsilon))$ iterations, we stop and return the last found orientation $\mathcal{O}$ and the corresponding value $\maxload{\mathcal{O}}$.
	
	Let $\sigma = \max_{s \in S} \production{s}$ and $\Pi = \sum_{p \in P} \power{p}$. The load of a source is either 0 or belongs to $L = \{p / q | p \in \left[1; \Pi\right], q \in [1, n^n \cdot \sigma]\}$ Indeed, the flow cannot be greater that the value asked by the consumers, it cannot be divided more than $n$ times and is divided by at most $n$ at each step, finally, in order to compute the load, we divide the flow by at most the maximum source production. The value $\varepsilon$ is at least the minimum difference between two load reserves. Each load reserve is a difference between two values of $L$. Thus $\varepsilon \geq 1/(n^n \cdot \sigma)^4$. The number of iterations is then at most $4 \cdot n \log(n \sigma)$. By Theorem~\ref{theo:valid:poly}, the dichotomy is done in time at most $O(n^5 \log(n) \log(n \sigma))$.
\end{proof}

\begin{theorem}
	\label{theo:maxminm:minr:nphard}
	\maxminm and \minr are strongly NP-hard. More precisely determining if there exists a feasible orientation $\mathcal{O}$ with $\loadreserve{\mathcal{O}} \leq \frac{2}{3}$ (respectively $\minload{\mathcal{O}} \geq \frac{1}{3}$) is NP-Complete. 
\end{theorem}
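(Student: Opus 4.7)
The plan is to show membership in NP and then to reduce from a strongly NP-hard problem. Membership is easy: given an orientation $\mathcal{O}$, one can compute in polynomial time the flow prescribed by Definition~\ref{def:valid:3} by sweeping up from the sinks, check the capacity and demand constraints, and thus compute and verify the values $\minload{\mathcal{O}}$, $\maxload{\mathcal{O}}$, and $\loadreserve{\mathcal{O}}$. For hardness, 3-PARTITION is the natural candidate, because the threshold $\frac{1}{3}$ suggests a ``three items per group summing to $B$'' structure and the problem is strongly NP-hard.

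Starting from an instance $a_1, \ldots, a_{3m}$ of 3-PARTITION with target $B$ and the usual bounds $\frac{B}{4} < a_i < \frac{B}{2}$ (so every valid group contains exactly three items), I would construct a tree $T$ with $m$ identical sources, each with $\production{s_j} = 3B$, one sink per item with $\power{p_i} = a_i$, and a carefully designed skeleton of switches whose role is to let the edge orientations realise an arbitrary assignment of items to sources. The calibration $3B$ is chosen so that total source capacity is $3mB$ while total demand is $mB$; then $\minload{\mathcal{O}} \geq \frac{1}{3}$ forces each source to produce at least $B$, and since source outputs must sum to the total demand $mB$ every source would produce exactly $B$. The reduction for \minr would reuse the same tree, enriched with a small ``anchor'' subtree pinning one additional source at load $1$ in every feasible orientation (for instance, by connecting this source only to a set of leaf sinks whose total demand matches its capacity). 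Then $\loadreserve{\mathcal{O}} \leq \frac{2}{3}$ becomes equivalent to $\minload{\mathcal{O}} \geq \frac{1}{3}$ on the other sources, and the same reduction applies.

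Correctness then splits into two directions. Forward, given a valid partition into triples $G_1, \ldots, G_m$ summing to $B$, I would orient $T$ so that the entire flow towards the sinks of $G_j$ enters through $s_j$; the arithmetic of Definition~\ref{def:valid:3} then gives $\flow{\mathcal{O}}{s_j} = B$ for every $j$ and hence $\minload{\mathcal{O}} = \frac{1}{3}$. Reverse, from an orientation achieving the threshold, the tight equality $\flow{\mathcal{O}}{s_j} = B$ for every $j$ combined with the capacities imposed by the skeleton switches would force every item to be routed to exactly one source. Since each source totals $B$ and every $a_i$ lies strictly between $\frac{B}{4}$ and $\frac{B}{2}$, the induced assignment must group exactly three items per source, producing a solution to 3-PARTITION.

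The key difficulty, and the main obstacle, is the gadget design for the switch skeleton. A naive star of sources and sinks would split every item's demand equally among the sources and thus be oblivious to the 3-PARTITION input, while a naive path would only encode contiguous partitions and hence reduce to a problem solvable by a trivial dynamic programme. The skeleton therefore needs enough branching to reach any item from any source, but it must be augmented with switches whose capacities (chosen in terms of $B$ or the $a_i$) penalise any ``split'' routing, so that among the feasible orientations reaching the threshold only those corresponding to an integral item-to-source assignment remain; achieving this while keeping the underlying graph a tree is where the careful bookkeeping will be concentrated.
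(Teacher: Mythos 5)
Your NP-membership argument is fine, and reducing from a strongly NP-hard problem such as 3-PARTITION is a legitimate way to avoid the scaling issue. But the reduction itself has a genuine gap exactly where you locate "the main obstacle," and the obstacle is not mere bookkeeping — the structure you describe cannot exist on a tree under this flow model. To let the orientation realise an \emph{arbitrary} assignment of the $3m$ item-sinks to the $m$ sources, each item-sink would need, for every source, a distinct neighbour lying on the path towards that source; doing this for two or more item-sinks already creates a cycle, so the required connectivity is incompatible with $T$ being a tree. Even where connectivity is available, the equal-splitting rule of Definition~\ref{def:valid:3} forbids selective routing: a node with $d$ incoming arcs draws exactly $1/d$ of its outgoing total from each of them, so a switch shared by the paths to several sources cannot send item $i$'s demand preferentially towards $s_j$. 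Your reverse direction ("the capacities imposed by the skeleton switches would force every item to be routed to exactly one source") therefore rests on a gadget that is not constructed and that, in the form sketched, cannot be constructed.

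The paper's proof takes a different route precisely to sidestep this. It reduces from Subset Sum and makes the reduction strong by a separate gadget: a path of alternating small sinks and sources that simulates a sink of power $2 + x\cdot 2^{-m}$ using only constant-size powers and productions, so the integers $x_i, B$ never appear as magnitudes in the instance. It then \emph{exploits} rather than fights the equal-splitting rule: a single central switch $w$ has one incoming arc from each of $n+1$ sources, so each of them receives exactly $\frac{1}{n+1}$ of $w$'s total outgoing flow, and that total is $2(1+|\alpha_w|+|\alpha_v|) + 2^{-m}(B - \sum_{j\in\alpha_w} x_j)$; the only binary choices are which of the edges $[t_i,w_i]$ and $[r_i,v_i]$ point away from $w$, and the load of the central source hits $\frac13$ exactly when the selected $x_j$ sum to $B$. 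If you want to salvage a 3-PARTITION-based proof you would need a mechanism of this kind — one in which the combinatorial choice is encoded in a single aggregated flow value at an equal-splitting node — rather than an item-to-source routing skeleton.
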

\begin{proof}
	We describe below a reduction from the Subset Sum problem where, given $n$ integers $x_1, x_2, \dots, x_n$ and an integer $B$, the question is to decide whether there exists $I \subseteq \llbracket 1; n \rrbracket$ such that $\sum_{i \in I} x_i = B$. Let then $\mathcal{I} = (x_1, x_2, \dots, x_n, B)$ be such an instance.
	
	This proof is done in three steps: first we describe a gadget to encode large integers, then we describe the construction of the reduction and some properties, and finally we prove the correctness of the reduction.
	
	In the built instance, we will have $\maxload{\mathcal{O}} = 1$ if $\mathcal{O}$ is feasible, meaning that maximizing the minimum load and minimizing the load reserve are equivalent. 
	
	\paragraph{Encode large numbers}
	
	As Subset Sum is weakly NP-Complete, we cannot use the values $x_i$ or $B$ in the powers, capacities and productions of the instance. First, we explain how it is possible to encode these integers with small productions and powers. Let $x \in \mathbb{N}$ be any integer such that a binary encoding of $x + 2$ is $\overline{b_{p - 1}\cdots b_1b_0}$ ($b_0$ is the lowest bit and $b_{p-1}$ is the highest). Let $m \geq p$ such that $m \geq 3$, we set $b_{p} = b_{p + 1} = \dots = b_{m - 1} = 0$.  Then, we have $x + 2 = \overline{b_{m-1}\dots b_1b_0} = \sum_{j = 0}^{m - 1} b_j 2^{j}$. 
	
	We then build a path of size $2m$ containing alternatively, for $j \in \llbracket 1; m \rrbracket$, a sink $p_j \in P$ with power $\power{p_j} = 2 + b_{j-1}$ and a source $s_j \in S$ with production $\production{s_j} = 2$ if $j = 1$ and 3 otherwise. Finally, we assume that the source $s_m$ has another neighbor $v$ (which can be of any type). Figure~\ref{fig:encode:2} illustrates this instance. 
	
	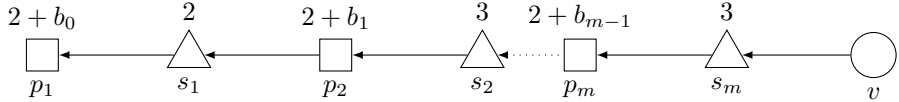
\begin{figure}[ht!]
		\centering
		\begin{tikzpicture}[xscale=0.65]
			\tikzset{source/.style={draw, regular polygon, regular polygon sides = 3, minimum height=0.6cm}}	
			\tikzset{switch/.style={draw, circle, minimum height=0.6cm}}
			\tikzset{puits/.style={draw, regular polygon, regular polygon sides = 4, minimum height=0.6cm}}
			
			\node[puits, label={above:$2 + b_0$}, label={below:$p_1$}] (P1) at (0, 0) {};
			\node[puits, label={above:$2 + b_1$}, label={below:$p_2$}] (P2) at (6, 0) {};
			\node[puits, label={above:$2 + b_{m - 1}$}, label={below:$p_m$}] (PM) at (11, 0) {};
			
			\node[source, label={above:$2$}, label={below:$s_1$}] (S1) at (3, 0) {};
			\node[source, label={above:$3$}, label={below:$s_2$}] (S2) at (9, 0) {};
			\node[source, label={above:$3$}, label={below:$s_m$}] (SM) at (14, 0) {};
			
			\node[switch, label={below:$v$}] (V) at (17, 0) {};
			
			\draw[<-, >=latex] (P1) -- (S1);
			\draw[<-, >=latex] (P2) -- (S2);
			\draw[<-, >=latex] (PM) -- (SM);
			
			\draw[<-, >=latex] (S1) -- (P2);
			\draw[<-, >=latex, dotted] (S2) -- (PM) node[midway, sloped, below] {};
%			\draw[<-] (SM) -- ($(SM) + (1, 0)$) node[right] {\small $2 + x \cdot 2^{-m}$};
			\draw[<-, >=latex] (SM) -- (V);
			
		\end{tikzpicture}
		\caption{Gadget equivalent to a sink with power $2 + x \cdot 2^{-m}$ where $x + 2 = \overline{b_{m - 1}\cdots b_1b_0}$. On the graph, each edge is directed toward the left as, otherwise, some demand constraints or some capacity constraints are not satisfied. Below each arc is given the flow going through the arc. }	
		\label{fig:encode:2}
	\end{figure}
	
	We now show an intermediate result showing that this path acts as a sink of value $2 + x \cdot 2^{-m}$.
	
	\begin{fact}
		\label{lem:gadget}
		Given a feasible orientation $\mathcal{O}$, then, for every $j \in \llbracket 1; m \rrbracket$, $\mathcal{O}([p_j, s_j]) = (s_j, p_j)$, for every $j \in \llbracket 1; m - 1 \rrbracket$, $\mathcal{O}([p_{j+1}, s_j]) = (p_{j+1}, s_j)$ and $\mathcal{O}([v, s_m]) = (v, s_m)$. The load of each source $s_j$ is at least 0.5 and $\flow{\mathcal{O}}{v, s_m} = 2 + x \cdot 2^{-m}$.
	\end{fact}
	\begin{proof}
		We prove this property by induction on $j$, proving in addition that $\flow{\mathcal{O}}{s_j, p_j} = \frac{2^{j+1} - 2}{2^{j - 1}} + \sum_{i = 0}^{j - 1} b_i 2^{i + 1 - j}$. 
		
		$[s_1, p_1]$ is directed toward $p_1$ otherwise the demand constraint is not satisfied for $p_1$. In addition, $\mathcal{O}([s_1, p_2]) = (p_2, s_1)$ otherwise the flow outgoing from $s_1$ is at least $3$ which is greater than $\production{s_1}$. The flow entering $p_1$ is then $2 + b_0 = \frac{2^{2} - 2}{2^{0}} + \sum_{i = 0}^{0} b_i 2^{i}$. Thus, the property is satisfied for $j = 1$. 
		
		Assuming the property is true for all $i \leq j$, we now show it is also true for $j + 1$. The edge $[s_{j+1}, p_{j+1}]$ is directed toward $p_{j+1}$ otherwise the demand constraint is not satisfied for $p_{j+1}$. The flow $\flow{\mathcal{O}}{s_{j+1}, p_{j+1}}$ entering $p_{j + 1}$ is then
		\begin{align*}
			2 + b_j + \frac{1}{2} \left(\frac{2^{j+1} - 2}{2^{j - 1}} + \sum\limits_{i = 0}^{j - 1} b_i 2^{i + 1 - j}\right) &= \left(2 + \frac{2^{j+1} - 2}{2^j}\right) + \left(\sum\limits_{i = 0}^{j-1} b_i 2^{i - j} + b_j\right)\\
			&= \frac{2^{j+2} - 2}{2^j} + \sum\limits_{i = 0}^{j} b_i 2^{i - j}
		\end{align*}
		
		If $j + 1 \neq m$ then $\mathcal{O}([s_{j+1}, p_{j+2}]) = (p_{j+2}, s_{j+1})$ otherwise $s_{j+1}$ must power at least the flow $\flow{\mathcal{O}}{s_{j+1}, p_{j+1}}$ and half of $\power{p_{j+2}}$, which is strictly greater than $(2^{j+2} - 2)/2^j \geq 3 = \production{s_{j+1}}$. If $j + 1 = m$ then $s_m$ cannot power $p_m$ alone as $j + 1 = m \geq 3$ and then $(2^{j+2} - 2)/2^j > 3 = \production{s_m}$. The property is then proved by induction. 
		
		Consequently, the flow $\flow{\mathcal{O}}{v, s_m}$ entering $s_m$ is the flow entering $p_m$ divided by 2. 
		$$\flow{\mathcal{O}}{v, s_m} = \frac{2^{m+1} - 2}{2^m} + \sum\limits_{i = 0}^{m - 1} b_i 2^{i - m} = 2 + (\sum\limits_{i = 0}^{m - 1} b_i 2^{i} - 2) \cdot 2^{-m} = 2 + x \cdot 2^{-m}$$
		
		Note that $\flow{\mathcal{O}}{s_j} = \flow{\mathcal{O}}{s_j, p_j} / 2$ is between 1.5 and 2.5 except when $j = 1$ where it is between 1 and 1.5. Then $\load{\mathcal{O}}{s_j}$ is greater than 0.5.
	\end{proof}

	Thus we can now use sinks with called powers $2 - 2^{-m} x$ for any integer $x$.

	\paragraph{Build the instance}
	
	In this part, we will explain how we build an instance $\mathcal{J} = (T, \productionfunctionname, \capacityfunctionname, \powerfunctionname)$ of \minr and \maxminm from $\mathcal{I} = (x_1, x_2, \dots, x_n, B)$ using the previous gadget. 
	
	Without loss of generality, we assume that $n$ is such that $0.1 > \frac{2}{3n + 1}$ and that $B \geq x_i$ for all $i \in \llbracket 1; n \rrbracket$. We then choose $m$ such that 
	
	\begin{align*}
		1 > 2^{-m} B &; & -1 < 2^{-m} (B - \sum\limits_{i = 1}^n x_i) &; & \forall i \in \llbracket 1; n \rrbracket \quad \frac{2}{3n + 1} > 2^{-m} 2 x_i
	\end{align*}

	Note that $m$ is polynomial in $n$ and the size of the encoding of $x_i$ and $B$.
	\begin{itemize}
		\item We add $2n + 2$ switches $w_1, w_2, \dots, w_n, v_1, v_2, \dots, v_n, w$ and $w_c$ with arbitrarily large capacities (for instance $\sum_{p \in P} \power{p}$).
		\item We build $n$ sinks $p_i$ with power $4 - 2^{-m} 2 x_i$, for each $i \in \llbracket 1; n \rrbracket$, by building the gadget of Figure~\ref{fig:encode:2} with $x = 2^m - 2x_i$ and where the node $v$ is replaced by a sink with power 1. We also build a sink $p$ with power $2 + 2^{-m}B$ by using the gadget with $x = B$ where the node $v$ is the switch $w$. Note that this procedure adds $(2m + 1) \cdot n + 2m$ nodes.
		\item We add $n$ sinks $q_1, q_2, \dots, q_n$ with power 4 and a sink $p_c$ with power 10. 
		\item We add $2n$ sources $t_1, t_2, \dots, t_n, r_1, r_2, \dots, r_n$ with production 4; $n$ sources $s_1, s_2, \dots, s_n$ with production 2; a source $s$ with production $6$ and a source $s_c$ with production 10.
		\item For each $i \in \llbracket 1; n \rrbracket$, we add the 7 edges $[p_i, t_i]$, $[t_i, w_i]$, $[w_i, w]$, $[q_i, r_i]$, $[r_i, v_i]$, $[v_i, w]$ and $[s_i, w]$. 
		\item We add the edges $[s_n, w_c], [w_c, s_c]$, $[p_c, s_c]$ and $[s, w]$.
	\end{itemize}
	Figure~\ref{fig:reduc:2} illustrates this instance. 
	
	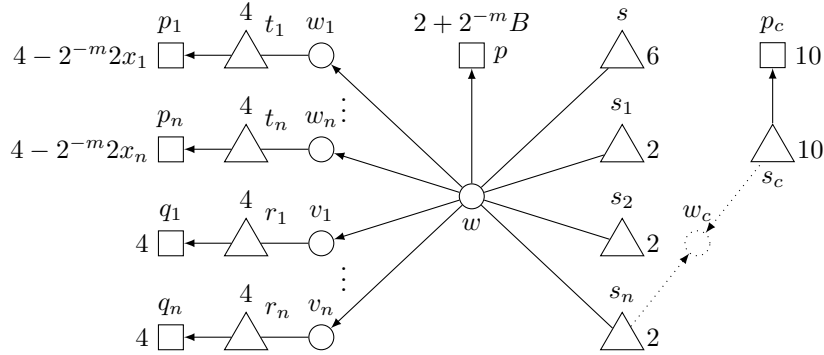
\begin{figure}[ht!]
		\centering
			\begin{tikzpicture}[yscale=1.25]

				\node[puits, label={left:$4 - 2^{-m} 2x_1$}, label={above:$p_1$}] (P1) at (0, 6) {};
				\node[puits, label={left:$4 - 2^{-m} 2x_n$}, label={above:$p_n$}] (PN) at (0, 5) {};
				\node[puits, label={left:$4$}, label={above:$q_1$}] (Q1) at (0, 4) {};
				\node[puits, label={left:$4$}, label={above:$q_n$}] (QN) at (0, 3) {};
				
				\node[source, label={above:$4$}, label={above right:$t_1$}] (T1) at (1, 6) {};
				\node[source, label={above:$4$}, label={above right:$t_n$}] (TN) at (1, 5) {};
				\node[source, label={above:$4$}, label={above right:$r_1$}] (R1) at (1, 4) {};
				\node[source, label={above:$4$}, label={above right:$r_n$}] (RN) at (1, 3) {};
				
				\node[switch, label={above:$w_1$}] (W1) at (2, 6) {};
				\node[switch, label={above:$w_n$}] (WN) at (2, 5) {};
				\node[switch, label={above:$v_1$}] (V1) at (2, 4) {};
				\node[switch, label={above:$v_n$}] (VN) at (2, 3) {};
				
				\node[switch, label={below:$w$}] (W) at (4, 4.5) {};
				\node[puits, label={above:$2 + 2^{-m} B$},, label={right:$p$}] (P) at (4, 6) {};
				
				\node[source, label={right:$6$}, label={above:$s$}] (S) at (6, 6) {};
				\node[source, label={right:$2$}, label={above:$s_1$}] (S1) at (6, 5) {};
				\node[source, label={right:$2$}, label={above:$s_2$}] (S2) at (6, 4) {};
				\node[source, label={right:$2$}, label={above:$s_n$}] (SN) at (6, 3) {};

				\node[switch, label={above:$w_c$}, dotted] (A) at ($(SN) + (1, 1)$) {};
				\node[source, label={below:$s_c$}, label={right:10}] (C) at ($(A) + (1, 1)$) {};
				\node[puits, label={above:$p_c$}, label={right:10}] (B) at ($(C) + (0, 1)$) {};

				\draw ($(W1)!0.5!(WN) + (0.3, 0)$) node {$\vdots$};
				\draw ($(V1)!0.5!(VN) + (0.3, 0.2)$) node {$\vdots$};
				
				\draw[<-, >=latex] (P1) -- (T1);
				\draw[<-, >=latex] (PN) -- (TN);
				
				\draw[<-, >=latex] (Q1) -- (R1);
				\draw[<-, >=latex] (QN) -- (RN);
				
				\draw[-, >=latex] (T1) -- (W1);
				\draw[-, >=latex] (TN) -- (WN);
				
				\draw[-, >=latex] (R1) -- (V1);
				\draw[-, >=latex] (RN) -- (VN);
				
				\draw[<-, >=latex] (W1) -- (W);
				\draw[<-, >=latex] (WN) -- (W);
				\draw[<-, >=latex] (V1) -- (W);
				\draw[<-, >=latex] (VN) -- (W);
				\draw[<-, >=latex] (P) -- (W);
				\draw[-, >=latex] (S) -- (W);
				\draw[-, >=latex] (S1) -- (W);
				\draw[-, >=latex] (S2) -- (W);
				\draw[-, >=latex] (SN) -- (W);
				
				\draw[->, >=latex, dotted] (SN) -- (A);
				\draw[<-, >=latex, dotted] (A) -- (C);
				\draw[<-, >=latex] (B) -- (C);
				
			\end{tikzpicture}
			\caption{Instance $\mathcal{J}$ of the reduction. Some edges are directed on the drawing, other orientations for these edges imply that the capacity or the demand constraints are not satisfied. The powers and productions are written next to each sink and source. Each switch can be considered to have an infinite capacity. Note that the switch $w_c$ is not activated as no arc goes out of this switch.}	
			\label{fig:reduc:2}
	\end{figure}

	This tree contains $N = 2m \cdot (n + 1) + 7n + 5$ nodes (recall that $p$ and each sink $p_i$ are gadgets creating $(2m + 1) \cdot n + 2m$ additional nodes), every power and production is between 2 and 10 and every capacity is at most $\sum_{p \in P} \power{p} \leq 10 \cdot N$. This construction of $\mathcal{J}$ is then done in polynomial time.

	\paragraph{Properties of a feasible orientation of $\mathcal{J}$}	 
	
	Let $\mathcal{O}$ be any feasible orientation of $\mathcal{J}$. Due to the demand constraint, $\mathcal{O}([s_c, p_c])  = (s_c, p_c)$. Then the switch $w_c$ cannot be activated (the two incident edges of $w_c$ are directed toward it), otherwise either $s_c$ powers some sink through $w_c$ which exceeds its production, or $s_n$ powers at least a fraction $\frac{1}{4}$ of $\power{p_c}$, that is 2.5, which exceeds its production too. Consequently, $s_c$ powers $p_c$ alone and $\load{\mathcal{O}}{c} = 1$. 
	
	Note that if $\mathcal{O}$ directs one of the edges $[s, w]$ or $[s_i, w]$, for some $i \in \llbracket 1; n \rrbracket$, from $w$ then the flow outgoing from the source is 0. In that case, the load reserve $\loadreserve{\mathcal{O}}$ is 1 and the minimum load $\minload{\mathcal{O}}$ is 0. We then consider only feasible orientations where this case does not occur. 
	
	Again, due to the demand constraint, $\mathcal{O}([w, p])  = (w, p)$ and, for each $i \in \llbracket 1; n \rrbracket$, we have $\mathcal{O}([p_i, t_i])  = (t_i, p_i)$ and $\mathcal{O}([q_i, r_i])  = (r_i, q_i)$. 
	
	We now assume that $\mathcal{O}([w, w_i])  = (w_i, w)$ for some $i \in \llbracket 1; n \rrbracket$. The flow $\flow{\mathcal{O}}{w_i, w}$ is at least $\flow{\mathcal{O}}{w, p} / (3n + 1) \geq 2 / (3n + 1)$. We can now examine the orientation of $[t_i, w_i]$. If $\mathcal{O}([t_i, w_i]) = (w_i, t_i)$, the demand constraint is not satisfied for $w_i$ which has no entering arc. On the other hand, if $\mathcal{O}([t_i, w_i]) = (t_i, w_i)$, then $\flow{\mathcal{O}}{t_i} = \flow{\mathcal{O}}{t_i, p_i} + \flow{\mathcal{O}}{t_i, w_i} \geq 4 - 2^{-m} 2 x_i + 2 / (3n + 1)$. As we chose $m$ so that $2 / (3n + 1) > 2^{-m} 2 x_i$ and as the production of $t_i$ is 4, the capacity constraint of $t_i$ is not satisfied. Consequently, $\mathcal{O}([w, w_i])  = (w, w_i)$. Similarly, $\mathcal{O}([w, v_i])  = (w, v_i)$.
	
	Hence, the only degree of freedom we have is to decide the orientation of the edges $[t_i, w_i]$ and $[r_i, v_i]$.
	
	We now detail the load of each source: 
	\begin{itemize}
		\item $\load{\mathcal{O}}{s_c}= 1$
		\item For each $i \in \llbracket 1; n \rrbracket$, $\load{\mathcal{O}}{r_i} = 1$ if $\mathcal{O}([r_i, v_i]) = (r_i, v_i)$ and $0.5$ otherwise.
		\item For each $i \in \llbracket 1; n \rrbracket$, $\load{\mathcal{O}}{t_i} = 1 - 2^{-m} x_i / 2$ if $\mathcal{O}([t_i, w_i]) = (t_i, w_i)$ and $0.5 - 2^{-m} x_i / 4$ otherwise. Note that $m$ has been chosen so that $2^{-m}2x_i \leq 0.1$. Consequently, the load of $t_i$ is at least $0.4875$.
		\item By Fact~\ref{lem:gadget}, the load of each hidden source in the gadgets is at least $0.5$.
		\item Let $\alpha_v$ be the indexes $i \in \llbracket 1; n \rrbracket$ such that $\mathcal{O}([r_i, v_i]) = (v_i, r_i)$ and $\alpha_w$ be those for which $\mathcal{O}([t_i, w_i]) = (w_i, t_i)$. 
		\begin{align*}
			& \load{\mathcal{O}}{s} = \frac{1}{\production{s} \cdot (n + 1)}\left( 2 + 2^{-m}B + \sum\limits_{j \in \alpha_w} (2 - 2^{-m}x_j) + \sum\limits_{j \in \alpha_v} 2 \right) \\
			&= \frac{1}{6 \cdot (n + 1)}\left(2 \cdot (1 + \lvert \alpha_w\rvert  + \lvert \alpha_v\rvert ) + 2^{-m} (B - \sum\limits_{j \in \alpha_w}x_j)\right)
		\end{align*}
	\item For each $i \in \llbracket 1; n \rrbracket$, we have $\load{\mathcal{O}}{s_i} = 3 \cdot \load{\mathcal{O}}{s}$. 
	\end{itemize}

	 The load of $s$ cannot be greater than $\frac{1}{3}$ as, otherwise, for each $i \in \llbracket 1; n \rrbracket$, we would have $\load{\mathcal{O}}{s_i} = 3 \cdot \load{\mathcal{O}}{s} > 1$, which means that $\mathcal{O}$ is not feasible. Consequently, $\minload{\mathcal{O}} = \load{\mathcal{O}}{s}$. 
	
	\paragraph{Equivalence of the instances}

	We end this proof by demonstrating that the minimum load reserve of $\mathcal{J}$ is exactly $\frac{2}{3}$ if and only if $\mathcal{I}$ is positive. Otherwise it is strictly greater.
	
	If $\mathcal{I}$ is positive, then there exists $I \subseteq \llbracket 1; n \rrbracket$ such that $\sum_{i \in I} x_i = B$. Let $\mathcal{O}$ be the orientation where $\mathcal{O}([t_i, w_i]) = (w_i, t_i)$ and $\mathcal{O}([r_i, v_i]) = (r_i, v_i)$ if and only if $i \in I$. In that case $\alpha_w = I$, $\alpha_v = \llbracket 1; n \rrbracket \backslash I$ and $\sum_{j \in \alpha_w}x_j = B$. Thus $\load{\mathcal{O}}{s} = \frac{1}{6 \cdot (n + 1)}\left(2 \cdot (1 + n) + 0\right) = \frac{1}{3}$. The load of the sources $s_i$ is 1. Thus $\mathcal{O}$ is feasible, $\loadreserve{\mathcal{O}} = \frac{2}{3}$ and $\minload{\mathcal{O}} = \frac{1}{3}$.
	
	If $\mathcal{I}$ is negative, then whatever the feasible orientation is, we have the relation $2^{-m} (B - \sum_{i: \alpha(w_i) = 1}x_i) \neq 0$. In addition, this value is lower than $2^{-m} B < 1$ and greater than $2^{-m} (B - \sum_{i = 1}^n x_i) > -1$. \\Then $2 \cdot (1 + \lvert \alpha_w\rvert  + \lvert \alpha_v\rvert ) + 2^{-m} (B - \sum_{j \in \alpha_w}x_j)$ cannot be an integer, thus we have $\load{\mathcal{O}}{s} \neq \frac{1}{3}$. Consequently, $\load{\mathcal{O}}{s} < \frac{1}{3}$, $\loadreserve{\mathcal{O}} > \frac{2}{3}$ and $\minload{\mathcal{O}} < \frac{1}{3}$.
	
	This proves that \minr and \maxminm are strongly NP-Hard. 	
\end{proof}

\begin{remark}
	Theorem~\ref{theo:maxminm:minr:nphard} also proves that no interesting parameterized algorithm can be hoped for the decision versions of \minr and \maxminm (with respect to the natural parameters of those problems).
	
	Note also that we can arbitrarily increase the constant $\frac{2}{3}$ of Theorem~\ref{theo:maxminm:minr:nphard} (respectively decrease the value $\frac{1}{3}$) by increasing the production of $s$. We can also decrease this constant (respectively increase), but in order to keep $s$ to be the source minimizing the load, we need to use a trick used later in the proof of Theorem~\ref{theo:minr:noapprox} in order to arbitrarily increase the load of all the sources. 
\end{remark}

\section{Hardness of approximation for \minr}

This section is devoted to extending the reduction of Theorem~\ref{theo:maxminm:minr:nphard} in order to prove that there is no efficient approximation algorithm for \minr unless P = NP. Note that this proof does not apply to \maxminm. As we will see in the next section, there is indeed an approximation algorithm for this problem.

\begin{theorem}
	\label{theo:minr:noapprox}
	Unless P = NP, then, for any constant $c \in \mathbb{N}$, there is no polynomial approximation algorithm for \minr with ratio $2^{\lvert T\rvert ^c}$ for a given tree $T$.
\end{theorem}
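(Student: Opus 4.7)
The strategy is to extend the reduction of Theorem~\ref{theo:maxminm:minr:nphard} into one in which the optimal load reserve becomes \emph{exactly} $0$ in the YES case of the underlying Subset Sum instance, while staying strictly positive in the NO case. As soon as such a zero-versus-positive gap is in place, every polynomial-time $\rho$-approximation for \minr must output reserve $0$ on every YES instance (since a finite multiple of $0$ is $0$), so in particular a $2^{|T|^c}$-approximation would decide Subset Sum in polynomial time, contradicting $P \neq NP$, regardless of the constant $c$.

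Starting from the tree $\mathcal{J}$ of Theorem~\ref{theo:maxminm:minr:nphard}, where $\load{\mathcal{O}}{s} = \tfrac{1}{3}$ in the YES case and $\load{\mathcal{O}}{s} \leq \tfrac{1}{3} - \tfrac{2^{-m}}{6(n+1)}$ in the NO case, I would attach a \emph{uniformization sub-tree} to every source $x \neq s$ whose role is to clamp $\load{\mathcal{O}}{x}$ at $\tfrac{1}{3}$ in any feasible orientation. For the ``rigid'' sources $s_c$ and $s_i$, whose flow is the same in every feasible orientation, this is just a matter of tripling their production. For the ``flexible'' sources $t_i$ and $r_i$, whose outgoing flow takes one of two precise values depending on the orientation of $[t_i, w_i]$ (respectively $[r_i, v_i]$), the uniformizer must carry its own orientation degree of freedom, coupled with the outer edge, and must drain two different amounts of flow in the two modes so that the total flow through the source is exactly $\tfrac{1}{3}$ of its production in either case. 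The fine-grained rational sink powers required inside the uniformizer can be encoded with polynomial overhead via the $2m$-length gadget of Fact~\ref{lem:gadget}.

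With the uniformizers in place, the analysis proceeds along the lines of Theorem~\ref{theo:maxminm:minr:nphard}: in the YES case, picking $\alpha_w = I$ aligns every source at load $\tfrac{1}{3}$, yielding $\loadreserve{\mathcal{O}} = 0$; in the NO case, $s$ sits strictly below $\tfrac{1}{3}$ whereas every other source is still uniformized to $\tfrac{1}{3}$, so $\loadreserve{\mathcal{O}} \geq \tfrac{2^{-m}}{6(n+1)} > 0$. The main difficulty I anticipate is the design of the orientation-aware uniformizer for $t_i$ and $r_i$: one must (i) arrange the capacity and demand constraints so that exactly one internal orientation is feasible for each of the two outer modes, and (ii) tune the sink powers so that the uniformizer withdraws $\production{t_i}/3 - (2 - 2^{-m}x_i)$ in one mode and $\production{t_i}/3 - (4 - 2^{-m}2x_i)$ in the other. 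Combining the encoding path of Fact~\ref{lem:gadget} with small forcing sub-structures (dummy source/sink pairs whose alternative orientations violate either a demand or a capacity constraint) should make this engineering task feasible within a polynomial-size tree, and care will be needed to also re-balance the hidden sources that live inside the encoding paths themselves.
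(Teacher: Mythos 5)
Your overall strategy --- engineering a zero--versus--positive gap so that \emph{any} finite-ratio approximation for \minr would decide Subset Sum --- is sound in principle, and is in fact a stronger statement than the theorem requires. But the proposal leaves the central construction, the ``orientation-aware uniformizer'', as an acknowledged engineering task, and there are concrete reasons to think it does not go through as sketched. First, there is no mechanism in this model to couple the internal orientation of a uniformizer attached to $t_i$ (or $r_i$) with the orientation of the distant edge $[t_i, w_i]$ other than through capacity/demand constraints at $t_i$ itself; to let the uniformizer drain two different amounts in the two modes you must raise $\production{t_i}$ well above its current tight value $4$, and it is precisely that tightness which forces $\mathcal{O}([w, w_i]) = (w, w_i)$ and hence the whole rigid skeleton of the reduction of Theorem~\ref{theo:maxminm:minr:nphard}. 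With a slack production, $t_i$ can absorb the flow $2/(3n+1)$ coming from $w$, the forced orientations collapse, and the formula for $\load{\mathcal{O}}{s}$ on which the equivalence rests is no longer valid. Moreover, with an uncoupled uniformizer each mode of $[t_i,w_i]$ admits \emph{both} uniformizer sub-orientations, so you only get ``load $\tfrac13$ is \emph{reachable}'', not ``load is \emph{clamped} at $\tfrac13$''; your NO-case argument then needs some source rigidly pinned at $\tfrac13$, which brings in the second problem: the hidden sources $s_j$ inside the encoding paths have loads $\flow{\mathcal{O}}{s_j,p_j}/(2\production{s_j})$ depending on the encoded bits, and Fact~\ref{lem:gadget}'s forcing of every edge in the path relies on $\production{s_j}$ being exactly $2$ or $3$; changing those productions to pin the loads at $\tfrac13$ destroys the induction that makes the gadget behave as a sink of power $2 + x\cdot 2^{-m}$ in the first place.

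The paper avoids all of this by never aiming for optimum $0$. It keeps the reduction of Theorem~\ref{theo:maxminm:minr:nphard} intact, attaches a large sink of power $H$ to every source other than $s$ (raising each such production by $H$, which preserves feasibility and pushes all those loads above $1-\xi$ while $s_c$ stays at load $1$), and replaces $\production{s}$ by $L = 2 + \tfrac{2^{-m}}{r(n+1)}$ with $r = 2^{\lvert T\rvert^c}$. The optimal reserve is then $1 - 2/L > 0$ in the YES case and at least $(1+r)(1 - 2/L)$ in the NO case, a purely multiplicative gap calibrated to the target ratio. This only requires the loads of the non-$s$ sources to dominate, not to coincide, which is why the gentle modification suffices. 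To salvage your route you would need to actually exhibit the two-mode uniformizer, re-verify every forced orientation under the enlarged productions, and redo the gadget analysis with bit-dependent productions; as written, the proof has a genuine gap at exactly the step you flag as ``the main difficulty''.
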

\begin{proof}
	We use a construction that is similar to the one given in the proof of Theorem~\ref{theo:maxminm:minr:nphard} to build a GAP reduction from the Subset Sum problem. Let $\mathcal{J}$ be the instance obtained in that proof, described in Figure~\ref{fig:reduc:2}. We then use the same notations in this proof (particularly $n$, the number of variables $x_i$ in the Subset Sum instance). 
	
	Let $c \in \mathbb{N}$ and $r = 2^{\lvert T\rvert ^c}$.	We set three values $H \in \mathbb{N}$, $\xi, L \in \mathbb{Q}$ such that:
	\begin{align*}
		L = 2 + \frac{2^{-m}}{r \cdot (n + 1)} &;  & \frac{2}{L} < 1 - \xi &; & \frac{H}{H + 10} \geq 1 - \xi
	\end{align*}
	
	Those values are polynomial with respect to the size of $\mathcal{I}$ (recall that $\log(r)$ depends on $\lvert T\rvert $ which is polynomial in $\lvert \mathcal{I}\rvert $). 
	
	From $\mathcal{J}$, we build an instance $\mathcal{K}$ by increasing the load of all the sources except $s$. For each such source $s'$, we attach to $s'$ a sink with power $H$, and we increase $\production{s'}$ by $H$. Thus the load of $s'$ is at least $H / (H + 10) \geq 1 - \xi$. Note that this operation does not change the feasibility of any orientation. It also does not change the load of the source $s_c$ which is 1. For the source $s$, we replace $\production{s}$ by $L$. Note that this production is not an integer, but if we write $L = p / q$, we can simply multiply each power, capacity and production by $q$ to get an equivalent instance with integer productions, powers and capacities.
	
	Given a feasible orientation $\mathcal{O}$, the load of $s$ is at most $\frac{2}{L} < 1 - \xi$ thus $\min(\mathcal{O}) = \load{\mathcal{O}}{s}$. 
	
	Let $\alpha_v$ be the indexes $i \in \llbracket 1; n \rrbracket$ such that $\mathcal{O}([r_i, v_i]) = (v_i, r_i)$ and $\alpha_w$ be those for which $\mathcal{O}([t_i, w_i]) = (w_i, t_i)$. 
	\begin{align*}
		\load{\mathcal{O}}{s}= \frac{1}{L \cdot (n + 1)}\left(2 \cdot (1 + \lvert \alpha_w\rvert  + \lvert \alpha_v\rvert ) + 2^{-m} (B - \sum\limits_{j \in \alpha_w}x_j)\right)
	\end{align*}
	
	If $\mathcal{I}$ is positive, then there exists $I \subseteq \llbracket 1; n \rrbracket$ such that $\sum_{i \in I} x_i = B$. Let $\mathcal{O}$ be the orientation where $\mathcal{O}([t_i, w_i]) = (w_i, t_i)$ and $\mathcal{O}([r_i, v_i]) = (r_i, v_i)$ if and only if $i \in I$. In that case $\alpha_w = I$, $\alpha_v = \llbracket 1; n \rrbracket \backslash I$ and $\sum_{j \in \alpha_w}x_j = B$. Thus the load of $s$ is $\load{\mathcal{O}}{s} = \frac{1}{L \cdot (n + 1)}\left(2 \cdot (1 + n) + 0\right) = 2 / L$. The load of the sources $s_i$ is 1. Thus $\mathcal{O}$ is feasible, and $\loadreserve{\mathcal{O}} = 1 - 2 / L$.
	
	If $\mathcal{I}$ is negative, then whatever the feasible orientation is, we have the relation $2^{-m} (B - \sum_{i: \alpha(w_i) = 1}x_i) \neq 0$. The difference $B - \sum_{i: \alpha(w_i) = 1}x_i$ is necessarily negative (otherwise the capacity of $s$ is exceeded) and is an integer. Thus
	
	\begin{align*}
		2 \cdot (1 + \lvert \alpha_w\rvert  + \lvert \alpha_v\rvert ) + 2^{-m} (B - \sum\limits_{j \in \alpha_w}x_j) &\leq 2 \cdot (n + 1) - 2^{-m}\\
		\load{\mathcal{O}}{s} &\leq \frac{2}{L} - \frac{2^{-m}}{L \cdot (n + 1)}\\
		\loadreserve{\mathcal{O}} &\geq 1 - \frac{2}{L} + \frac{2^{-m}}{L \cdot (n + 1)}
	\end{align*}

	$$\text{However }
	\frac{1 - \frac{2}{L} + \frac{2^{-m}}{L \cdot (n + 1)}}{1 - \frac{2}{L}} = \frac{L - 2 + \frac{2^{-m}}{n + 1}}{L - 2} = \frac{\frac{2^{-m}}{r \cdot (n + 1)} + \frac{2^{-m}}{n + 1}}{\frac{2^{-m}}{r \cdot (n + 1)}} = 1 + r$$
	
	Consequently, any polynomial approximation for \minr with ratio strictly less than $1 + r$ could decide in polynomial time whether $\mathcal{I}$ is positive or not. Unless P = NP, there is no $r$-polynomial approximation for \minr. 
\end{proof}

\section{Approximability of \maxminm and \minr}

In this section, we prove that there exists an FPTAS\footnote{A Fully Polynomial Time Approximation Scheme (or FPTAS) for a minimization problem (respectively maximization problem) is an algorithm that, given a value $\varepsilon > 0$, returns a feasible solution for which the objective value is equal to the optimal value multiplied by at most $1 + \varepsilon$ (respectively $1 - \varepsilon$). This algorithm is polynomial in the size of the instance and in $1/\varepsilon$.} for \maxminm and an FPTAS with an absolute ratio\footnote{An FPTAS with absolute ratio for a minimization problem (respectively maximization problem) is an algorithm that, given a value $\varepsilon > 0$, returns a feasible solution for which the objective value is equal to the optimal value shifted by at most $\varepsilon$ (respectively $-\varepsilon$). This algorithm is polynomial in the size of the instance and in $1/\varepsilon$.} for \maxminm and \minr. Note that this last result does not contradict the inapproximability result of \minr as we do not use the same metric: there is no FPTAS (with a relative ratio) for \minr, but there exists an FPTAS with an absolute ratio. 

The key idea to overcome the hardness of the reduction of Theorem~\ref{theo:maxminm:minr:nphard} consists in noticing that, for many feasible solutions $\mathcal{O}$, the value $\minload{\mathcal{O}}$ is close to $1/3$. In particular, it seems that the flow obtained in the arc $(s, w)$ does not depend so much on the chosen orientation as this flow vary from $2 - \varepsilon$ to $2$ (where $\varepsilon$ is a very small value). The problem is NP-hard because it is hard to achieve the desired value $2$, but it is not hard to get a value really close to 2.  For example, one can set $\mathcal{O}([w_i, t_i]) = (w_i, t_i)$ and $\mathcal{O}([v_i, r_i]) = (r_i, v_i)$ for all $i$, to get $\load{\mathcal{O}}{s} = 1/3 + 2^{-m} \cdot (B - \sum\limits_{j = 1}^n x_j) / (6 \cdot (n + 1)) > 1/3 - 1 / (6 \cdot (n + 1))$.

Similarly, Theorem~\ref{theo:minr:noapprox} shows that it is hard to find a feasible solution with a small relative error for \minr. In the reduction, the load reserves of all the feasible orientations are close to 0 as all the loads are close to 1. It is then not hard to find a feasible solution with a very small load reserve, in which case we achieve a very small absolute error. However, the relative error is very high as the optimal value is almost 0.

In this context, one way to approximate the problem is to round the flow at each step of the calculation to get rid of the small variations. This defines, for each orientation, a new metric based on the rounded flow instead of the flow itself in which we focus on the rounded load of each source (that is the output rounded flow divided by the production of the source). Instead of searching for an optimal solution, we will focus on finding a feasible orientation that minimizes the rounded load reserve or maximizes the minimum rounded load. 

Three main questions arise from this strategy:
\begin{itemize}
	\item First, what makes the computation with the rounded flow easier? An important property of that flow is that there should be a polynomial number of possible values of a rounded flow. This allows to enumerate all these values and use them in a dynamic programming algorithm. Furthermore, this allows to enumerate all the possible values of rounded loads and thus all the possible objective values in polynomial time. To achieve such a result, it is necessary not to use a linear rounding, that is rounding to the closest fraction of some value $\varepsilon$. Indeed, the real flow belongs to an interval that has an exponential size. To overcome this difficulty, we use a rounding that depends on the value of the flow: the larger the flow is, the larger the rounding error is. To do so, we cut each interval $[2^i; 2^{i+1}]$ into $\left\lceil\frac{1}{\varepsilon}\right\rceil$ pieces and we round the flow to the closest piece. 
	\item Second, does this rounded flow provide a good approximation ratio? As we compute the rounding flow from the sinks to the sources, the rounding error propagates and increases. We show that with a polynomially small value $\varepsilon$ this rounding error can be bounded by an arbitrarily small constant.
	\item Third, our technique is, in some way, adapted from the FPTAS algorithm for the Knapsack problem \cite{Ibarra1975}. In this algorithm, the coefficients of the objective functions are rounded but not those of the constraints, which trivially makes the set of feasible solutions unchanged during the rounding process. In our problem, the objective function and the capacity constraints are based on the flow. Since the rounded flow is lower than the real flow, there may be some orientations that are feasible with respect to the rounded flow and not with the real flow. This is the tricky part of the algorithm. We want the rounded flow to be used only to define a new metric that is easier to optimize while not changing the set of feasible solutions. Thus, we must somehow keep the real flow in the computation to ensure that the capacity constraints are satisfied by any orientation that can be given by algorithm. To do this, we will use a technique that extends the one used in \cite{Barth2022} to search for the existence of a feasible solution. 
	
	In this algorithm, for each possible orientation $(u, v)$ of an edge, the algorithm computes two values: the maximum flow that can be produced by the sources and that can be sent to $v$ through $u$, and the minimum flow that can be demanded by the sinks and that should come from $u$ through $v$. If the former is greater than the latter, there exists a feasible orientation. This is done with a dynamic programming algorithm using the tree structure of the graph. In the optimization variant of the problem, we should search for the same two values (computed with the real flow also with an extension of the dynamic programming algorithm of \cite{Barth2022}) with the additional constraint that the rounded loads of the sources lie in an interval $[\round{m}; \round{M}]$. If such values exist, we know there exists a feasible solution where the rounded load reserve is at most $\round{M} - \round{m}$. By enumerating all the possible couples of rounded loads, we get the desired feasible solution.
\end{itemize}

Note that this rounding method has a real physical meaning, since it is not possible to manipulate an electrical flow with an infinite precision. Thus a flow that is close to $2$ will be measured as 2 and thus rounded. Consequently, the approximate solution can be seen as the best solution that could be obtained with regard to the physical constraints of the measurement.

This section is divided into three parts. The first part defines the rounded flow and proves some useful properties about it. The second part proves the approximation ratio of the algorithms. The last part is dedicated to the description of a polynomial time algorithm that optimizes the rounded loads of the sources.

\begin{definition}
	Throughout this section, we denote with $n$ the size of $T$; with $\Pi$ the sum $\sum_{p \in P} \power{p}$; and with $\mathcal{N}$ an arbitrary numbering of the nodes of $T$. 
\end{definition}	
	
	\subsection{Rounding the flow and computing the rounding error}
	
	The goal of rounding the flow is to manipulate a polynomial number of possible values in each arc. 
	
	Without rounding, the flow is a rational between $1 / n^n$ (as the flow can be divided by at most $n$ at most $n$ times) and $\Pi$ (which roughly leads to an exponential number $O(n^n\Pi)$ of possible values for the flow). A naive rounding method would consist in fixing a value $\varepsilon > 0$ and rounding to the closest multiple of $\varepsilon / n$. We can show that, as the rounding error propagates, it never exceeds $\varepsilon$ which leads to an $\varepsilon$-approximation ratio of the loads. However, this decreases the number of values to $\varepsilon \Pi$ which is only pseudo-polynomial. 
	
	To overcome this obstacle, one should note that there is no need to round high flows to close values. Remember that we are interested in optimizing the loads of the sources, which means that the rounding error of a high rounded flow will be divided by the production of sources feeding that called power. And a high power should be necessarily powered by a source with a high production due to the capacity constraint. Consequently, the greater the flow value, the greater the rounding error can be. 
	
	We now fix two values $\varepsilon' \in ]0; 1/2[$ and $\varepsilon = \varepsilon' / (n^2+1)^2$ and define hereinafter the rounded flow, the rounded minimum load and the rounded load reserve. 
	
	\begin{definition}[Rounding a rational]
		\label{def:arrondi}
		Let $f > 0$ be a rational. If $f \in [2^i; 2^{i+1}[$ then we set the error $e(f) = 2^i \varepsilon$. We define $a(f)$ as the closest multiple of $e(f)$ lower than $f$, that is $a(f) = \left\lfloor \frac{f}{e(f)} \right\rfloor \cdot e(f)$.
	\end{definition}
	
	\begin{lemma}
		\label{lem:approx:1}
		$a(f) \geq f \cdot (1 - \varepsilon)$.
	\end{lemma}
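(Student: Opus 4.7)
The plan is to simply unwind the definition of $a(f)$ and exploit the fact that, by the way the error $e(f)$ is chosen, the additive rounding error is at most $2^i \varepsilon$ while the quantity being rounded is at least $2^i$. This will make the multiplicative bound $(1-\varepsilon)$ immediate.

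Concretely, I would fix $i$ such that $f \in [2^i; 2^{i+1}[$, so that by definition $e(f) = 2^i \varepsilon$. By the elementary property of the floor function, $\lfloor x \rfloor \geq x - 1$ for every real $x$, which applied to $x = f / e(f)$ gives
\[
a(f) \;=\; \left\lfloor \frac{f}{e(f)} \right\rfloor \cdot e(f) \;\geq\; \left(\frac{f}{e(f)} - 1\right) \cdot e(f) \;=\; f - e(f) \;=\; f - 2^i \varepsilon.
\]

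To conclude, I would use $f \geq 2^i$ (the left endpoint of the interval containing $f$) to replace $2^i \varepsilon$ by the larger quantity $f \varepsilon$, which yields $a(f) \geq f - f\varepsilon = f(1-\varepsilon)$, as required. There is no real obstacle here: the statement is essentially a sanity check that the interval-dependent rounding from Definition~\ref{def:arrondi} behaves like a relative rounding with error $\varepsilon$, and both inequalities used (the floor bound and $f \geq 2^i$) are immediate from the definitions.
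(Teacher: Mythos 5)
Your proof is correct and is essentially identical to the paper's: both bound $a(f) \geq f - e(f) = f - 2^i\varepsilon$ via the floor inequality and then use $f \geq 2^i$ to obtain $f - \varepsilon f = f(1-\varepsilon)$. Nothing to add.
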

	\begin{proof}
		Let $i$ such that $f \in [2^i; 2^{i+1}[$, then $a(f) = \left\lfloor \frac{f}{e(f)} \right\rfloor \cdot e(f) \geq f - e(f) \geq f - \varepsilon 2^i \geq f - \varepsilon \cdot f$.
	\end{proof}
	
	We now define the operator $\oplus$ that is the rounding version of $\sum_{i = 1}^p f_i / d$ and will be used in the formula of the rounded flow. 
	
	\begin{definition}
		\label{def:arrondi:2}
		Let $F = (f_1, f_2, \dots, f_p)$ be rationals and $d$ be an positive integer. Then 
		
		$$\oplus(F, d) =
		\begin{cases}
			0 &\text{if } p = 0\\
			a(\oplus((f_1, f_2, \dots, f_{p - 1}), d) +  \frac{f_p}{d}) &\text{otherwise}
		\end{cases}$$
	\end{definition}

	\begin{remark}
		\label{rem:arrondi}
		One may ask why we use this recursive formula leading to $a(a( \cdots a(a(\frac{f_1}{d}) + \frac{f_2}{d}) + \cdots ) + \frac{f_p}{d})$ that accumulates many rounding errors where $a(\sum_{i = 1}^p f_i / d)$ seems to do the same job with less errors. This has to do with the complexity of the approximation algorithm we will describe later. It is polynomial if we use the recursive formula and exponential if we use the latter formula. 
	\end{remark}

	\begin{lemma}
		\label{lem:approx:2}
		$\oplus(F, d) \geq \frac{1}{d} \sum\limits_{i = 1}^p f_i \cdot (1 - \varepsilon)^p$.
	\end{lemma}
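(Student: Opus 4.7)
The plan is to prove the inequality by induction on $p$, the number of rationals being combined. The base case $p = 0$ is immediate since both sides evaluate to $0$ by the definition of $\oplus$ (and the empty sum convention).

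For the inductive step, I would assume the result holds for $p-1$, i.e., $\oplus((f_1, \ldots, f_{p-1}), d) \geq \frac{1}{d} \sum_{i=1}^{p-1} f_i \cdot (1 - \varepsilon)^{p-1}$. Then by the recursive definition of $\oplus$, I have $\oplus(F, d) = a\bigl(\oplus((f_1, \ldots, f_{p-1}), d) + f_p / d\bigr)$. Applying Lemma~\ref{lem:approx:1} to the outer rounding $a(\cdot)$ gives
\[
\oplus(F, d) \geq \bigl(\oplus((f_1, \ldots, f_{p-1}), d) + f_p / d\bigr) \cdot (1 - \varepsilon).
\]
Substituting the induction hypothesis for the first term and bounding $f_p/d \geq (f_p/d)(1-\varepsilon)^{p-1}$ (valid since $\varepsilon \in (0,1/2)$ implies $(1-\varepsilon)^{p-1} \leq 1$), the right-hand side becomes at least
\[
\left(\frac{1}{d}\sum_{i=1}^{p-1} f_i (1-\varepsilon)^{p-1} + \frac{f_p}{d}(1-\varepsilon)^{p-1}\right)(1-\varepsilon) = \frac{1}{d}\sum_{i=1}^{p} f_i \cdot (1-\varepsilon)^{p},
\]
which is the desired inequality.

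The argument is essentially routine once one observes that each recursive application of $a$ in the definition of $\oplus$ contributes one factor of $(1-\varepsilon)$, so after $p$ compositions the accumulated multiplicative loss is exactly $(1-\varepsilon)^p$. No real obstacle arises here; the only subtlety is to note the monotonicity step that replaces $f_p/d$ with $(f_p/d)(1-\varepsilon)^{p-1}$ before pulling out the common factor $(1-\varepsilon)$, which is what justifies the clean exponent $p$ in the final bound. This lemma is the building block that will later be iterated over the tree to control how the cumulative rounding error of the flow depends on the depth of the recursion.
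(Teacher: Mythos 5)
Your proof is correct and follows essentially the same route as the paper: induction on $p$, one application of Lemma~\ref{lem:approx:1} to the outer rounding, then the observation that the last term only needs a single factor of $(1-\varepsilon)$ (your bound $f_p/d \geq (f_p/d)(1-\varepsilon)^{p-1}$ is exactly the paper's remark that $(1-\varepsilon)^p \leq 1-\varepsilon$). No issues.
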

	\begin{proof}
		The proof is done by induction on $p$. If $p = 0$, the two values equal 0.
		
		If we assume that $\oplus((f_1, f_2, \dots, f_{p - 1}), d) \geq \frac{1}{d}\sum\limits_{i = 1}^{p - 1} f_i \cdot (1 - \varepsilon)^{p - 1}$. 
		
		Then \begin{align*}
			\oplus(F, d) &= a(\oplus((f_1, f_2, \dots, f_{p - 1}), d) + \frac{f_p}{d})\\
			\intertext{By Lemma~\ref{lem:approx:1},}
			\oplus(F, d) &\geq \left(\oplus((f_1, f_2, \dots, f_{p - 1}), d) + \frac{f_p}{d}\right) \cdot (1 - \varepsilon)\\
			\oplus(F, d) &\geq (\frac{1}{d}\sum\limits_{i = 1}^{p - 1} f_i \cdot (1 - \varepsilon)^{p - 1} + \frac{f_p}{d}) \cdot (1 - \varepsilon)\\
			\intertext{As $(1 - \varepsilon)^p \leq 1 - \varepsilon$}
			\oplus(F, d) &\geq (\frac{1}{d}\sum\limits_{i = 1}^{p - 1} f_i + \frac{f_p}{d})  \cdot (1 - \varepsilon)^p
		\end{align*}
	\end{proof}
	
	We can now define the rounded flow. 
	
	\begin{definition}[Rounded flow]
		\label{def:roundedflow}
		Let $\mathcal{O}$ be a feasible orientation. 
		
		Let $d^-(v)$ be the in-degree of $v$ and $(v_1, v_2, \dots, v_p)$ be the set of successors of $v$, ordered using the numbering $\mathcal{N}$. The rounded flow $\roundedflow{\mathcal{O}}{u, v}$ going through each arc $(u, v)$
		 entering $v$ is  
		\begin{itemize}
			\item if $v \in W$, $\roundedflow{\mathcal{O}}{u, v} =  \oplus((\roundedflow{\mathcal{O}}{v, v_i})_{i \in \llbracket 1; p \rrbracket}, d^-(v))$ if $d^-(v) \neq 0$ and $+\infty$ otherwise
			\item if $v \in S$, $\roundedflow{\mathcal{O}}{u, v} =  \oplus((\roundedflow{\mathcal{O}}{v, v_i})_{i \in \llbracket 1; p \rrbracket}, d^-(v) + 1)$
			\item if $v \in P$, $\roundedflow{\mathcal{O}}{u, v} =  \oplus((\power{v}, \roundedflow{\mathcal{O}}{v, v_i})_{i \in \llbracket 1; p \rrbracket}, d^-(v))$ if $d^-(v) \neq 0$ and $+\infty$ otherwise 
		\end{itemize}
	
		We also denote by $\roundedflow{\mathcal{O}}{v}$ this value as $u$ does not intervene in the formula.
		
	\end{definition}
	
	\begin{example}
		We reuse the network given in Figure~\ref{fig:example}. We show in Figure~\ref{fig:example:roundedflow} the rounded flow of the orientations given in Figures~\ref{fig:example:b} and \ref{fig:example:d}. We use the parameter $\varepsilon' = 0.1$. As $n = 8$, we have $\varepsilon = \varepsilon' / (n^2 + 1)^2 \simeq 2.36 \cdot 10^{-5}$. 
		
		We assume the numbering $\mathcal{N}$ of the nodes of $T$ gives the following order: $(s_1, s_2, w_1, w_2, w_3, p_1, p_2, p_3)$.
		
		On the two figures, the rounded flow $\roundedflow{\mathcal{O}}{s_1, p_3}$ is $\oplus((\power{(p_3)}), 1) = a(\power{(p_3)}) = a(10)$. The value $10$ is rounded to the highest multiple of $2^3 \cdot \varepsilon \simeq 1.89 \cdot 10^{-4}$ lower than $10$, which is $2^3 \cdot \varepsilon \cdot 52812 \simeq 9.99991$.
		We can also compute $\roundedflow{\mathcal{O}}{w_3, p_2} = a(20)$, the highest multiple of $2^4 \cdot \varepsilon$ which is around $19.99981$ and $\roundedflow{\mathcal{O}}{w_1, p_1} = a(50)$, the highest multiple of $2^5 \cdot \varepsilon$ which is around $49.99952$.
	
		On Figure~\ref{fig:example:roundedflow:a}, $\roundedflow{\mathcal{O}}{s_1, w_1} = a(\roundedflow{\mathcal{O}}{w_1, p_1}) = a(a(50))$ is the highest multiple of $2^5 \cdot \varepsilon$ lower than $a(50)$ which is exactly $a(50) \simeq 49.99952$. The same occurs for $\roundedflow{\mathcal{O}}{s_2, p_3} = 19.99981$. The rounded flow in $s_1$ is  
		\begin{align*}
		\roundedflow{\mathcal{O}}{s_1} &= \oplus((\roundedflow{\mathcal{O}}{s_1, w_1}, \roundedflow{\mathcal{O}}{s_1, p_3}), 1)\\
		&= a(a(\roundedflow{\mathcal{O}}{s_1, w_1}) + \roundedflow{\mathcal{O}}{s_1, p_3})\\
		&= a(a(a(a(50))) + a(10))\\
		&\simeq 59.99943	
		\end{align*}
		Similarly, $\roundedflow{\mathcal{O}}{s_2} = 19.99981$.
		
		On Figure~\ref{fig:example:roundedflow:a}, the same reasoning leads to $\roundedflow{\mathcal{O}}{s_2} \simeq 9. 99991$ and 
		\begin{align*}
			\roundedflow{\mathcal{O}}{s_1} &= \oplus((\roundedflow{\mathcal{O}}{s_1, w_1}, \roundedflow{\mathcal{O}}{s_1, p_3}), 1)\\
			&= a(a(\roundedflow{\mathcal{O}}{s_1, w_1}) + \roundedflow{\mathcal{O}}{s_1, p_3})\\
			&\simeq a(69.99934)\\
			&\simeq 69.99858	
		\end{align*}

		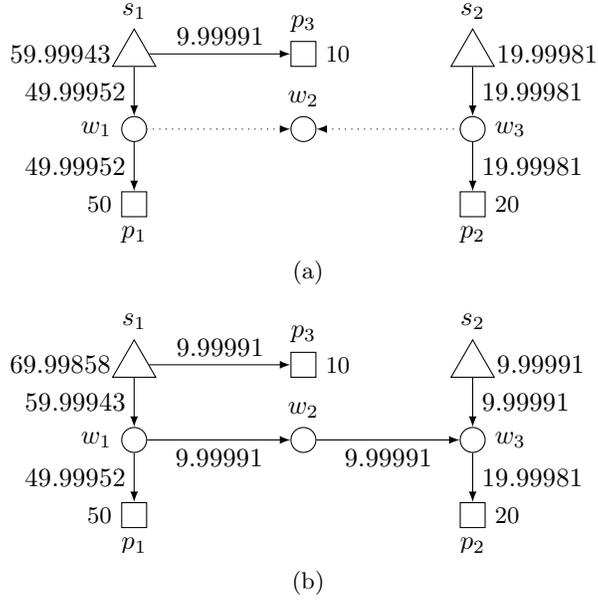
\begin{figure}[!ht]
			\captionsetup[subfigure]{justification=centering}
			\center
	
			\begin{subfigure}[t]{\textwidth}
				\begin{tikzpicture}[xscale=1.5]
					\clip (-5.5,-1.5) rectangle (3.75, 2);
					
					\node[source] [label=above:$s_1$, label=left:$59.99943$] (S1) at (-3,1) {};
					\node[source] [label=above:$s_2$, label=right:$19.99981$] (S2) at (0,1) {};
					
					\node[switch] [label=left:$w_1$] (W1) at (-3,0) {};
					\node[switch] [label=above:$w_2$] (W12) at (-1.5,0) {};
					\node[switch] [label=right:$w_3$] (W2) at (0,0) {};
					
					\node[puits] [label=below:$p_1$, label=left:{\small$50$}] (P1) at (-3,-1) {};
					\node[puits] [label=below:$p_{2}$, label=right:{\small$20$}] (P2) at (0,-1) {};
					\node[puits] [label=above:$p_{3}$, label=right:{\small$10$}] (P3) at (-1.5,1) {};
					
					\draw[->, >=latex] (S1) -- (W1) node[midway, left] {49.99952};
					\draw[->, >=latex] (S1) -- (P3) node[midway, above] {9.99991};
					\draw[->, >=latex] (S2) -- (W2) node[midway, right] {19.99981};
					
					\draw[->, >=latex] (W1) -- (P1)  node[midway, left] {49.99952};
					\draw[<-, >=latex, dotted] (W12) -- (W1);
					\draw[<-, >=latex, dotted] (W12) -- (W2);
					\draw[->, >=latex] (W2) -- (P2)  node[midway, right] {19.99981};
				\end{tikzpicture}
				\caption{}
				\label{fig:example:roundedflow:a}
			\end{subfigure}
			\begin{subfigure}[t]{\textwidth}
				\begin{tikzpicture}[xscale=1.5]
					\clip (-5.5,-1.5) rectangle (3.75, 2);
					
					\node[source] [label=above:$s_1$, label=left:$69.99858$] (S1) at (-3,1) {};
					\node[source] [label=above:$s_2$, label=right:$9.99991$] (S2) at (0,1) {};
					
					\node[switch] [label=left:$w_1$] (W1) at (-3,0) {};
					\node[switch] [label=above:$w_2$] (W12) at (-1.5,0) {};
					\node[switch] [label=right:$w_3$] (W2) at (0,0) {};
					
					\node[puits] [label=below:$p_1$, label=left:{\small$50$}] (P1) at (-3,-1) {};
					\node[puits] [label=below:$p_{2}$, label=right:{\small$20$}] (P2) at (0,-1) {};
					\node[puits] [label=above:$p_{3}$, label=right:{\small$10$}] (P3) at (-1.5,1) {};
					
					\draw[->, >=latex] (S1) -- (W1) node[midway, left] {59.99943};
					\draw[->, >=latex] (S1) -- (P3) node[midway, above] {9.99991};
					\draw[->, >=latex] (S2) -- (W2) node[midway, right] {9.99991};
					
					\draw[->, >=latex] (W1) -- (P1)  node[midway, left] {49.99952};
					\draw[<-, >=latex] (W12) -- (W1) node[midway, below] {9.99991};
					\draw[->, >=latex] (W12) -- (W2) node[midway, below] {9.99991};
					\draw[->, >=latex] (W2) -- (P2)  node[midway, right] {19.99981};
				\end{tikzpicture}
				\caption{}
				\label{fig:example:roundedflow:b}
			\end{subfigure}
			\caption{Example of rounded flow. Next to each arc and source is written the rounded flow going through it.}
			\label{fig:example:roundedflow}
			\end{figure}
	\end{example}

	Recall that we set a constant $\varepsilon'$ and fixed $\varepsilon = \varepsilon' / (n^2+1)^2$. We now prove that the rounding error that is propagated through the arcs of the tree does not lead to a mistake greater than $(1 - \varepsilon')$.

	\begin{lemma}
		\label{lem:roundingerror}
		Let $\mathcal{O}$ be a feasible orientation and $(u, v)$ be an arc of the directed tree, then $\roundedflow{\mathcal{O}}{u, v} \in [\flow{\mathcal{O}}{u, v} \cdot (1 - \varepsilon'); \flow{\mathcal{O}}{u, v}]$.
	\end{lemma}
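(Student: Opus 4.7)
The plan is to prove both inequalities by a single bottom-up induction on the tree, travelling from the leaves of the subtree rooted at $v$ (seen from the orientation $\mathcal{O}$) up to $v$. The upper bound $\roundedflow{\mathcal{O}}{u,v} \leq \flow{\mathcal{O}}{u,v}$ is almost immediate: by Definition~\ref{def:arrondi}, $a(x) \leq x$ for every $x > 0$, and a short induction on the recursive definition of $\oplus$ gives $\oplus(F,d) \leq \tfrac{1}{d}\sum_i f_i$. Plugging this into Definition~\ref{def:roundedflow} and combining with the inductive hypothesis $\roundedflow{\mathcal{O}}{v_i} \leq \flow{\mathcal{O}}{v_i}$ reproduces exactly the formula of Definition~\ref{def:valid:3} with an inequality.

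For the lower bound I would prove the sharper auxiliary statement
\[
\roundedflow{\mathcal{O}}{v} \;\geq\; \flow{\mathcal{O}}{v}\,(1-\varepsilon)^{k(v)},
\]
where $k(v)$ tracks the rounding operations along the "worst" path: define $k(v) = p_v + [v\in P] + \max_i k(v_i)$, with $p_v$ the out-degree of $v$ and $k(\cdot) = 0$ on non-existent children (so a leaf sink gives $k = 1$). The inductive step uses Lemma~\ref{lem:approx:2} with $p = p_v$ (or $p_v+1$ in the sink case): this gives a factor $(1-\varepsilon)^p$ coming from the $\oplus$ at $v$ itself, while the factors $(1-\varepsilon)^{k(v_i)}$ supplied by the induction on the children are simultaneously lower-bounded by $(1-\varepsilon)^{\max_j k(v_j)}$, so that all terms in the sum share the same factor and can be pulled out. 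Together they yield $(1-\varepsilon)^{p_v + \max_i k(v_i)}$, matching the recursion defining $k(v)$.

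It remains to show the exponent is tame enough to absorb into a single $(1-\varepsilon')$ factor. Along any root-to-leaf path inside the subtree, the sum $\sum p_{v_j}$ of out-degrees is bounded by the number of edges of that subtree, hence by $n-1$; adding at most one unit per sink visited on the path gives $k(v) \leq n^2$ in the crudest bookkeeping (in fact $k(v) = O(n)$, but the weaker bound already suffices). With the prescribed $\varepsilon = \varepsilon'/(n^2+1)^2$, Bernoulli's inequality gives
\[
(1-\varepsilon)^{k(v)} \;\geq\; 1 - k(v)\,\varepsilon \;\geq\; 1 - \frac{n^2\,\varepsilon'}{(n^2+1)^2} \;\geq\; 1 - \varepsilon',
\]
so that $\roundedflow{\mathcal{O}}{v} \geq \flow{\mathcal{O}}{v}(1-\varepsilon')$, which combined with the upper bound proves the lemma.

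The only real subtlety is the bookkeeping of $k(v)$ across the three cases $v\in W,S,P$ and at leaves (in particular the $+1$ contributed by the extra $\power{v}$-term when $v$ is a sink), and the observation that one must take a \emph{maximum} over children rather than a sum to avoid blowing up the exponent; this is the step where the monotonicity of $x\mapsto (1-\varepsilon)^x$ is used crucially. Everything else is a mechanical application of Lemma~\ref{lem:approx:2}.
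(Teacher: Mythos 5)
Your proof is correct and follows essentially the same route as the paper: a bottom-up induction from the sinks combined with Lemma~\ref{lem:approx:2}, followed by a bound on the accumulated exponent and a final comparison with $1-\varepsilon'$ (you via Bernoulli, the paper via a binomial expansion). Your bookkeeping through $k(v)$ (summing out-degrees along the worst path) is in fact slightly sharper than the paper's exponent $np+1\leq n^2+1$, since it yields $k(v)=O(n)$, but as you then fall back on the crude $n^2$ bound the two arguments coincide in their conclusion.
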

	\begin{proof}
		The upper bound is trivial as we round down the flow. We prove the lower bound by induction on $p$, where $p$ is the length of a longest path from $v$ to a sink, that $\roundedflow{\mathcal{O}}{u, v} \geq \flow{\mathcal{O}}{u, v} \cdot (1 - \varepsilon)^{np + 1}$. Note that, if such a sink does not exist, then the flow and the rounded flow going out of $v$ is 0.
		
		If $p = 0$ then $v$ is a sink and there is no path from $v$ to another sink. In that case, $\roundedflow{\mathcal{O}}{u, v} = \oplus((\power{v}), d^-(v)) = a(\power{v} / d^-(v)) = a(\flow{\mathcal{O}}{u, v}) \geq \flow{\mathcal{O}}{u, v} \cdot (1 - \varepsilon)$ by Lemma~\ref{lem:approx:1}. 
		
		We now assume that $p > 0$ and that the property is true for every node at distance at most $p - 1$ from a sink. Let $v$ be a node for which the farthest sink is at distance $p$ and let $(v_1, v_2, \dots, v_q)$ be the successors of $v$, ordered using the numbering $\mathcal{N}$. The longest path from $v_i$ to a sink is $p_i \leq p - 1$. By induction, $\roundedflow{\mathcal{O}}{v, v_i} \geq \flow{\mathcal{O}}{v, v_i} \cdot (1 - \varepsilon)^{np_i + 1} \geq \flow{\mathcal{O}}{v, v_i} \cdot (1 - \varepsilon)^{n \cdot(p - 1) + 1}$. 
		
		We also assume without loss of generality that $v \in W$. The calculations can be easily adapted to the cases where $v \in P$ or $v \in S$. Let 
		\begin{align*}
			\roundedflow{\mathcal{O}}{u, v} &= \oplus((\roundedflow{\mathcal{O}}{v, v_i})_{i \in \llbracket 1; p \rrbracket}, d^-(v))
			\intertext{By Lemma~\ref{lem:approx:2}}
			&\geq \frac{1}{d^-(v)} \sum\limits_{i = 1}^q \roundedflow{\mathcal{O}}{v, v_i} \cdot (1 - \varepsilon)^{q}\\
			\intertext{As $q \leq n$}
			&\geq \frac{1}{d^-(v)} \sum\limits_{i = 1}^q \roundedflow{\mathcal{O}}{v, v_i} \cdot (1 - \varepsilon)^{n}\\
			&\geq \frac{1}{d^-(v)} \sum\limits_{i = 1}^q \flow{\mathcal{O}}{v, v_i} \cdot (1 - \varepsilon)^{n \cdot(p - 1) + 1} \cdot (1 - \varepsilon)^{n}\\
			&\geq \frac{1}{d^-(v)} \sum\limits_{i = 1}^q \flow{\mathcal{O}}{v, v_i} \cdot (1 - \varepsilon)^{np + 1}\\
			&\geq \flow{\mathcal{O}}{u, v} \cdot (1 - \varepsilon)^{np + 1}
		\end{align*}
		
		In addition, as $p \leq n$
		
		\begin{align*}
			(1 - \varepsilon)^{pn + 1} &\geq (1 - \varepsilon)^{n^2 + 1}\\ 
			\intertext{As $(1 - \varepsilon)^{n^2 + 1} = \sum\limits_{i = 0}^{n^2 + 1}\binom{n^2+1}{i}(-\varepsilon)^i$}
			(1 - \varepsilon)^{pn + 1} &\geq \sum\limits_{i = 0}^{n^2 + 1}\binom{n^2+1}{i}(-\varepsilon)^i\\ 
			(1 - \varepsilon)^{pn + 1} &\geq 1 - \sum\limits_{i = 1}^{n^2 + 1}\binom{n^2+1}{i}\varepsilon^i\\ 
			\intertext{As $\binom{a}{b} \leq a^b$}
			(1 - \varepsilon)^{pn + 1} &\geq 1 - \sum\limits_{i = 1}^{n^2 + 1}(\varepsilon \cdot (n^2 + 1))^i\\
			\intertext{As $\varepsilon \cdot (n^2 + 1) = \frac{\varepsilon'}{n^2 + 1}$}
			(1 - \varepsilon)^{pn + 1} &\geq 1 - \sum\limits_{i = 1}^{n^2 + 1}\left(\frac{\varepsilon'}{n^2 + 1}\right)^i\\
			(1 - \varepsilon)^{pn + 1} & \geq 1 - (n^2 + 1) \frac{\varepsilon'}{n^2 + 1} = 1 - \varepsilon'
		\end{align*}
	\end{proof}
	
%	\begin{example}
%		The highest error on all the rounded flows in Figure~\ref{fig:example:roundedflow} is made by $\roundedflow{\mathcal{O}}{s_1}$ in Figure~\ref{fig:example:roundedflow:b}. The flow is $70$ and the rounded flow is around $69.99858$. We have $\roundedflow{\mathcal{O}}{s_1} / \flow{\mathcal{O}}{s_1} \geq 0.99997$. The error rate is then at most $3 \cdot 10^-5 < \varepsilon' = 0.1$.
%	\end{example}

	\subsection{Approximation factor}
	
	With the rounded flow, we can now define the rounded loads, and thus a rounded version of $\minload{\mathcal{O}}$, $\maxload{\mathcal{O}}$ and $\loadreserve{\mathcal{O}}$.
	
	\begin{definition}[Rounded objectives]
		\label{def:roundedobj}
		Given a feasible orientation $\mathcal{O}$, the rounded load of a source $s \in S$ is defined as the ratio $\roundedload{\mathcal{O}}{s} = \frac{\roundedflow{\mathcal{O}}{s}}{\production{s}}$. 
		
		\begin{itemize}
			\item The \emph{minimum rounded load} $\roundedminload{\mathcal{O}}$ is $\min\limits_{s \in S} \roundedload{\mathcal{O}}{s}$.
			\item The \emph{maximum rounded load} $\roundedmaxload{\mathcal{O}}$ is $\max\limits_{s \in S} \roundedload{\mathcal{O}}{s}$.
			\item The \emph{rounded load reserve} $\roundedloadreserve{\mathcal{O}}$ is $\roundedmaxload{\mathcal{O}} - \roundedminload{\mathcal{O}}$
		\end{itemize}
		
	\end{definition}

	In this part we prove that a feasible solution minimizing the rounded load reserve is a close approximation of the solutions minimizing the load reserve. Similarly, a solution that maximizes the minimum rounded load approximates the solutions that maximize the minimum load. 
	
	Note that rounding the flow may change the set of feasible orientations as such a flow is lower than the regular flow $\flowst{\mathcal{O}}$ and thus may be lower than some capacity exceeded by the flow. However, hereinafter, by \emph{feasible solution}, we mean a solution satisfying the capacity constraints and the demand constraints with the regular flow as explained in Definition~\ref{def:valid:3}, not with the rounded flow $\roundedflowst{\mathcal{O}}$, which only changes the metric used to measure the quality of the feasible solutions. We show in the next subsection how to build solutions with good rounded metrics in polynomial time even though we use the regular flow to check the feasibility of the returned orientation. 
	
	\begin{lemma}
		\label{lem:approx:maxminm}
		Let $\mathcal{O}^*$ be an orientation maximizing $\minload{\mathcal{O}}$ and $\mathcal{\round{O}}$ a feasible solution maximizing $\roundedminload{\mathcal{O}}$ then $\minload{\mathcal{\round{O}}} \geq \minload{\mathcal{O}^*} \cdot (1 - \varepsilon')$ and $\minload{\mathcal{\round{O}}} \geq \minload{\mathcal{O}^*} - \varepsilon'$.
	\end{lemma}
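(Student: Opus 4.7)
The plan is to chain together three easy facts to get both inequalities at once. The first fact is that, for any feasible orientation $\mathcal{O}$ and any source $s$, the rounded load is sandwiched by the real load. Indeed, Lemma~\ref{lem:roundingerror} applied to an arc incident to $s$ (or to the expression of $\flow{\mathcal{O}}{s}$) yields $\roundedflow{\mathcal{O}}{s} \in [\flow{\mathcal{O}}{s} \cdot (1-\varepsilon'); \flow{\mathcal{O}}{s}]$, hence after dividing by $\production{s}$:
\begin{equation*}
\load{\mathcal{O}}{s} \cdot (1 - \varepsilon') \;\leq\; \roundedload{\mathcal{O}}{s} \;\leq\; \load{\mathcal{O}}{s}.
\end{equation*}
Taking the minimum over $s \in S$ preserves both inequalities, so $\minload{\mathcal{O}} \cdot (1-\varepsilon') \leq \roundedminload{\mathcal{O}} \leq \minload{\mathcal{O}}$ for every feasible $\mathcal{O}$.

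The second step is to invoke optimality. Since $\mathcal{\round{O}}$ maximizes $\roundedminload{\cdot}$ over feasible orientations and $\mathcal{O}^*$ is feasible, we have $\roundedminload{\mathcal{\round{O}}} \geq \roundedminload{\mathcal{O}^*}$. Combining this with the sandwich applied once to $\mathcal{\round{O}}$ (upper bound side) and once to $\mathcal{O}^*$ (lower bound side) gives
\begin{equation*}
\minload{\mathcal{\round{O}}} \;\geq\; \roundedminload{\mathcal{\round{O}}} \;\geq\; \roundedminload{\mathcal{O}^*} \;\geq\; \minload{\mathcal{O}^*} \cdot (1 - \varepsilon'),
\end{equation*}
which is the relative bound.

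For the absolute bound, the final step is just to observe that $\minload{\mathcal{O}^*} \leq 1$. This holds because $\mathcal{O}^*$ is feasible, so every source $s$ satisfies the capacity constraint $\flow{\mathcal{O}^*}{s} \leq \production{s}$, hence $\load{\mathcal{O}^*}{s} \leq 1$ and a fortiori $\minload{\mathcal{O}^*} \leq 1$. Then $\minload{\mathcal{\round{O}}} \geq \minload{\mathcal{O}^*} - \varepsilon' \cdot \minload{\mathcal{O}^*} \geq \minload{\mathcal{O}^*} - \varepsilon'$.

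There is no real obstacle here; the whole argument is a two-line sandwich plus the defining optimality of $\mathcal{\round{O}}$. The only thing to be slightly careful about is that the sandwich is stated for the rounded flow on an arc, so we must check that the same bounds transfer to $\roundedflow{\mathcal{O}}{s}$ itself (which is defined by the same $\oplus$ formula, so Lemma~\ref{lem:roundingerror} applies verbatim by taking $v = s$, or by noting that the proof of that lemma already handles the three cases $v \in W$, $v \in S$, $v \in P$ uniformly).
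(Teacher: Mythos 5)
Your proposal is correct and follows essentially the same route as the paper's proof: the chain $\minload{\mathcal{\round{O}}} \geq \roundedminload{\mathcal{\round{O}}} \geq \roundedminload{\mathcal{O}^*} \geq \minload{\mathcal{O}^*}(1-\varepsilon')$ via Lemma~\ref{lem:roundingerror} and optimality of $\mathcal{\round{O}}$, then $\minload{\mathcal{O}^*}\leq 1$ for the absolute bound. The only (harmless) stylistic difference is that you state the sandwich uniformly over all sources and take minima, whereas the paper works directly with the two extremal sources $s_m$ and $s_m^*$.
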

	\begin{proof}
	Let $s_m$ be the source with the lower (not rounded) load in $\mathcal{\round{O}}$ and $s_m^*$ be the source with the lower rounded load in $\mathcal{O}^*$, in other words: $\minload{\mathcal{\round{O}}} = \frac{\flow{\mathcal{\round{O}}}{s_m}}{\production{s_m}}$ and $\roundedminload{\mathcal{O}^*} = \frac{\roundedflow{\mathcal{O^*}}{s_m^*}}{\production{s_m^*}}$.
	
	By Lemma~\ref{lem:roundingerror} and definition of $\roundedminload{\mathcal{\round{O}}}$,

	\begin{align*}
		\minload{\mathcal{\round{O}}} &= \frac{\flow{\mathcal{\round{O}}}{s_m}}{\production{s_m}} \geq \frac{\roundedflow{\mathcal{\round{O}}}{s_m}}{\production{s_m}} \geq \roundedminload{\mathcal{\round{O}}}\\
		\intertext{As, by definition of $\mathcal{\round{O}}$, it maximizes $\roundedminload{\mathcal{\round{O}}}$, then}
	\roundedminload{\mathcal{\round{O}}} &\geq  \roundedminload{\mathcal{O}^*} = \frac{\roundedflow{\mathcal{O^*}}{s_m^*}}{\production{s_m^*}}\\
		\intertext{By Lemma~\ref{lem:roundingerror} and definition of $\minload{\mathcal{O}^*}$, }
	\roundedminload{\mathcal{O}^*} &\geq  \frac{\flow{\mathcal{O^*}}{s_m^*}}{\production{s_m^*}} \cdot (1 - \varepsilon') \geq \minload{\mathcal{O}^*} \cdot (1 - \varepsilon')
		\intertext{Finally, as $\minload{\mathcal{O}^*} \leq 1$, we also have}
		\minload{\mathcal{O}^*} \cdot (1 - \varepsilon') &\geq \minload{\mathcal{O}^*}  - \varepsilon'
	\end{align*}
	\end{proof}
	
	\begin{lemma}
		\label{lem:approx:minr}
		Let $\mathcal{O}^*$ be an orientation minimizing $R(\mathcal{O})$ and $\mathcal{\round{O}}$ a feasible solution minimizing $\roundedloadreserve{\mathcal{O}}$ then $\loadreserve{\mathcal{\round{O}}} \leq \loadreserve{\mathcal{O}^*} + 3 \varepsilon'$.
	\end{lemma}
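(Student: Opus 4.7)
The plan is to adapt the strategy of Lemma~\ref{lem:approx:maxminm}: first compare $\loadreserve{\mathcal{O}}$ to $\roundedloadreserve{\mathcal{O}}$ on any fixed feasible orientation, in both directions, and then chain those estimates through the inequality $\roundedloadreserve{\mathcal{\round{O}}} \leq \roundedloadreserve{\mathcal{O}^*}$ that is forced by the definition of $\mathcal{\round{O}}$ (note that $\mathcal{O}^*$ is itself feasible, hence a valid candidate for the minimum defining $\mathcal{\round{O}}$).

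The first step lifts Lemma~\ref{lem:roundingerror} from flows to loads: dividing by $\production{s}$, every feasible orientation $\mathcal{O}$ satisfies $\roundedload{\mathcal{O}}{s} \in [\load{\mathcal{O}}{s}(1-\varepsilon');\; \load{\mathcal{O}}{s}]$ for every source $s$. Picking, for a given $\mathcal{O}$, the source realising $\maxload{\mathcal{O}}$ (for the lower bound on the rounded max) and invoking the pointwise inequality (for the upper bound), and doing the symmetric argument for the min, I would obtain
\begin{align*}
\roundedmaxload{\mathcal{O}} &\in [\maxload{\mathcal{O}}(1-\varepsilon'),\; \maxload{\mathcal{O}}], \\
\roundedminload{\mathcal{O}} &\in [\minload{\mathcal{O}}(1-\varepsilon'),\; \minload{\mathcal{O}}].
\end{align*}

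These four inequalities combine into two-sided bounds on the gap between $\loadreserve{\mathcal{O}}$ and $\roundedloadreserve{\mathcal{O}}$. In one direction, $\roundedloadreserve{\mathcal{O}} \leq \maxload{\mathcal{O}} - \minload{\mathcal{O}}(1-\varepsilon') = \loadreserve{\mathcal{O}} + \varepsilon'\,\minload{\mathcal{O}} \leq \loadreserve{\mathcal{O}} + \varepsilon'$, where the final step uses $\minload{\mathcal{O}} \leq 1$, which is forced by the capacity constraint at each source. In the other direction, combining $\maxload{\mathcal{O}} \leq \roundedmaxload{\mathcal{O}}/(1-\varepsilon')$ with $\roundedminload{\mathcal{O}} \leq \minload{\mathcal{O}}$ gives $\loadreserve{\mathcal{O}} \leq \roundedloadreserve{\mathcal{O}} + \varepsilon'\,\roundedmaxload{\mathcal{O}}/(1-\varepsilon') \leq \roundedloadreserve{\mathcal{O}} + 2\varepsilon'$, using $\roundedmaxload{\mathcal{O}} \leq 1$ together with the global hypothesis $\varepsilon' < 1/2$ (which makes $1/(1-\varepsilon') < 2$).

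Chaining these two gap estimates through the minimality of $\roundedloadreserve{\mathcal{\round{O}}}$ then closes the argument: $\loadreserve{\mathcal{\round{O}}} \leq \roundedloadreserve{\mathcal{\round{O}}} + 2\varepsilon' \leq \roundedloadreserve{\mathcal{O}^*} + 2\varepsilon' \leq \loadreserve{\mathcal{O}^*} + 3\varepsilon'$. The subtle point, and what I expect to be the main nuisance to get right, is the asymmetry of the two directions of the gap estimate: the $\varepsilon'$ side falls out for free from $\minload{\mathcal{O}} \leq 1$, while the opposite side forces the factor $1/(1-\varepsilon')$, whose excess is absorbed into an extra $\varepsilon'$ only by invoking $\varepsilon' < 1/2$. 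It is precisely this inversion that produces the constant $3$ rather than $2$ in the final bound.
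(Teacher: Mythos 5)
Your proposal is correct and follows essentially the same route as the paper: both use Lemma~\ref{lem:roundingerror} to bound rounded versus true loads, exploit the minimality of $\roundedloadreserve{\round{\mathcal{O}}}$ with $\mathcal{O}^*$ as a candidate, and account for the constant exactly as you do ($2\varepsilon'$ from the $1/(1-\varepsilon')$ factor on the $\round{\mathcal{O}}$ side via $\varepsilon' < 1/2$, plus $\varepsilon'$ from $\minload{\mathcal{O}^*} \leq 1$ on the $\mathcal{O}^*$ side). Your presentation is merely more modular, packaging the comparison as two-sided bounds on $\roundedmaxload{\mathcal{O}}$ and $\roundedminload{\mathcal{O}}$ rather than working directly with the extremal sources as the paper does.
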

	\begin{proof}
		Let $s_m$ and $s_M$ be the sources with respectively the lower and higher (not rounded) loads in $\mathcal{\round{O}}$ and let $s_m^*$ and $s_M^*$ be the sources with respectively the lower and higher rounded loads in $\mathcal{O}^*$, in other words:
		
		\begin{itemize}
			\item $\maxload{\mathcal{\round{O}}} = \frac{\flow{\mathcal{\round{O}}}{s_M}}{\production{s_M}}$ and $\minload{\mathcal{\round{O}}} = \frac{\flow{\mathcal{\round{O}}}{s_m}}{\production{s_m}}$.
			\item $\roundedmaxload{\mathcal{O}^*} = \frac{\roundedflow{\mathcal{O^*}}{s_M^*}}{\production{s_M^*}}$ and $\roundedminload{\mathcal{O}^*} = \frac{\roundedflow{\mathcal{O^*}}{s_m^*}}{\production{s_m^*}}$
			
		\end{itemize}

		\begin{align*}
			\loadreserve{\mathcal{\round{O}}} &= \maxload{\mathcal{\round{O}}} - \minload{\mathcal{\round{O}}} =  \frac{\flow{\mathcal{\round{O}}}{s_M}}{\production{s_M}} - \frac{\flow{\mathcal{\round{O}}}{s_m}}{\production{s_m}}\\
			\intertext{By Lemma~\ref{lem:roundingerror}, }
			\loadreserve{\mathcal{\round{O}}} &\leq \frac{\roundedflow{\mathcal{\round{O}}}{s_M}}{\production{s_M}} \cdot \frac{1}{1 - \varepsilon'} - \frac{\roundedflow{\mathcal{\round{O}}}{s_m}}{\production{s_m}}\\
			\intertext{By definition of $\roundedmaxload{\mathcal{\round{O}}}$ and $\roundedminload{\mathcal{\round{O}}}$}
				\loadreserve{\mathcal{\round{O}}} &\leq \roundedmaxload{\mathcal{\round{O}}} \cdot \frac{1}{1 - \varepsilon'} - \roundedminload{\mathcal{\round{O}}}\\
			\intertext{As $\frac{1}{1 - \varepsilon'} = 1 + \frac{\varepsilon'}{1 - \varepsilon'} $ and $\roundedmaxload{\mathcal{\round{O}}} \leq 1$}
			\roundedmaxload{\mathcal{\round{O}}} \cdot \frac{1}{1 - \varepsilon'} - \roundedminload{\mathcal{\round{O}}} &\leq \roundedmaxload{\mathcal{\round{O}}} - \roundedminload{\mathcal{\round{O}}} + \frac{\varepsilon'}{1 - \varepsilon'} = \roundedloadreserve{\mathcal{\round{O}}} + \frac{\varepsilon'}{1 - \varepsilon'}
			\intertext{As, by definition of $\mathcal{\round{O}}$, it minimizes $\roundedloadreserve{\mathcal{\round{O}}}$, then}
			\roundedloadreserve{\mathcal{\round{O}}} + \frac{\varepsilon'}{1 - \varepsilon'} \leq \roundedloadreserve{\mathcal{O}^*} + \frac{\varepsilon'}{1 - \varepsilon'} &= \frac{\roundedflow{\mathcal{O^*}}{s_M^*}}{\production{s_M^*}} - \frac{\roundedflow{\mathcal{O^*}}{s_m^*}}{\production{s_m^*}} + \frac{\varepsilon'}{1 - \varepsilon'}
			\intertext{By Lemma~\ref{lem:roundingerror}}
			\roundedloadreserve{\mathcal{O}^*} + \frac{\varepsilon'}{1 - \varepsilon'} & \leq \frac{\flow{\mathcal{O^*}}{s_M^*}}{\production{s_M^*}} - \frac{\flow{\mathcal{O^*}}{s_m^*}}{\production{s_m^*}} \cdot (1 - \varepsilon') + \frac{\varepsilon'}{1 - \varepsilon'}
			\intertext{By definition of $\maxload{\mathcal{O}^*}$ and $\minload{\mathcal{O}^*}$}
			\roundedloadreserve{\mathcal{O}^*} + \frac{\varepsilon'}{1 - \varepsilon'} & \leq \maxload{\mathcal{O}^*} - \minload{\mathcal{O}^*} \cdot (1 - \varepsilon') + \frac{\varepsilon'}{1 - \varepsilon'}
			\intertext{Finally, as $\varepsilon' \in ]0; \frac{1}{2}[$, then $\frac{\varepsilon'}{1 - \varepsilon'} \leq 2\varepsilon'$, and as $\minload{\mathcal{O}^*} \leq 1$}
			\maxload{\mathcal{O}^*} - \minload{\mathcal{O}^*} \cdot (1 - \varepsilon') + \frac{\varepsilon'}{1 - \varepsilon'} &\leq \maxload{\mathcal{O}^*} - \minload{\mathcal{O}^*} + 3\varepsilon'
		\end{align*}
	\end{proof}

	\subsection{Compute a feasible orientation minimizing $\roundedloadreserve{\mathcal{O}}$ or maximizing $\roundedminload{\mathcal{O}}$ in polynomial time}
	
	Recall that we denote by $n$ the size of the tree $T$ and by $\Pi = \sum_{p \in P} \power{p}$.  We prove it is possible to return in polynomial time the approximate solutions mentioned in Lemmas~\ref{lem:approx:maxminm} and \ref{lem:approx:minr}.
	
	To do so, we adapt the algorithm of \cite{Barth2022}, which solved \valid in polynomial time, to those problems.

	We first prove the following lemma that states the number of distinct values for a rounded flow is bounded by a polynomial.
	\begin{lemma}
		\label{lem:nbflots}
		Let $$\round{F} = \{\roundedflow{\mathcal{O}}{\mathcal{O}([u, v])}, (u, v) \in T, \mathcal{O} \text{ is a feasible orientation}\}$$
		Then, $\lvert \round{F}\rvert  \leq (n\log(n) + \log(\Pi) + 1) (1 + \frac{1}{\varepsilon}) + 1 = (n\log(n) + \log(\Pi) + 1) (1 + \frac{(n+1)^2}{\varepsilon'}) + 1 = O(\frac{1}{\varepsilon'} \cdot (n^3\log(n) + n^2\log(\Pi)))$.
	\end{lemma}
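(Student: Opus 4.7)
The plan is to bound $|\round{F}|$ by partitioning the positive rounded flow values according to which dyadic interval $[2^i, 2^{i+1})$ the input to the outermost application of $a(\cdot)$ belongs to, and counting separately the number of admissible indices $i$ and the number of possible outputs per interval. Note first that for a feasible orientation the $+\infty$ branch of Definition~\ref{def:roundedflow} never occurs, since the demand constraint forces $d^-(v) > 0$ at every switch and sink; hence every value in $\round{F}$ is either $0$ or of the form $a(f)$ for some positive rational $f$.

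To bound the admissible range of $i$, I would first show that on every arc of any feasible orientation the regular flow $\flow{\mathcal{O}}{u,v}$ lies in $\{0\} \cup [1/n^n,\, \Pi]$. The upper bound $\Pi$ follows by a straightforward induction up the tree: each node's outgoing flow is a weighted average of its successors' flows plus at most one sink power, and the sink powers together sum to $\Pi$. The lower bound on a positive flow comes from another induction on depth: a positive flow traces back to the demand $\power{p} \geq 1$ of some descending sink, and at each successive upward step is divided by at most $n$, along a path of length at most $n-1$. Combining with Lemma~\ref{lem:approx:1}, any input $f$ to an $a(\cdot)$ call that produces a positive rounded flow lies in some interval $[2^i, 2^{i+1})$ with $-n\log(n) \leq i \leq \log(\Pi)$, giving at most $n\log(n) + \log(\Pi) + 1$ admissible values of $i$.

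For a fixed $i$, as $f$ ranges over $[2^i, 2^{i+1})$ the value $a(f) = \lfloor f/(2^i\varepsilon)\rfloor \cdot 2^i\varepsilon$ takes at most $1 + 1/\varepsilon$ distinct values, because $\lfloor f/(2^i\varepsilon)\rfloor$ is an integer in $[1/\varepsilon,\, 2/\varepsilon)$. Multiplying the two counts and adding $1$ for the value $0$ (which arises when a source or switch has no descending demand) yields $|\round{F}| \leq (n\log(n) + \log(\Pi) + 1)(1 + 1/\varepsilon) + 1$; substituting $\varepsilon = \varepsilon'/(n^2+1)^2$ then gives the other two expressions. The main obstacle is the lower bound on positive regular flows: one must fix the right inductive invariant (the flow is at least $1/n^{d+1}$ at an arc whose nearest descending sink lies $d$ steps below) so that only a factor $n$ is lost per level; once this invariant is stated, the induction is routine.
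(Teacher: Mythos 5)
Your proposal is correct and follows essentially the same route as the paper's proof: partition the positive rounded values by the dyadic interval $[2^i,2^{i+1})$ containing the argument of the outermost $a(\cdot)$, count at most $1+\frac{1}{\varepsilon}$ multiples of $2^i\varepsilon$ per interval, bound the admissible range of $i$ via $\flow{\mathcal{O}}{u,v}\in\{0\}\cup[1/n^n,\Pi]$, and add one for the zero flow. The only nitpick is that the lower bound on $i$ should invoke Lemma~\ref{lem:roundingerror} (which gives $\roundedflow{\mathcal{O}}{u,v}\geq(1-\varepsilon')\flow{\mathcal{O}}{u,v}$, hence a lower bound on the argument of the final $a(\cdot)$) rather than Lemma~\ref{lem:approx:1}, which only bounds $a(f)$ in terms of $f$ in the wrong direction for this purpose.
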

	\begin{proof}
		
		Let $f \in [2^i; 2^{i + 1}[$ then $a(f)$ is the closest multiple of $2^i \varepsilon$ lower than $f$. Those multiples belong to the interval $[\left\lfloor \frac{2^i}{2^i \varepsilon} \right\rfloor \cdot 2^i \varepsilon; 2^{i + 1}] \subset [2^i (1 - \varepsilon); 2^{i + 1}]$. Then there  are at most  $(2^{i + 1} - 2^i \cdot (1 - \varepsilon)) / 2^i \varepsilon = 1 + 1/\varepsilon$ such multiples. As a consequence, if the value $i$ is between $a$ and $b$ then, there are at most $(1 + 1/\varepsilon) \cdot (b - a)$ possible values for the rounded flow.
		
		Let $\mathcal{O}$ be a feasible orientation, and $(u, v)$ be an arc in that orientation, then $\roundedflow{\mathcal{O}}{u, v} \leq \flow{\mathcal{O}}{u, v} \leq \Pi$. Thus $i \leq \log(\Pi)$. 
		
		The flow can also be split. This happens at most once per node in the tree and each node has at most $n$ entering arcs, thus the flow $\flow{\mathcal{O}}{u, v}$ cannot be less than $1/n^n$ unless if the flow is zero. By Lemma~\ref{lem:roundingerror}, $\roundedflow{\mathcal{O}}{u, v}  \geq (1 - \varepsilon') / n^n$. We recall that $\varepsilon'$ was chosen between 0 and $1/2$, thus $1 - \varepsilon' \geq 1/2$ and, then, $(1 - \varepsilon')/n^n \geq  1 / (2 \cdot n^n) = 2^{-n\log(n) - 1}$. Consequently, $i \geq - n\log(n) - 1$.
		
		Thus, the number of non null rounded flows is $(1 + 1/\varepsilon) \cdot (\log(\Pi) - (- n\log(n) - 1))$. There is one additional possible rounded flow: the null flow. 
	\end{proof}

\subsubsection{Feasible semi-orientations}

Given an edge $[u, v]$, we consider the two trees $T_u$ and $T_v$ of the forest $T \backslash [u, v]$, respectively containing $u$ and $v$. 

We consider a feasible orientation $\mathcal{O}$ of $T$, where $\mathcal{O}([u, v]) = (u, v)$. The flow $\flowst{\mathcal{O}}$ and the rounded flow $\roundedflowst{\mathcal{O}}$ in $T_v$ and $(u, v)$ do not depend on the orientation of $T_u$. Similarly, the flow and the rounded flow in $T_u$ depend only on the flow and the rounded flow in $(u, v)$ and on the orientation of $T_u$, but not on the orientations of $T_v$. We express this idea with the notion of semi-orientations. 

\begin{definition}
	\label{def:halforientation:outgoing}
	An \emph{semi-orientation $\mathcal{HO}$ outgoing from $(u, v)$ }is an orientation of the edges of $T_v$ and of $[u, v]$ such that $\mathcal{HO}([u, v])= (u, v)$. The calculations of the flow $\flowst{\mathcal{HO}}$ and rounded flow $\roundedflowst{\mathcal{HO}}$ in $T_v \cup \{[u, v]\}$ follow the Definitions~\ref{def:valid:3} and \ref{def:roundedflow}. We say $\mathcal{HO}$ is feasible if it satisfies the demand and the capacity constraints on the nodes of $T_v$.
\end{definition}

\begin{definition}
	\label{def:halforientation:entering}
	An \emph{semi-orientation $\mathcal{HO}$ entering $(u, v)$} is an orientation of the edges of $T_u$ and of $[u, v]$ such that $\mathcal{HO}([u, v])= (u, v)$. Given $f \geq 0$ and $\round{f} \in \round{F}$, if we set $\flow{\mathcal{HO}}{u, v} = f$ and $\roundedflow{\mathcal{HO}}{u, v} = \round{f}$ then we can compute the flow $\flowst{\mathcal{HO}}$ and rounded flow $\roundedflowst{\mathcal{HO}}$ in $T_u \cup \{[u, v]\}$ with Definitions~\ref{def:valid:3} and \ref{def:roundedflow}. We denote by \emph{semi-orientation $\mathcal{HO}$ entering $(u, v)$ with $f$ and $\round{f}$} the triplet $(\mathcal{HO}, f, \round{f})$ and we write $\flowst{(\mathcal{HO}, f, \round{f})}$ and $\roundedflowst{(\mathcal{HO}, f, \round{f})}$ the associated flows. We say $(\mathcal{HO}, f, \round{f})$ is feasible if and only if $\flowst{(\mathcal{HO}, f, \round{f})}$ satisfies the demand and the capacity constraints on the nodes of $T_u$. 
\end{definition}

\begin{remark}
	Considering a feasible semi-orientation $\mathcal{HO}$ outgoing from $(u, v)$, there is no constraint on the flow $\flow{\mathcal{HO}}{u, v}$: the demand and the capacity constraints should only be satisfied for the nodes of $T_v$.
\end{remark}

\begin{remark}
	Considering an semi-orientation entering $(u, v)$ with $f$ and $\tilde{f}$, then, even if $v$ is a sink, there is no relation between $f$, $\round{f}$ and $\power{v}$. 
\end{remark}

\begin{lemma}
	\label{lem:approx:3}
	Given $0 \leq f' < f$ and $\round{f} \in \round{F}$. If $(\mathcal{HO}, f, \round{f})$ is feasible, then $(\mathcal{HO}, f', \round{f})$ is also feasible.
\end{lemma}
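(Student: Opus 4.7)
The plan is to verify the two defining conditions of feasibility separately: the demand constraint and the capacity constraint. The demand constraint $d^-(v) > 0$ for $v \in W \cup P$ depends only on the orientation $\mathcal{HO}$ and not on $f$ or $\round{f}$, so it is inherited unchanged from $(\mathcal{HO}, f, \round{f})$ to $(\mathcal{HO}, f', \round{f})$. Similarly, the rounded flow in $T_u$ is unaffected by the switch from $f$ to $f'$, since the recursive computation in Definition~\ref{def:roundedflow} starts from $\round{f}$ (the rounded flow on $(u,v)$, which is unchanged) and the sink powers, and never reads $f$. So neither the orientation, nor the rounded flow, nor the demand side of feasibility can go wrong.

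For the capacity constraints, which are upper bounds on the real flow, the key step is to establish the monotonicity claim: for every arc $(x,y)$ of $T_u$, $\flow{(\mathcal{HO}, f', \round{f})}{x,y} \leq \flow{(\mathcal{HO}, f, \round{f})}{x,y}$. I would obtain this by induction on the length of a longest directed path starting at $y$ in $T_u \cup \{(u,v)\}$, showing that $\flow{(\mathcal{HO}, f, \round{f})}{x,y}$ is a non-negative linear combination of $f$ and of the sink powers $\power{p}$ for sinks $p$ in $T_u$. The base case is a ``leaf'' of the propagation DAG: either the arc $(u,v)$ itself, which contributes the value $f$, or a sink whose power contributes $\power{p}$. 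The inductive step is immediate from Definition~\ref{def:valid:3}, since every formula for the flow on an entering arc is a sum of downstream flows (plus possibly a sink power) divided by a positive integer $d^-(y)$ or $d^-(y)+1$; sums and divisions by positive constants preserve non-negative linearity, so the expression remains a non-negative linear combination of $f$ and the $\power{p}$'s.

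Once monotonicity is in hand, the conclusion is immediate: every capacity constraint is of the form $\flow{\cdot}{v} \cdot d^-(v) \leq \capacity{v}$ or $\flow{\cdot}{v} \leq \production{v}$, so lowering the flow can only preserve them. The main subtlety I expect is just the careful phrasing of the induction, keeping in mind that only those arcs of $T_u$ from which there is a directed path to $u$ in the orientation carry a strictly positive coefficient on $f$; the flow on the remaining arcs does not depend on $f$ at all and is trivially monotone. Either way, $(\mathcal{HO}, f', \round{f})$ satisfies both the demand and the capacity constraints on $T_u$, hence it is feasible.
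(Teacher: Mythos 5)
Your proof is correct and follows essentially the same route as the paper's: the demand constraints depend only on the orientation, and the capacity constraints are preserved because every arc flow in $T_u$ is monotone non-decreasing in $f$. The paper simply asserts this arc-wise monotonicity, whereas you additionally justify it by induction (showing each flow is a non-negative linear combination of $f$ and the sink powers), which is a welcome extra level of rigor but not a different argument.
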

\begin{proof}
	$\flowst{(\mathcal{HO}, f, \round{f})}$ satisfies the demand and the capacity constraint on the nodes of $T_u$. As $f' \leq f$, for every arc in $T_u$, $\flowst{(\mathcal{HO}, f', \round{f})}(u, v) \leq \flowst{(\mathcal{HO}, f, \round{f})}(u, v)$, thus $\flowst{(\mathcal{HO}, f', \round{f})}$ also satisfies the demand and capacity constraints.
\end{proof}

\begin{example}
	In Figure~\ref{fig:halfexample}, we reuse the example of Figure~\ref{fig:example} to show examples of feasible and not feasible semi-orientations.
	
	\begin{figure}[!ht]
		\captionsetup[subfigure]{justification=centering}
		\center
		\begin{subfigure}[t]{0.4\textwidth}
			\begin{tikzpicture}
				\clip (-3.75,-1.5) rectangle (0.75, 2);
				
				\node[source] [label=right:$$]  [label=above:$$]  (S1) at (-3,1) {};
				\node[source] [label=left:$s_2$] [label=above:$20/20$] (S2) at (0,1) {};
				
				\node[switch] [label=left:$v$] (W1) at (-3,0) {};
				\node[switch] [label=above:$u$] (W12) at (-1.5,0) {};
				\node[switch]  (W2) at (0,0) {};
				
				\node[puits]  (P1) at (-3,-1) {};
				\node[puits] [label=left:$20$] (P2) at (0,-1) {};
				\node[puits]  (P3) at (-1.5,1) {};
				
				\draw[-, >=latex, dotted] (S1) -- (W1);
				\draw[-, >=latex, dotted] (S1) -- (P3);
				\draw[->, >=latex] (S2) -- (W2);
				
				\draw[-, >=latex, dotted] (W1) -- (P1);
				\draw[->, >=latex] (W12) -- (W1);
				\draw[<-, >=latex] (W12) -- (W2);
				\draw[->, >=latex] (W2) -- (P2);
			\end{tikzpicture}
			\caption{Example of semi-orientation entering $(u, v)$. With $f = 0$ and any value $\round{f}$, this semi-orientation is feasible. However, if $f > 0$, it is not feasible anymore as the capacity constraint is not satisfied for $s_2$.}
			\label{fig:halfexample:b}
		\end{subfigure}
		\begin{subfigure}[t]{0.4\textwidth}
			\begin{tikzpicture}
				\clip (-4.25,-1.5) rectangle (0.75, 2);
				
				\node[source] [label=right:$$]  [label=above:$5/100$]  (S1) at (-3,1) {};
				\node[source] [label=left:$$] [label=above:$ $] (S2) at (0,1) {};
				
				\node[switch] [label=above left:$v$]  [label=left:$55/60$] (W1) at (-3,0) {};
				\node[switch] [label=above:$u$] [label=below:$55/20$] (W12) at (-1.5,0) {};
				\node[switch]  (W2) at (0,0) {};
				
				\node[puits] [label=left:$50$] (P1) at (-3,-1) {};
				\node[puits] (P2) at (0,-1) {};
				\node[puits] [label=above:$10$] (P3) at (-1.5,1) {};
				
				\draw[<-, >=latex] (S1) -- (W1);
				\draw[->, >=latex] (S1) -- (P3);
				\draw[-, >=latex, dotted] (S2) -- (W2);
				
				\draw[->, >=latex] (W1) -- (P1);
				\draw[->, >=latex] (W12) -- (W1);
				\draw[-, >=latex, dotted] (W12) -- (W2);
				\draw[-, >=latex, dotted] (W2) -- (P2);
			\end{tikzpicture}
			\caption{Example of semi-orientation outgoing from $(u, v)$. This semi-orientation is feasible as all the nodes of $T_v$ satisfy the capacity and the demand constraints (even if $u$ does not).}
			\label{fig:halfexample:c}
		\end{subfigure}\\
		\begin{subfigure}[t]{0.4\textwidth}
			\begin{tikzpicture}
				\clip (-3.75,-1.5) rectangle (0.75, 2);
				
				\node[source] [label=right:$$]  [label=above:$$]  (S1) at (-3,1) {};
				\node[source] [label=left:$$] [label=above:$$] (S2) at (0,1) {};
				
				\node[switch] [label=left:$v$] (W1) at (-3,0) {};
				\node[switch] [label=above:$u$] (W12) at (-1.5,0) {};
				\node[switch]  (W2) at (0,0) {};
				
				\node[puits] [label=left:$p_1$] (P1) at (-3,-1) {};
				\node[puits]  (P2) at (0,-1) {};
				\node[puits]  (P3) at (-1.5,1) {};
				
				\draw[<-, >=latex] (S1) -- (W1);
				\draw[->, >=latex] (S1) -- (P3);
				\draw[-, >=latex, dotted] (S2) -- (W2);
				
				\draw[<-, >=latex] (W1) -- (P1);
				\draw[->, >=latex] (W12) -- (W1);
				\draw[-, >=latex, dotted] (W12) -- (W2);
				\draw[-, >=latex, dotted] (W2) -- (P2);
			\end{tikzpicture}
			\caption{This semi-orientation is not feasible as the demand constraint\\ is not satisfied for $p_1$}
			\label{fig:halfexample:e}
		\end{subfigure}%
		\begin{subfigure}[t]{0.4\textwidth}
			\begin{tikzpicture}
				\clip (-3.75,-1.5) rectangle (0.75, 2);
				
				\node[source] (S1) at (-3,1) {};
				\node[source] (S2) at (0,1) {};
				
				\node[switch] [label=left:$v$] (W1) at (-3,0) {};
				\node[switch] [label=above:$u$] (W12) at (-1.5,0) {};
				\node[switch]  (W2) at (0,0) {};
				
				\node[puits]  (P1) at (-3,-1) {};
				\node[puits]  (P2) at (0,-1) {};
				\node[puits]  (P3) at (-1.5,1) {};
				
				\draw[-, >=latex, dotted] (S1) -- (W1);
				\draw[-, >=latex, dotted] (S1) -- (P3);
				\draw[->, >=latex] (S2) -- (W2);
				
				\draw[-, >=latex, dotted] (W1) -- (P1);
				\draw[->, >=latex] (W12) -- (W1);
				\draw[->, >=latex] (W12) -- (W2);
				\draw[->, >=latex] (W2) -- (P2);
			\end{tikzpicture}
			\caption{This semi-orientation is not feasible as the demand constraint\\ is not satisfied for $u$.}
			\label{fig:halfexample:f}
		\end{subfigure}\\
		\caption{Example of instance and orientations for that instance. For each orientation, it is explained if it is feasible according to Definitions~\ref{def:halforientation:outgoing} and \ref{def:halforientation:entering}.}
		\label{fig:halfexample}
	\end{figure}
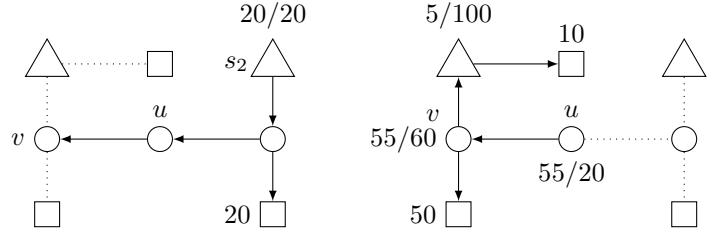
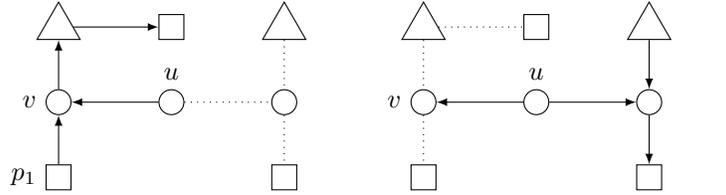
\end{example}

\subsubsection{Functions $\mininputflowst$ and $\maxoutputflowst$}

We adapt the notations $\roundedmaxload{\mathcal{O}}$ and $\roundedminload{\mathcal{O}}$ to the semi-orientations. 

Given an semi-orientation $\mathcal{HO}$ outgoing from $(u, v)$, $\roundedmaxload{\mathcal{HO}}$ and $\roundedminload{\mathcal{HO}}$ are the maximum and minimum rounded loads of the sources in $T_v$ (not including $u$).

Given an semi-orientation $\mathcal{HO}$ entering $(u, v)$ with $f$ and $\round{f}$, $\roundedmaxload{\mathcal{HO}, f, \round{f}}$ and $\roundedminload{\mathcal{HO}, f, \round{f}}$ are the maximum and minimum rounded loads of the sources in $T_u$ (not including $v$).

\begin{remark}
	In case of an semi-orientation $\mathcal{HO}$ outgoing from $(u, v)$ where $u$ is a source, as $u$ is not part of $T_v$, the load of $u$ does not intervene in the calculations of  $\roundedmaxload{\mathcal{HO}}$ and $\roundedminload{\mathcal{HO}}$. This is necessary as, otherwise, only the flow going from $u$ to $v$ would count in the load of $u$ and this would mistakenly reduce the value of $\roundedmaxload{\mathcal{HO}}$ and $\roundedminload{\mathcal{HO}}$.
	
	Similarly, if $v$ is a source and is considered in the definition of $\roundedmaxload{\mathcal{HO}, f, \round{f}}$ and $\roundedminload{\mathcal{HO}, f, \round{f}}$, where $\mathcal{HO}$ is an semi-orientation entering $(u, v)$, we would have $\roundedminload{\mathcal{HO}} = 0$ whatever the orientation of $T_u$ is.
\end{remark}

We write $\round{L}= \{\frac{\round{f}}{\production{s}} \vert  \round{f} \in \round{F}, s \in S\}$. Note that $\lvert \round{L}\rvert  = \lvert \round{F}\rvert  \cdot \lvert S\rvert $ is polynomial and that $\roundedmaxload{\mathcal{HO}}$, $\roundedminload{\mathcal{HO}}$, $\roundedmaxload{\mathcal{HO}, f, \round{f}}$ and $\roundedminload{\mathcal{HO}, f, \round{f}}$ are in $\round{L}$.

Building a feasible orientation minimizing the rounded load reserve can be done using the following algorithm: given any two neighbors $u$ and $v$ and three values $\round{f} \in \round{F}, \round{M} \in \round{L}, \round{m} \in \round{L}$, we search for a rational $f$ such that there exists a feasible semi-orientation $\mathcal{HO}$ outgoing from $(u, v)$ satisfying $\flow{\mathcal{HO}}{u, v} = f$ and $\roundedflow{\mathcal{HO}}{u, v} = \round{f}$ and a feasible semi-orientation $\mathcal{HO}'$ entering $(u, v)$ with $f$ and $\round{f}$. In addition, the loads should be such that $\roundedminload{\mathcal{HO}} \geq \round{m}$, $\roundedminload{\mathcal{HO}, f, \round{f}} \geq \round{m}$, $\roundedmaxload{\mathcal{HO}} \leq \round{M}$ and $\roundedmaxload{\mathcal{HO}, f, \round{f}} \leq \round{M}$.

If such semi-orientations exist, the union is a feasible orientation $\mathcal{O}$ such that $\mathcal{O}([u, v]) = (u, v)$, $\roundedminload{\mathcal{O}} \geq \round{m}$ and $\roundedmaxload{\mathcal{O}} \leq \round{M}$. We start again for every possible values of $\round{f}$, $\round{M}$, $\round{m}$. We then renew this operation by reversing the orientation of $[u, v]$. We get a feasible orientation minimizing the rounded load reserve by returning the one where $\round{M} - \round{m}$ is minimum, and a solution maximizing the minimum rounded load by returning the one where $\round{m}$ is maximum (in that case, enumerating all the values of $\round{M}$ is not necessary, we can set $\round{M}$ to 1).  

It is possible to enumerate the triplets $\round{f}, \round{m}$ and $\round{M}$ in polynomial time, but not all the possible values of $f$. This is why we use auxiliary functions $\mininputflowst$ and $\maxoutputflowst$ that will respectively give the lowest possible value $f$ for $\mathcal{HO}$ and the highest possible value $f$ for $\mathcal{HO}'$.

\begin{definition}
	\label{def:approx:2}
	
	Let $\round{f} \in \round{F}, \round{M} \in \round{L}, \round{m} \in \round{L}$ and $[u, v]$ be an edge of $T$. Let $\mathcal{P}$ be the set of all values $f$ such that there exists a feasible semi-orientation $\mathcal{HO}$ outgoing from $(u, v)$ such that $\flow{\mathcal{HO}}{u, v} = f$, $\roundedflow{\mathcal{HO}}{u, v} = \round{f}$, $\roundedmaxload{\mathcal{HO}} \leq \round{M}$ and $\roundedminload{\mathcal{HO}} \geq \round{m}$. We then write $\mininputflow{u}{v}{\round{f}}{\round{M}}{\round{m}} = \min(\mathcal{P})$. If $\mathcal{P} = \emptyset$, then $\mininputflow{u}{v}{\round{f}}{\round{M}}{\round{m}} = +\infty$.
\end{definition}

\begin{definition}
	\label{def:approx:1}
	Let $\round{f} \in \round{F}, \round{M} \in \round{L}, \round{m} \in \round{L}$ and $[u, v]$ be an edge of $T$. Let $\mathcal{P}$ be the set of all values $f$ such that there exists a feasible semi-orientation $\mathcal{HO}'$ entering in $(u, v)$ with $f$ and $\round{f}$ such that $\roundedmaxload{\mathcal{HO}', f, \round{f}} \leq \round{M}$ and $\roundedminload{\mathcal{HO}', f, \round{f}} \geq \round{m}$. We then write $\maxoutputflow{u}{v}{\round{f}}{\round{M}}{\round{m}} = \max(\mathcal{P})$. If $\mathcal{P} = \emptyset$, then $\maxoutputflow{u}{v}{\round{f}}{\round{M}}{\round{m}} = -\infty$.
\end{definition}

$\mininputflow{u}{v}{\round{f}}{\round{M}}{\round{m}}$ can be seen as the minimum power than should be called by the sinks of $T_v$ and produced by the sources of $T_u$. On the contrary, $\maxoutputflow{u}{v}{\round{f}}{\round{M}}{\round{m}}$ is the maximum power that can be produced by the sources of $T_u$ and called by the sinks of $T_v$. This interpretation leads intuitively to Lemma~\ref{lem:approx:4}. 

\begin{example}
	In the example of Figure~\ref{fig:halfexample}, we set $\round{m} = 0$, $\round{M} = 1$ and $u$, $v$ are the two nodes shown in the figure. There are two possible semi-orientations outgoing from $(u, v)$ satisfying the demand constraint: the one of Figure~\ref{fig:halfexample:c} and the same orientation where we reverse the arc going from $v$ to the source above. In the first case, the flow going through $(u, v)$ is $55$ and the rounded flow is around $54.991$, and, in the second case, the flow going through $(u, v)$ is $25$ and the rounded flow is around $24.9976$. Consequently, $\mininputflow{u}{v}{54.991}{1}{0} = 55$ and $\mininputflow{u}{v}{24.9976}{1}{0} = 25$. 
	
	There is no feasible orientation entering $(u, v)$ with $f = 25$ and $\round{f} = 24.9976$ as the capacity constraint is, in that case, not satisfied for $u$. The maximum value $f$ for which it is feasible is 0, thus, $\maxoutputflow{u}{v}{24.9976}{1}{0} = 0$. Similarly, $\maxoutputflow{u}{v}{54.991}{1}{0} = 0$. 
	
	As we can see, it is not possible to call a power greater than 0 to $u$ whereas $v$ needs at least 25. In other words, there is no feasible orientation where $\mathcal{O}([u, v]) = (u, v)$. 
\end{example}

\begin{lemma}
	\label{lem:approx:4}
	Let $[u, v]$ be any edge of $T$, and $\round{M}, \round{m} \in \round{L}$, there exists a feasible orientation $\mathcal{O}$ where $\round{M} \geq \roundedmaxload{\mathcal{O}}$ and $\round{m} \leq \roundedminload{\mathcal{O}}$ if and only if there exists $\round{f} \in \round{F}$ such that 
	$\mininputflow{u}{v}{\round{f}}{\round{M}}{\round{m}} \leq \maxoutputflow{u}{v}{\round{f}}{\round{M}}{\round{m}}$ or $\mininputflow{v}{u}{\round{f}}{\round{M}}{\round{m}} \leq \maxoutputflow{v}{u}{\round{f}}{\round{M}}{\round{m}}$. 
\end{lemma}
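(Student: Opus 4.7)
The plan is to prove the equivalence by restricting (forward direction) and gluing (backward direction) semi-orientations along the edge $[u,v]$. Without loss of generality we treat the case $\mathcal{O}([u,v]) = (u,v)$; the case $(v,u)$ is completely symmetric and will produce the second disjunct of the conclusion.

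For the forward direction, I would take any feasible $\mathcal{O}$ satisfying the load bounds, set $f = \flow{\mathcal{O}}{u,v}$ and $\round{f} = \roundedflow{\mathcal{O}}{u,v} \in \round{F}$, and observe that the restriction of $\mathcal{O}$ to $T_v \cup \{[u,v]\}$ is a feasible semi-orientation outgoing from $(u,v)$ whose flow and rounded flow on the arc are exactly $f$ and $\round{f}$. Since the sources of $T_v$ form a subset of those of $T$, we have $\roundedmaxload{\mathcal{HO}_v} \leq \roundedmaxload{\mathcal{O}} \leq \round{M}$ and $\roundedminload{\mathcal{HO}_v} \geq \roundedminload{\mathcal{O}} \geq \round{m}$, which gives $\mininputflow{u}{v}{\round{f}}{\round{M}}{\round{m}} \leq f$. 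The symmetric restriction to $T_u \cup \{[u,v]\}$, viewed as an entering semi-orientation with parameters $f$ and $\round{f}$, is similarly feasible and witnesses $\maxoutputflow{u}{v}{\round{f}}{\round{M}}{\round{m}} \geq f$, chaining to the required inequality.

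For the backward direction, assume $\round{f} \in \round{F}$ satisfies $\mininputflow{u}{v}{\round{f}}{\round{M}}{\round{m}} \leq \maxoutputflow{u}{v}{\round{f}}{\round{M}}{\round{m}}$, and let $\mathcal{HO}_v$ attain the minimum (with $\flow{\mathcal{HO}_v}{u,v} = f_1$) and $\mathcal{HO}_u$ attain the maximum (with flow $f_2 \geq f_1$), both with rounded flow $\round{f}$ on $(u,v)$. The natural candidate $\mathcal{O}$ is the union: take $\mathcal{HO}_u$ on $T_u$, $\mathcal{HO}_v$ on $T_v$, and orient $[u,v]$ from $u$ to $v$. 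In this glued orientation, the real flow on $(u,v)$ is forced by the $T_v$-side to be $f_1$, while the rounded flow is $\round{f}$, consistent with what $\mathcal{HO}_u$ was computed from.

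The main obstacle is checking that this gluing really yields a feasible orientation with the desired rounded-load bounds, since the real flow on $(u,v)$ drops from $f_2$ to $f_1$ in the $T_u$-part. Feasibility is saved by Lemma~\ref{lem:approx:3}: lowering $f$ while keeping $\round{f}$ fixed preserves both demand and capacity constraints in $T_u$, so the $T_u$-side of $\mathcal{O}$ remains feasible; the $T_v$-side is feasible by hypothesis on $\mathcal{HO}_v$. For the loads, the key observation is that Definition~\ref{def:roundedflow} computes each rounded flow purely as a function of the rounded flows of its successors (and of sink powers), with no dependence on the real flow. Hence the rounded flow at every source of $T_u$ in $\mathcal{O}$ equals its value in $\mathcal{HO}_u$, and likewise for $T_v$ in $\mathcal{HO}_v$; every source's rounded load therefore lies in $[\round{m}, \round{M}]$, so $\roundedmaxload{\mathcal{O}} \leq \round{M}$ and $\roundedminload{\mathcal{O}} \geq \round{m}$, completing the proof.
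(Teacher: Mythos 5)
Your proof is correct and follows essentially the same route as the paper: restrict $\mathcal{O}$ to $T_v$ and $T_u$ for the forward direction, and for the converse glue the extremal semi-orientations, invoking Lemma~\ref{lem:approx:3} to lower the entering flow from $\maxoutputflowst$ to $\mininputflowst$ while the rounded flows (and hence the rounded loads) are unaffected. The explicit remark that Definition~\ref{def:roundedflow} makes the rounded flow depend only on downstream rounded flows, not on the real flow, is exactly the observation the paper relies on implicitly.
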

\begin{proof}
	Assuming such a feasible orientation $\mathcal{O}$ exists with $\mathcal{O}([u, v]) = (u, v)$, $f = \flow{\mathcal{O}}{u, v}$ and $\round{f} = \roundedflow{\mathcal{O}}{u, v}$. Finally, let $\mathcal{HO}$ be the orientation $\mathcal{O}$ restricted to $T_v \cup {[u, v]}$, and $\mathcal{HO}'$ be $\mathcal{O}$ restricted to $T_u \cup {[u, v]}$.
	
	Then, first, $\mathcal{HO}$ is an semi-orientation outgoing from $(u, v)$. For every edge $e \in T_v$, we have $\flow{\mathcal{O}}{e} = \flow{\mathcal{HO}}{e}$ and $\roundedflow{\mathcal{O}}{e} = \roundedflow{\mathcal{HO}}{e}$. Consequently $\mathcal{HO}$ is feasible and satisfies $\flow{\mathcal{HO}}{u, v} = f$ and $\roundedflow{\mathcal{HO}}{u, v} = \round{f}$, $\round{M} \geq \roundedmaxload{\mathcal{O}}$ and $\round{m} \leq \roundedminload{\mathcal{O}}$. Thus $\mininputflow{u}{v}{\round{f}}{\round{M}}{\round{m}} \leq f$. 
	
	Secondly $\mathcal{HO}'$ is an semi-orientation entering $(u, v)$ with $f$ and $\round{f}$. For every edge $e \in T_u$, we have $\flow{\mathcal{O}}{e} = \flow{(\mathcal{HO}', f)}{e}$ and $\roundedflow{\mathcal{O}}{e} = \roundedflow{(\mathcal{HO}', f, \round{f}))}{e}$. Consequently $\mathcal{HO}'$ is feasible and satisfies $\round{M} \geq \roundedmaxload{\mathcal{O}}$ and $\round{m} \leq \roundedminload{\mathcal{O}}$. Thus $\maxoutputflow{u}{v}{\round{f}}{\round{M}}{\round{m}} \geq f$. We then have $\maxoutputflow{u}{v}{\round{f}}{\round{M}}{\round{m}} \geq \mininputflow{u}{v}{\round{f}}{\round{M}}{\round{m}}$.

	We now assume that $\maxoutputflow{u}{v}{\round{f}}{\round{M}}{\round{m}} \geq \mininputflow{u}{v}{\round{f}}{\round{M}}{\round{m}}$ for some value $\round{f} \in \round{F}$.  We can first deduce that $\mininputflow{u}{v}{\round{f}}{\round{M}}{\round{m}} \neq +\infty$ and $\maxoutputflow{u}{v}{\round{f}}{\round{M}}{\round{m}} \neq -\infty$. By definition of $\mininputflow{u}{v}{\round{f}}{\round{M}}{\round{m}}$, there exists a feasible semi-orientation $\mathcal{HO}$ outgoing from $(u, v)$ such that $\flow{\mathcal{HO}}{u, v} = \mininputflow{u}{v}{\round{f}}{\round{M}}{\round{m}}$, $\roundedflow{\mathcal{HO}}{u, v} = \round{f}$, $\round{M} \geq \roundedmaxload{\mathcal{HO}}$ and $\round{m} \leq \roundedminload{\mathcal{HO}}$. By definition of $\maxoutputflow{u}{v}{\round{f}}{\round{M}}{\round{m}}$, there exists a feasible semi-orientation $\mathcal{HO}'$ entering $(u, v)$ with $\maxoutputflow{u}{v}{\round{f}}{\round{M}}{\round{m}}$ and $\round{f}$ such that $\round{M} \geq \roundedmaxload{\mathcal{HO}'}$ and $\round{m} \leq \roundedminload{\mathcal{HO}'}$. By Lemma~\ref{lem:approx:3}, $(\mathcal{HO}', \mininputflow{u}{v}{\round{f}}{\round{M}}{\round{m}}, \round{f})$ is also feasible.
	
	Let $\mathcal{O} = \mathcal{HO} \cup \mathcal{HO}'$. Note that, on the edges of $T_v \cup {[u, v]}$, the flow $\flowst{\mathcal{O}}$ and the rounded flow $\roundedflowst{\mathcal{O}}$ coincide respectively with $\flowst{\mathcal{HO}}$ and $\roundedflowst{\mathcal{HO}}$. In addition, on the edges of $T_u \cup {[u, v]}$, the flow $\flowst{\mathcal{O}}$ and the rounded flow $\roundedflowst{\mathcal{O}}$ coincide respectively with $\flowst{(\mathcal{HO}', \mininputflow{u}{v}{\round{f}}{\round{M}}{\round{m}}, \round{f})}$ and $\roundedflowst{(\mathcal{HO}', \mininputflow{u}{v}{\round{f}}{\round{M}}{\round{m}}, \round{f})}$.	Then $\mathcal{O}$ is feasible and $\round{M} \geq \roundedmaxload{\mathcal{O}}$ and $\round{m} \leq \roundedminload{\mathcal{O}}$.
\end{proof}

The end of the section is devoted to explaining how we can compute the auxiliary functions in polynomial time. 	
In order to prevent the technical details from obfuscating the key principles of this proof, the rest of the section is organized as follows. 
\begin{itemize}
	\item In Subsection~\ref{subsec:approx:3}, Lemma~\ref{lem:approx:7} gives the complexity of an algorithm computing the auxiliary functions $\maxoutputflowst$ and $\mininputflowst$, given some assumption on the existence of two functions $h_o$ and $h_i$ that can be computed in polynomial time,
	\item Those two functions are respectively described in Subsections~\ref{subsec:approx:4} (in Lemmas~\ref{lem:approx:5} and \ref{lem:approx:8}) and \ref{subsec:approx:5} (in Lemmas~\ref{lem:approx:6} and \ref{lem:approx:9}). In order to prove their complexities, we make another assumption on the existence of another algorithm.
	\item This algorithm is then described in Subsection~\ref{subsec:approx:6} and Lemma~\ref{lem:approx:rec}. 
\end{itemize}
	
\subsubsection{Complexity of the computation of the auxiliary functions}
\label{subsec:approx:3}

We assume $\round{M} \in \round{L}, \round{m} \in \round{L}$, $\round{f} \in \round{F}$ and $[u, v] \in T$ are given, and we want to compute $\mininputflow{u}{v}{\round{f}}{\round{M}}{\round{m}}$ and $\maxoutputflow{u}{v}{\round{f}}{\round{M}}{\round{m}}$. For readability, we denote hereinafter those values by $\mininputflowsm{u}{v}{\round{f}}$ and $\maxoutputflowsm{u}{v}{\round{f}}$ as, in the proofs, the values $\round{M}$ and $\round{m}$ never change.

\begin{lemma}
	\label{lem:approx:7}
	Given a node $u \in T$, we write $\Gamma(u)$ the neighbors of $u$. We assume that, for every edge $[u, v]$ in $T$ and every $\round{f}$, 
	\begin{itemize}
		\item there exists a function $h_o : \mathbb{N}^{2 \cdot (|\Gamma(u)| - 1) \cdot |\round{F}|} \rightarrow \mathbb{N}$ such that $$\maxoutputflowsm{u}{v}{\round{f}} = h_o\left(\left\{\maxoutputflowsm{w}{u}{\round{f'}}, \mininputflowsm{u}{w}{\round{f}'} \vert  w \in \Gamma(u) \backslash \{v\}, \round{f'} \in \round{F}\right\}\right)$$ and that can be computed in time $O(n^4 \lvert \round{F}\rvert ^4)$
		\item there exists a function $h_i : \mathbb{N}^{2 \cdot (|\Gamma(v)| - 1) \cdot |\round{F}|} \rightarrow \mathbb{N}$ such that $$\mininputflowsm{u}{v}{\round{f}} = h_i\left(\left\{\maxoutputflowsm{w}{v}{\round{f'}}, \mininputflowsm{v}{w}{\round{f}'} \vert  w \in \Gamma(v) \backslash \{u\}, \round{f'} \in \round{F}\right\}\right)$$ and that can be computed in time $O(n^4 \lvert \round{F}\rvert ^3)$
	\end{itemize}
	then we can compute $\mininputflowsm{u}{v}{\round{f}}$ and $\maxoutputflowsm{u}{v}{\round{f}}$ for all $[u, v] \in T$ and $\round{f} \in \round{F}$ in time $O(n^5 \lvert \round{F}\rvert ^5)$.
\end{lemma}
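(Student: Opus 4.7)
The plan is to root the tree $T$ at an arbitrary vertex and run a two-phase dynamic program, exploiting the fact that for an edge $[u,v]$ with $v$ a child of $u$, the quantities $\mininputflowsm{u}{v}{\round{f}}$ and $\maxoutputflowsm{v}{u}{\round{f}}$ concern only the subtree rooted at $v$ (i.e.\ $T_v$ when one regards $[u,v]$ from the two sides), while the two remaining quantities $\maxoutputflowsm{u}{v}{\round{f}}$ and $\mininputflowsm{v}{u}{\round{f}}$ concern the complementary tree $T_u$ and therefore depend on data lying on the rest of the tree. The hypotheses of the lemma give recursion formulas $h_o$ and $h_i$ whose arguments involve only edges incident to one endpoint, so after rooting it suffices to schedule the edges in the right order.

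The first phase is a post-order traversal. When vertex $v$ with parent $u$ is visited, for every $\round{f} \in \round{F}$ we compute $\mininputflowsm{u}{v}{\round{f}}$ and $\maxoutputflowsm{v}{u}{\round{f}}$ by invoking $h_i$ and $h_o$ respectively; the recursions query only the values on the edges $[v,w]$ for children $w$ of $v$, which were handled earlier in the post-order and are therefore already stored.

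The second phase is a pre-order traversal starting at the root. When a non-root vertex $v$ with parent $u$ is visited, we compute $\maxoutputflowsm{u}{v}{\round{f}}$ and $\mininputflowsm{v}{u}{\round{f}}$ for every $\round{f}$. By assumption these require the values $\maxoutputflowsm{w}{u}{\round{f'}}$ and $\mininputflowsm{u}{w}{\round{f'}}$ for every $w \in \Gamma(u) \setminus \{v\}$ and every $\round{f'} \in \round{F}$. If $w$ is a child of $u$ distinct from $v$ then these are subtree-side quantities produced in the post-order phase; if $w$ is the parent of $u$ (when $u$ is not the root) then they are the complement-side quantities that were produced at the strictly earlier pre-order step processing the edge $[w,u]$. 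Hence all invocations of $h_o$ and $h_i$ are well-defined at the moment they are performed.

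For the complexity, each edge receives $O(|\round{F}|)$ calls of $h_o$ and $h_i$, costing $O(n^4 |\round{F}|^4)$ and $O(n^4 |\round{F}|^3)$ respectively; summing over the $n-1$ edges yields the announced bound $O(n^5 |\round{F}|^5)$. The only real difficulty is the scheduling bookkeeping sketched above — guaranteeing that no recursion ever touches a value that has not yet been materialised — which is exactly what motivates the post-order/pre-order split; all other steps are either direct applications of the two hypotheses or straightforward counting.
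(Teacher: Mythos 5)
Your proof is correct and follows essentially the same dynamic-programming scheme as the paper: each of the $2(n-1)$ directed edges carries $\lvert\round{F}\rvert$ values, each evaluated once by $h_o$ or $h_i$ in an order compatible with the dependencies, giving $O(n\lvert\round{F}\rvert)$ calls of cost $O(n^4\lvert\round{F}\rvert^4)$ each. The only difference is presentational: the paper orders the computations by induction on the height of the hanging subtrees $T_v$ and $T_u$, whereas you root the tree and split the same schedule into a post-order pass for the subtree-side quantities and a pre-order pass for the complement-side ones.
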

\begin{proof}
	If $v$ is a leaf, the value $\Gamma(v) \backslash \{u\}$ is empty. In that case, $\mininputflowsm{u}{v}{\round{f}} = h_i(\emptyset)$ and can be computed in time $O(n^4 \lvert \round{F}\rvert ^3)$ for each value $\round{f} \in \round{F}$. Similarly, if $u$ is a leaf, $\maxoutputflowsm{u}{v}{\round{f}} = h_o(\emptyset)$ can be computed, for each value $\round{f} \in \round{F}$, in time $O(n^4 \lvert \round{F}\rvert ^4)$ (in fact, those values can be computed in constant time, but this does not change the result of the lemma).
	
	We can then compute $\mininputflowsm{u}{v}{\round{f}}$ (respectively $\maxoutputflowsm{u}{v}{\round{f}}$) for every edge $[u, v]$ such that the height of the tree $T_v$ (respectively $T_u$) is 1. By induction on the height of the trees $T_v$ and $T_u$, we can then compute step by step from the leaves all the values $\mininputflowsm{u}{v}{\round{f}}$ and $\maxoutputflowsm{u}{v}{\round{f}}$ for all $[u, v] \in T$ and $\round{f} \in \round{F}$. At each iteration, a new value $\maxoutputflowsm{u}{v}{\round{f}}$ (respectively $\mininputflowsm{u}{v}{\round{f}}$) is evaluated in time $O(n^4 \lvert \round{F}\rvert ^4)$ (respectively $O(n^4 \lvert \round{F}\rvert ^3)$). There are $O((n - 1) \cdot \lvert \round{F}\rvert )$ possible values for $(u, v)$ and $\round{f}$, thus the lemma follows.
\end{proof}

\subsubsection{Complexity of the computation of the function $h_o$ of Lemma~\ref{lem:approx:7}}
\label{subsec:approx:4}

\begin{figure}[ht!]
	\centering
	\begin{tikzpicture}
			\tikzset{tinoeud/.style={draw, circle, minimum height=0.01cm}}
			
			\node[tinoeud, label={center:$u$}] (U) at (0,0) {};
			\node[tinoeud, label={center:$v$}] (V) at (0,-2) {};
			\node[tinoeud, label={above left:$u_1$}] (U1) at (140:4) {};
			\node[tinoeud, label={left:$u_2$}] (U2) at (170:4.5) {};
			\node[tinoeud, label={below left:$u_{\lvert J\rvert }$}] (UJ) at (200:4) {};
			\node[tinoeud, label={right:$u_{\lvert J\rvert  + 1}$}] (V1) at (40:4) {};
			\node[tinoeud, label={right:$u_{\lvert J\rvert  + 2}$}] (V2) at (10:4.5) {};
			\node[tinoeud, label={below right:$u_{p}$}] (VJ) at (-20:4) {};
			
			\draw[->,>=latex] (U1) to[out=-30, in=130] node[pos=0.3, above, sloped] {\small $g \leq \maxoutputflowsm{u_1}{u}{\round{g}} [\round{g}]$} (U);
			\draw[->,>=latex] (U2) -- node[pos=0.3, above, sloped] {\small $g \leq\maxoutputflowsm{u_2}{u}{\round{g}} [\round{g}]$} (U);
			\draw[->,>=latex] (UJ) to[out=30, in=190] node[pos=0.25, above, sloped] {\small $g \leq\maxoutputflowsm{u_{\lvert J\rvert }}{u}{\round{g}} [\round{g}]$} (U);
			\draw[->,>=latex] (U) to[out=50, in=200] node[pos=0.7, above, sloped] {\small $f_{\lvert J\rvert +1} = \mininputflowsm{u}{u_{\lvert J\rvert +1}}{\round{f_{\lvert J\rvert +1}}} [\round{f_{\lvert J\rvert +1}}]$} (V1);
			\draw[->,>=latex] (U) -- node[pos=0.8, above, sloped] {\small $f_{\lvert J\rvert +2} = \mininputflowsm{u}{u_{\lvert J\rvert +2}}{\round{f_{\lvert J\rvert +2}}} [\round{f_{\lvert J\rvert +2}}]$} (V2);
			\draw[->,>=latex] (U) to[out=-10, in=150] node[pos=0.8, above, sloped] {\small $f_p = \mininputflowsm{u}{u_{p}}{\round{f_{p}}} [\round{f_{p}}]$} (VJ);
			\draw[->,>=latex] (U) -- (V) node[midway, below right] {$f = \maxoutputflowsm{u}{v}{\round{f}} [\round{f}]$};
			
			\draw (177:1.5) node {$\vdots$};
			\draw (3:1.5) node {$\vdots$};
			
	\end{tikzpicture}
	\caption{Illustration of Lemma~\ref{lem:approx:5}, assuming $J = \{1, 2, \dots, \lvert J\rvert \}$. On each arc is written the flow followed by the rounded flow between square brackets.}
	\label{fig:approx:1}
\end{figure}
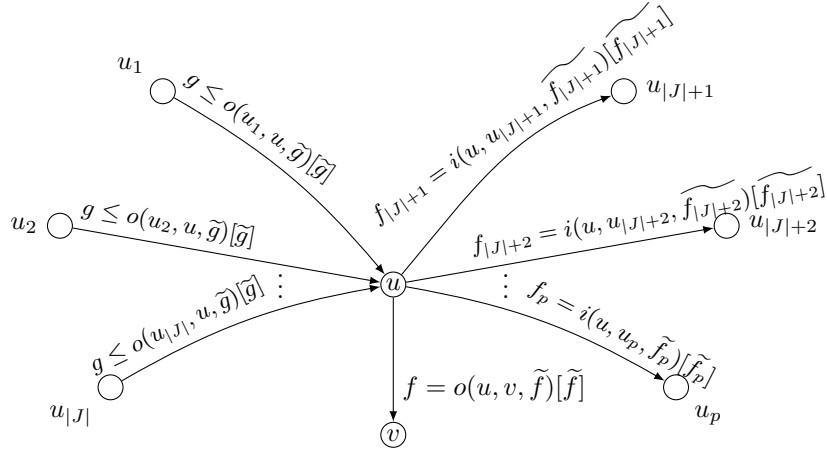

Let $u_1, u_2, \dots, u_p$ be the $p$ neighbors of $u$ in $T_u$ (every neighbor distinct from $v$). In this part, we assume $u \in W$ and $\capacity{u} = +\infty$. We will describe the formula of $\maxoutputflowsm{u}{v}{\round{f}}$ and explain how to adapt it to sources, sinks and switches with finite capacity. Without loss of generality, we also assume that $v$ has the minimum number in $\mathcal{N}$ (the numbering of the nodes of $T$) among all the neighbors of $u$.

The following lemma describes the function $h_o$ mentioned in Lemma~\ref{lem:approx:7} that, given $\maxoutputflowsm{w}{u}{\round{f'}}$ and $\mininputflowsm{u}{w}{\round{f}'}$ for every neighbor $w$ of $u$ except $v$ and every rounded flow $\round{f'}$, computes $\maxoutputflowsm{u}{v}{\round{f}}$. 

\begin{lemma}
	\label{lem:approx:5}
	
	Let $J  \subseteq \llbracket 1; p \rrbracket$, with $J \neq \emptyset$, and $\round{f_j}  \in \round{F}$ for  $j \not\in J$. 
	
	We set $\round{g} = \oplus((\round{f}, \round{f_j})_{j \not\in J}, \lvert J\rvert )$ where $\round{f_j} = 0$ if $j \in J$, and
	
	$$f(J, (\round{f_j})_{j \not\in J}) = \begin{cases}
		\lvert J\rvert  \cdot \min\limits_{j \in J} \maxoutputflowsm{u_j}{u}{\round{g}} - \sum\limits_{j \not\in J} \mininputflowsm{u}{u_j}{\round{f_j}} &\text{if this value is non-negative}\\
		-\infty &\text{otherwise }
	\end{cases}$$
	
	then
	$$\maxoutputflowsm{u}{v}{\round{f}} = 
	\max\limits_{\substack{J  \subseteq \llbracket 1; p \rrbracket\\J \neq \emptyset}}
	\max\limits_{\substack{\round{f_j}  \in \round{F}\\ j \not\in J}}
	f(J, (f_j)_{j \not\in J})$$	 
	
\end{lemma}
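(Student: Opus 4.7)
The plan is to prove the identity by a structural decomposition of feasible entering semi-orientations $\mathcal{HO}$ entering $(u,v)$, combined with a matching construction. I would first observe that any such $\mathcal{HO}$ is fully determined by (i) the set $J \subseteq \llbracket 1; p \rrbracket$ of indices with $\mathcal{HO}([u, u_j]) = (u_j, u)$, (ii) the rounded flows $\round{f_{j'}}$ on the outgoing arcs $(u, u_{j'})$ for $j' \notin J$, and (iii) a sub-orientation of each subtree hanging at a $u_j$. The restriction to the $u_j$-side for $j \in J$ then forms a feasible entering semi-orientation entering $(u_j, u)$, while the restriction to the $u_{j'}$-side for $j' \notin J$ forms a feasible outgoing semi-orientation from $(u, u_{j'})$; the load bounds $\round{m}$ and $\round{M}$ propagate to both.

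Next, I would apply the flow rules at $u$. Since $u \in W$ has infinite capacity, the demand constraint forces $J \neq \emptyset$; all arcs entering the switch $u$ carry a common flow $a$, so conservation at $u$ gives $f = \lvert J \rvert \cdot a - \sum_{j' \notin J} b_{j'}$ with $b_{j'} = \flow{\mathcal{HO}}{u, u_{j'}}$. Applying Definition~\ref{def:roundedflow} at $u$, and using that $v$ has the smallest label under $\mathcal{N}$ among $u$'s neighbors, the rounded flow on each arc $(u_j, u)$ for $j \in J$ must equal $\round{g}$ as defined in the statement. For the upper bound I would then plug in the definitions: $a \leq \maxoutputflowsm{u_j}{u}{\round{g}}$ for every $j \in J$ and $b_{j'} \geq \mininputflowsm{u}{u_{j'}}{\round{f_{j'}}}$ for every $j' \notin J$, yielding $f \leq f(J, (\round{f_{j'}})_{j' \notin J})$ for the induced data.

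For the lower bound, given any $(J, (\round{f_{j'}}))$ with $f(J, (\round{f_{j'}})) \geq 0$, I would build a witness $\mathcal{HO}$ by gluing: for each $j' \notin J$, an outgoing semi-orientation realising $\mininputflowsm{u}{u_{j'}}{\round{f_{j'}}}$; and for each $j \in J$, an entering semi-orientation realising $\maxoutputflowsm{u_j}{u}{\round{g}}$, whose flow is then lowered, via Lemma~\ref{lem:approx:3}, to the common value $a = \min_{j \in J} \maxoutputflowsm{u_j}{u}{\round{g}}$ without losing feasibility or violating the load bounds. Consistency of the rounded flow at $(u,v)$ is automatic by the definition of $\round{g}$. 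Maximising over all admissible $(J, (\round{f_{j'}}))$ then concludes.

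The main obstacle is that the common incoming flow $a$ must be realised simultaneously in every $u_j$-subtree for $j \in J$, yet the actual flow produced by a fixed orientation is rigid and the per-subproblem maxima may differ across $j$. This is precisely the role of Lemma~\ref{lem:approx:3}: its monotonicity for entering semi-orientations lets one uniformly shrink the flow down to the pointwise minimum without losing feasibility, so the pieces align into a single feasible $\mathcal{HO}$. A smaller bookkeeping point is the order of the tuple fed to $\oplus$ when computing $\round{g}$, which is pinned down by the assumption placing $v$ first in $\mathcal{N}$.
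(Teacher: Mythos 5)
Your proposal is correct and follows essentially the same route as the paper's proof: the upper bound by restricting a maximizing entering semi-orientation to the subtrees at each $u_j$ (using the demand constraint to force $J \neq \emptyset$, flow conservation at the switch $u$, and the numbering assumption to pin down $\round{g}$), and the lower bound by gluing optimal sub-semi-orientations and invoking Lemma~\ref{lem:approx:3} to lower the entering flows to the common value $\min_{j \in J} \maxoutputflowsm{u_j}{u}{\round{g}}$. You also correctly identify the two delicate points (aligning the common incoming flow across the $u_j$-subtrees, and the ordering of the tuple in $\oplus$), so nothing essential is missing.
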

\begin{example}
	\label{example:oexample}
\begin{figure}[!ht]
	\centering
	\begin{tikzpicture}
		\clip (-3.75,-1.5) rectangle (0.75, 2);
		
		\node[source] [label=right:$$]  [label=above:$$]  (S1) at (-3,1) {};
		\node[source] [label=left:$s_2$] [label=above:$20/20$] (S2) at (0,1) {};
		
		\node[switch] (W1) at (-3,0) {};
		\node[switch] [label=above:$v$] (W12) at (-1.5,0) {};
		\node[switch] [label=right:$u$] (W2) at (0,0) {};
		
		\node[puits]  (P1) at (-3,-1) {};
		\node[puits] [label=left:$15$] [label=right:$p_2$] (P2) at (0,-1) {};
		\node[puits]  (P3) at (-1.5,1) {};
		
		\draw[-, >=latex, dotted] (S1) -- (W1);
		\draw[-, >=latex, dotted] (S1) -- (P3);
		\draw[-, >=latex] (S2) -- (W2);
		
		\draw[-, >=latex, dotted] (W1) -- (P1);
		\draw[-, >=latex, dotted] (W12) -- (W1);
		\draw[<-, >=latex] (W12) -- (W2);
		\draw[-, >=latex] (W2) -- (P2);
	\end{tikzpicture}
	\caption{Illustation of Example~\ref{example:oexample}.}
	\label{fig:oexample}
\end{figure}
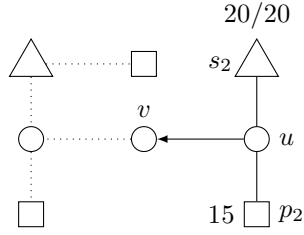

We use the instance of Figure~\ref{fig:oexample} to illustrate the Lemma. We assume that $\varepsilon = 1$ to simplify the calculations of rounding values. We also assume that $\tilde{m} = 0.2$ and $\tilde{M} = 0.5$. We would like to compute $\maxoutputflowsm{u}{v}{4}$ that is the greatest power that can be sent from $u$ to $v$ if the rounded flow between them is $4$. Intuitively, this value is $5$ as the source $s_2$ can produce at most $20$ and must send $15$ to $p_2$. 

Now lets compute the formula of Lemma~\ref{lem:approx:5}. We get:

$$\maxoutputflowsm{u}{v}{4} = \max \begin{cases}
	\maxoutputflowsm{s_2}{u}{\round{g}} - \mininputflowsm{u}{p_2}{\round{f_2}} &\text{ where } \round{g} = \oplus((4, \round{f_2}), 1)\\
	2 \cdot \min(\maxoutputflowsm{s_2}{u}{\round{g}}, \maxoutputflowsm{p_2}{u}{\round{g}}) &\text{ where } \round{g} = \oplus((4), 2)\\
	\maxoutputflowsm{p_2}{u}{\round{g}} - \mininputflowsm{u}{s_2}{\round{f_1}} &\text{ where } \round{g} = \oplus((4, \round{f_1}), 1)
\end{cases}$$

The value $\maxoutputflowsm{p_2}{u}{\round{g}}$ is necessarily $-\infty$. Indeed, there is not feasible semi-orientation where the arc $(p_2, u)$ is directed toward $u$ as the demand constraint would not be satisfied for $p_2$. This leave us with

$$\maxoutputflowsm{u}{v}{4} = \maxoutputflowsm{s_2}{u}{\round{g}} - \mininputflowsm{u}{p_2}{\round{f_2}} \text{ where } \round{g} = \oplus((4, \round{f_2}), 1)$$

The value $\mininputflowsm{u}{p_2}{\round{f_2}}$ is $+\infty$ if $\round{f_2}$ is not the rounded value of $\power{p_2}$, that is $a(15) = 8$ if $\varepsilon = 1$, and $\mininputflowsm{u}{p_2}{8}$ is $\power{p_2} = 15$. The value $\round{g}$ is then $\oplus((4, 8), 1) = a(8  + a(4)) = a(12) = 8$. We search for $\maxoutputflowsm{s_2}{u}{8}$. A rounded flow for $s_2$ of $8$ is valid as its rounded load would be $8/20$, which is between $\round{m}$ and $\round{M}$. As $s_2$ cannot produce more than $20$, $\maxoutputflowsm{s_2}{u}{8}$ is 20. 

Thus we get $\maxoutputflowsm{u}{v}{4} = 20 - 15 = 5$ as predicted.
\end{example}
\begin{remark}[Intuition]
	Figure~\ref{fig:approx:1} illustrates Lemma~\ref{lem:approx:5}. Given a subset $J$ and rounded flows $(\round{f_j})_{j \not\in J}$, the value $f(J, (\round{f_j})_{j \not\in J})$ corresponds to the maximum flow that can go through $u$ to $v$ if, for every $j \not\in J$, the rounded flow in $(u, u_j)$ is $\round{f_j}$ and if, for every $j \in J$, all the edges $(u_j, u)$ are directed toward $u$. In those arcs, assuming the rounded flow in $(u, v)$ is $\round{f}$, the rounded flow is necessarily $\round{g}$.
	
	$\lvert J\rvert  \cdot \min_{j \in J} \maxoutputflowsm{u_j}{u}{\round{g}}$ is the maximum value that can be equitably distributed over all the input arcs of $u$. And $\sum_{j \not\in J} \mininputflowsm{u}{u_j}{\round{f_j}}$ is the value requested by the neighbors directed from $u$ other than $v$, consequently, the maximum value that can go from $u$ to $v$ is the difference.
	
	Note that $J$ must be not empty in order to direct at least one arc toward $u$ to satisfy the demand constraint.
\end{remark}
\begin{proof}

	Let $ma = 
	\max\limits_{\substack{J  \subseteq \llbracket 1; p \rrbracket\\J \neq \emptyset}}
	\max\limits_{\substack{\round{f_j}  \in \round{F}\\ j \not\in J}}
	f(J, (f_j)_{j \not\in J})$. 
	
	We assume that $\maxoutputflowsm{u}{v}{\round{f}} \neq -\infty$. Then, by definition of $\maxoutputflowsm{u}{v}{\round{f}}$, there exists a feasible semi-orientation $\mathcal{HO}$ entering $(u, v)$ with $f = \maxoutputflowsm{u}{v}{\round{f}}$ and $\round{f}$ such that $\roundedmaxload{(\mathcal{HO}, f, \round{f})} \leq \round{M}$ and $\roundedminload{(\mathcal{HO}, f, \round{f})} \geq \round{m}$. Let $J$ be the set of indexes $j \in \llbracket 1;p \rrbracket$ such that $\mathcal{HO}([u, u_j]) = (u_j, u)$. The set $J$ is not empty, otherwise the demand constraint is not satisfied for $u$.  We write $g = \flow{(\mathcal{HO}, f, \round{f})}{u_j, u}$, $\round{g} = \roundedflow{(\mathcal{HO}, f, \round{f})}{u_j, u}$, and $f_j = \flow{(\mathcal{HO}, f, \round{f})}{u, u_j}$, $\round{f_j} = \roundedflow{(\mathcal{HO}, f, \round{f})}{u, u_j}$ for $j \not\in J$. By definition of the rounded flow, and because $v$ has the minimum number in $\mathcal{N}$ among all the neighbors of $u$, $\oplus((\round{f}, \round{f_j})_{j \not\in J}, \lvert J\rvert ) = \round{g}$. Finally, by definition of $\flowst{(\mathcal{HO}, f, \round{f})}$, $f =  \lvert J\rvert  \cdot g - \sum\limits_{j \not\in J} f_j$.
	
	Let $T_{u_j}$ be the subtree of $T_u$ containing $u_j$ in the forest $T_u \backslash [u, u_j]$, let $j \in J$, and let $\mathcal{HO}_j$ be the semi-orientation entering $(u_j, u)$ that equals $\mathcal{HO}$ restricted to $T_{u_j}$ then $\mathcal{HO}_j$ is feasible with $g$ and $\round{g}$. In addition, $\roundedmaxload{\mathcal{HO}_j, g, \round{g}} \leq \roundedmaxload{\mathcal{HO}, f, \round{f}} \leq \round{M}$ and $\roundedminload{\mathcal{HO}_j, g, \round{g}} \geq \roundedminload{\mathcal{HO}, f, \round{f}} \geq \round{m}$ thus $\flow{(\mathcal{HO}, f, \round{f})}{u_j, u} \leq \maxoutputflowsm{u_j}{u}{\round{g}}$. As, for every edge $[u_j, u]$, $\flow{(\mathcal{HO}, f, \round{f})}{u_j, u} = g$, we have, $g \leq \min\limits_{j \in J} \maxoutputflowsm{u_j}{u}{\round{g}}$. We can similarly show that, for $j \not\in J$, $f_j \geq \mininputflowsm{u}{u_j}{\round{f_j}}$. Consequently, $f =  \lvert J\rvert  \cdot g - \sum\limits_{j \not\in J} f_j \leq \lvert J\rvert  \cdot \min\limits_{j \in J} \maxoutputflowsm{u_j}{u}{\round{g}}  - \sum\limits_{j \not\in J} \mininputflowsm{u}{u_j}{\round{f_j}} \leq ma$. Thus  $\maxoutputflowsm{u}{v}{\round{f}} \leq ma$. As $\maxoutputflowsm{u}{v}{\round{f}} \neq -\infty$, we can also deduce that $ma \neq -\infty$. 
	
	We now assume that $ma \neq -\infty$. Let $J \neq \emptyset$ and $\round{f_j}$ for $j \not\in J$, such that $f(J, (\round{f_j})_{j \not\in J}) = ma = \lvert J\rvert  \cdot \min\limits_{j \in J} \maxoutputflowsm{u_j}{u}{\round{g}}  - \sum\limits_{j \not\in J} \mininputflowsm{u}{u_j}{\round{f_j}}$ with the value $\round{g}$ being $\oplus((\round{f}, \round{f_j})_{j \not\in J}, \lvert J\rvert )$ (as $J \neq \emptyset$, this equality is defined). We build an semi-orientation $\mathcal{HO}$ entering $(u, v)$ satisfying $ma \leq \maxoutputflowsm{u}{v}{\round{f}}$. To do so, we set $\mathcal{HO}([u_j, u]) = (u_j, u)$ if $j \in J$ and $(u, u_j)$ otherwise. As $ma \neq -\infty$, we know that, if $j \in J$, then $\maxoutputflowsm{u_j}{u}{\round{g}} \neq -\infty$ and, if $j \not\in J$, $\mininputflowsm{u}{u_j}{\round{f_j}} \neq +\infty$. Thus, for every $j \in J$, there exists a feasible semi-orientation $\mathcal{HO}_j$ entering $(u_j, u)$ with $\maxoutputflowsm{u_j}{u}{\round{g}}$ and $\round{g}$ such that $\roundedmaxload{\mathcal{HO}_j, \maxoutputflowsm{u_j}{u}{\round{g}}, \round{g}} \leq \round{M}$ and $\roundedminload{\mathcal{HO}_j, \maxoutputflowsm{u_j}{u}{\round{g}}, \round{g}} \geq \round{m}$. Similarly, for every $j \not\in J$, there exists a feasible halt-orientation $\mathcal{HO}_j$ outgoing from $(u, u_j)$ such that $\roundedmaxload{\mathcal{HO}_j} \leq \round{M}$, $\roundedminload{\mathcal{HO}_j} \geq \round{m}$,  $\flow{\mathcal{HO}_j}{u, u_j} = \mininputflowsm{u}{u_j}{\round{f_j}}$ and $\roundedflow{\mathcal{HO}_j}{u, u_j} = \round{f_j}$. Let then $\mathcal{HO}$ be the union of all those semi-orientations plus $\mathcal{HO}([u, v]) = (u, v)$. 
	
	If $(\mathcal{HO}, ma,\round{f})$ is feasible, then, because we have $\roundedmaxload{\mathcal{HO}, ma, \round{f}} \leq \round{M}$ and $\roundedminload{\mathcal{HO}, ma, \round{f}} \geq \round{m}$, then $ma \leq \maxoutputflowsm{u}{v}{\round{f}}$ and $\maxoutputflowsm{u}{v}{\round{f}} \neq -\infty$. First, note that, because $J \neq \emptyset$, the node $u$ has an entering arc and the demand constraint is then satisfied for $u$. It is satisfied for every other node as all the semi-orientations $\mathcal{HO_j}$ are feasible. Secondly, for all $j \not\in J$, $\flow{(\mathcal{HO}, ma, \round{f})}{u, u_j} = \flow{\mathcal{HO}_j}{u, u_j} = \mininputflowsm{u}{u_j}{\round{f_j}}$ and $\roundedflow{(\mathcal{HO}, ma, \round{f})}{u, u_j} = \round{f_j}$. For all $j \in J$, the flow $\flow{(\mathcal{HO}, ma, \round{f})}{u_j, u} = \frac{1}{\lvert J\rvert } \cdot (ma + \sum\limits_{j \not\in J} \flow{(\mathcal{HO}, ma, \round{f})}{u, u_j}) = \min\limits_{j \in J} \maxoutputflowsm{u_j}{u}{\round{g}}$ and the rounded flow $\roundedflow{(\mathcal{HO}, f, \round{f})}{u_j, u} = \oplus((\round{f},  \round{f_j})_{j \not\in J}, \lvert J\rvert ) = \round{g}$ (again because $v$ has the minimum number in $\mathcal{N}$ among all the neighbors of $u$). Finally, for all $j \in J$, $(\mathcal{HO}_j, \maxoutputflowsm{u_j}{u}{\round{g}}, \round{g})$ is feasible. According to Lemma~\ref{lem:approx:3}, then $(\mathcal{HO}_j, \min\limits_{j \in J} \maxoutputflowsm{u_j}{u}{\round{g}}, \round{g})$ is also feasible. Consequently $(\mathcal{HO}, ma, \round{f})$ is feasible.
\end{proof}

The following lemma proves that we can compute the formula given in Lemma~\ref{lem:approx:5} in polynomial time. This lemma makes a hypothesis that is proven in Lemma~\ref{lem:approx:rec}.  

\begin{lemma}
	\label{lem:approx:8}\hfill
 \begin{itemize}
 	\item If we have access to $\maxoutputflowsm{u_j}{u}{\round{f}}$ and $\mininputflowsm{u}{u_j}{\round{f}}$ in $O(1)$ for every $j \in \llbracket 1; p \rrbracket$ and $\round{f} \in \round{F}$,
 	\item and if, given $d \in \llbracket 1; p \rrbracket$, $k \in \llbracket 1; p - d + 1 \rrbracket$ and $\tilde{g} \in \tilde{F}$, we can compute $$\min\limits_{\substack{J  \subseteq \llbracket 1; p \rrbracket\\ \lvert J\rvert  = d\\\min(J) = k}}
 	\min\limits_{\substack{\round{f_j}  \in \round{F}\\ j \not\in J\\ \oplus((\round{f},  \round{f_j})_{j \not\in J}, d) = \round{g}}}
 	\sum\limits_{j \not\in J} \mininputflowsm{u}{u_j}{\round{f_j}}$$ in time $O(n^2 \lvert \round{F}\rvert ^3)$,
 \end{itemize}
 then we can compute $\maxoutputflowsm{u}{v}{\round{f}}$ in $O(n^4\lvert \round{F}\rvert ^4)$.
\end{lemma}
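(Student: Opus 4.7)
The starting point is Lemma~\ref{lem:approx:5}, which expresses $\maxoutputflowsm{u}{v}{\round{f}}$ as a maximum over non-empty subsets $J$ and rounded flow tuples $(\round{f_j})_{j \not\in J}$ of the quantity $d \cdot \min_{j \in J} M_j - \sum_{j \not\in J} \mininputflowsm{u}{u_j}{\round{f_j}}$, where $d = \lvert J \rvert$, $M_j = \maxoutputflowsm{u_j}{u}{\round{g}}$ and $\round{g} = \oplus((\round{f}, \round{f_j})_{j \not\in J}, d)$. My plan is to parameterize this outer maximum by the triple $(d, k, \round{g})$ with $d, k \in \llbracket 1; n \rrbracket$, $\round{g} \in \round{F}$ and $k = \min(J)$; there are $O(n^2 \lvert \round{F} \rvert)$ such triples.

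For each triple, the second hypothesis returns in time $O(n^2 \lvert \round{F} \rvert^3)$ the quantity
$$B^{\star}(d, k, \round{g}) := \min_{\substack{J :\, \lvert J \rvert = d,\, \min(J) = k \\ (\round{f_j}) :\, \oplus(\cdots) = \round{g}}}\ \sum_{j \not\in J} \mininputflowsm{u}{u_j}{\round{f_j}},$$
while the first part of the assumption makes each $M_j$ available in $O(1)$. Iterating over all triples, invoking the hypothesis once per triple, and maintaining the running maximum of the appropriate combination of $M_j$ values and $B^{\star}$ will yield $\maxoutputflowsm{u}{v}{\round{f}}$ in total time $O(n^2 \lvert \round{F} \rvert) \cdot O(n^2 \lvert \round{F} \rvert^3) = O(n^4 \lvert \round{F} \rvert^4)$, which matches the stated bound.

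The main obstacle is the coupling between the two optimizations within a triple: the term $\min_{j \in J} M_j$ depends on $J$, and the $J$ that minimizes the sum delivering $B^{\star}$ is not, in general, the same $J$ that maximizes $d \cdot \min_{j \in J} M_j$. The naive substitution $\min_{j \in J} M_j \leftarrow M_k$ only provides an upper bound and may overestimate the true value. The resolution I anticipate is to introduce an auxiliary enumeration over the intended value $t$ of $\min_{j \in J} M_j$, drawn from the $O(n)$ distinct values of the $M_j$, and to invoke the hypothesis on a suitably restricted instance in which indices $j$ with $M_j < t$ are forbidden from $J$. Then $d \cdot t$ exactly matches $d \cdot \min_{j \in J} M_j$ whenever the minimum is attained at $t$, and taking the max over $t$ recovers the true value. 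Verifying that this restriction is compatible with the $\oplus$ computation and keeps the overall count within the $O(n^4 \lvert \round{F} \rvert^4)$ budget is the technical core of the argument.
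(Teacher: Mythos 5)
Your proposal is correct and matches the paper's proof: the paper performs exactly the decomposition over triples $(d, k, \round{g})$, calls the hypothesis once per triple for the sum term, and resolves the coupling you identify by writing $o_1(\round{g}), \dots, o_p(\round{g})$ for the values $\maxoutputflowsm{u_j}{u}{\round{g}}$ sorted in ascending order, so that under the constraint $\min(J) = k$ (taken with respect to this sorted indexing) the term $\min_{j \in J} \maxoutputflowsm{u_j}{u}{\round{g}}$ equals the constant $o_k(\round{g})$ and decouples from the inner minimization. This is precisely your ``enumerate the intended value $t$ of the minimum'' idea, except that the rank $k$ already encodes $t = o_k(\round{g})$, so no auxiliary enumeration over $t$ (and hence no extra factor of $n$) is needed and the $O(n^4 \lvert\round{F}\rvert^4)$ bound follows.
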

\begin{proof}
	
	By Lemma~\ref{lem:approx:5},
	
	\begin{align*}
		\maxoutputflowsm{u}{v}{\round{f}} &= 
		\max\limits_{\substack{J  \subseteq \llbracket 1; p \rrbracket\\ J \neq \emptyset}}
		\max\limits_{\substack{\round{f_j}  \in \round{F}\\ j \not\in J}}
		f(J, (f_j)_{j \not\in J})\\
		\maxoutputflowsm{u}{v}{\round{f}} &= \max\limits_{d = 1}^p \max\limits_{\substack{J  \subseteq \llbracket 1; p \rrbracket\\ \lvert J\rvert  = d}}
		\max\limits_{\substack{\round{f_j}  \in \round{F}\\ j \not\in J}}
		f(J, (f_j)_{j \not\in J})\\
		\maxoutputflowsm{u}{v}{\round{f}} &= \max\limits_{d = 1}^p \max\limits_{k = 1}^{p-d + 1} \max\limits_{\substack{J  \subseteq \llbracket 1; p \rrbracket\\ \lvert J\rvert  = d\\\min(J) = k}}
		\max\limits_{\substack{\round{f_j}  \in \round{F}\\ j \not\in J}}
		f(J, (f_j)_{j \not\in J})\\
		\maxoutputflowsm{u}{v}{\round{f}} &= \max\limits_{d = 1}^p \max\limits_{k = 1}^{p-d + 1} 
		\max\limits_{\round{g} \in \round{F}} 		\max\limits_{\substack{J  \subseteq \llbracket 1; p \rrbracket\\ \lvert J\rvert  = d\\\min(J) = k}}
		\max\limits_{\substack{\round{f_j}  \in \round{F}\\ j \not\in J\\ \oplus((\round{f},  \round{f_j})_{j \not\in J}, d) = \round{g}}}
		f(J, (f_j)_{j \not\in J})\\
		\intertext{If, for every $\round{g} \in \round{F}$, we write $o_1(\round{g}), o_2(\round{g}), \dots, o_p(\round{g})$ the $p$ values $\maxoutputflowsm{u_j}{u}{\round{g}}$ sorted in ascending order}
		\maxoutputflowsm{u}{v}{\round{f}} &= \max\limits_{d = 1}^p \max\limits_{k = 1}^{p-d + 1}
		\max\limits_{\round{g} \in \round{F}} 		\max\limits_{\substack{J  \subseteq \llbracket 1; p \rrbracket\\ \lvert J\rvert  = d\\\min(J) = k}}
		\max\limits_{\substack{\round{f_j}  \in \round{F}\\ j \not\in J\\ \oplus((\round{f},  \round{f_j})_{j \not\in J}, d) = \round{g}}}
		d \cdot o_k(\round{g}) - \sum\limits_{j \not\in J} \mininputflowsm{u}{u_j}{\round{f_j}}
	\end{align*}
	
	Given $d$, $k$ and $\round{g}$, we can compute $\max\limits_{\substack{J  \subseteq \llbracket 1; p \rrbracket\\ \lvert J\rvert  = d\\\min(J) = k}}
	\max\limits_{\substack{\round{f_j}  \in \round{F}\\ j \not\in J\\ \oplus((\round{f},  \round{f_j})_{j \not\in J}, d) = \round{g}}}
	d \cdot o_k(\round{g}) - \sum\limits_{j \not\in J} \mininputflowsm{u}{u_j}{\round{f_j}}$ to solve the problem, and this can be computed in time $O(n^2 \lvert \round{F}\rvert ^3)$ by hypothesis. If the obtained value is negative, we return $-\infty$.
	
	As a conclusion $\maxoutputflowsm{u}{v}{\round{f}}$ can be computed in time $O(n^4 \lvert \round{F}\rvert ^4)$ by enumerating all the possible values of  $d$, $k$ and $\round{g}$, and by returning the maximum result obtained with all the triplets.
\end{proof}

\begin{remark}
	We can adapt the formula of Lemma~\ref{lem:approx:5} to other cases.
	\begin{itemize}
		\item If the numbering of $v$ in $\mathcal{N}$ is not the smallest among the numbers of all the neighbors of $u$, the list $(\round{f},  \round{f_j})_{j \not\in J}$ should be sorted in the formula of $\round{g}$.
		\item If $u$ is a switch with a finite capacity, we must replace $\lvert J\rvert  \cdot \min\limits_{j \in J} \maxoutputflowsm{u_j}{u}{\round{g}}$ by $\min(\capacity{u}, \lvert J\rvert  \cdot \min\limits_{j \in J} \maxoutputflowsm{u_j}{u}{\round{g}})$ in the formula of the function $f(J, (\round{f_j})_{j \not\in J})$.
		\item If $u$ is a source, then we can have $J = \emptyset$ and we must replace $\lvert J\rvert  \cdot \min\limits_{j \in J} \maxoutputflowsm{u_j}{u}{\round{g}}$ by $\min((\lvert J\rvert  + 1) \cdot \production{u}, (\lvert J\rvert  + 1) \cdot \min\limits_{j \in J} \maxoutputflowsm{u_j}{u}{\round{g}})$; in addition, $\oplus((\round{f},  \round{f_j})_{j \not\in J}, \lvert J\rvert )$ must be replaced by $\oplus((\round{f},  \round{f_j})_{j \not\in J}, \lvert J\rvert  + 1)$ in the formula of $\round{g}$; and finally, we have to replace  $\max\limits_{\round{g}  \in \round{F}}$ by $\max\limits_{\round{g} \in \round{F}, \round{m}\leq \frac{\round{g}}{\production{u}} \leq \round{M}}$ in the formula of $\maxoutputflowsm{u}{v}{\round{f}}$.
		\item If $u$ is a sink, we must replace the formula of $\round{g}$ by $\oplus((\power{u}, \round{f},  \round{f_j})_{j \not\in J}, \lvert J\rvert )$ and replace $- \sum\limits_{j \not\in J} \mininputflowsm{u}{u_j}{\round{f_j}}$ by $- \sum\limits_{j \not\in J} \mininputflowsm{u}{u_j}{\round{f_j}} - \power{u}$ in the formula of $f(J, (\round{f_j})_{j \not\in J})$.
	\end{itemize}
	
\end{remark}

\subsubsection{Complexity of the computation of the function $h_i$ of Lemma~\ref{lem:approx:7}}
\label{subsec:approx:5}

\begin{figure}[ht!]
	\centering
	\begin{tikzpicture}
		\tikzset{tinoeud/.style={draw, circle, minimum height=0.01cm}}
		
		\node[tinoeud, label={center:$v$}] (U) at (0,0) {};
		\node[tinoeud, label={center:$u$}] (V) at (0,-2) {};
		\node[tinoeud, label={left:$v_1$}] (U1) at (140:4) {};
		\node[tinoeud, label={left:$v_2$}] (U2) at (170:4.5) {};
		\node[tinoeud, label={left:$v_{\lvert J\rvert }$}] (UJ) at (200:4) {};
		\node[tinoeud, label={right:$v_{\lvert J\rvert  + 1}$}] (V1) at (40:4) {};
		\node[tinoeud, label={right:$v_{\lvert J\rvert  + 2}$}] (V2) at (10:4.5) {};
		\node[tinoeud, label={below right:$v_{p}$}] (VJ) at (-20:4) {};
		
		\draw[->,>=latex] (U1) to[out=-30, in=130] node[pos=0.3, above, sloped] {\small $f \leq \maxoutputflowsm{v_1}{v}{\round{f}} [\round{f}]$} (U);
		\draw[->,>=latex] (U2) -- node[pos=0.3, above, sloped] {\small $f \leq \maxoutputflowsm{v_2}{v}{\round{f}} [\round{f}]$} (U);
		\draw[->,>=latex] (UJ) to[out=30, in=190] node[pos=0.25, above, sloped] {\small $f \leq \maxoutputflowsm{v_{\lvert J\rvert }}{v}{\round{f}} [\round{f}]$} (U);
		\draw[->,>=latex] (U) to[out=50, in=200] node[pos=0.7, above, sloped] {\small $f_{\lvert J\rvert +1} = \mininputflowsm{v}{v_{\lvert J\rvert +1}}{\round{f_{\lvert J\rvert +1}}} [\round{f_{\lvert J\rvert +1}}]$} (V1);
		\draw[->,>=latex] (U) -- node[pos=0.7, above, sloped] {\small $f_{\lvert J\rvert +2} = \mininputflowsm{v}{v_{\lvert J\rvert +2}}{\round{f_{\lvert J\rvert +2}}} [\round{f_{\lvert J\rvert +2}}]$} (V2);
		\draw[->,>=latex] (U) to[out=-10, in=150] node[pos=0.8, above, sloped] {\small $f_p = \mininputflowsm{u}{v_{v}}{\round{f_{p}}} [\round{f_{p}}]$} (VJ);
		\draw[<-,>=latex] (U) -- (V) node[midway, below right] {$f = \mininputflowsm{u}{v}{\round{f}} [\round{f}]$};
		
		\draw (177:1.5) node {$\vdots$};
		\draw (3:1.5) node {$\vdots$};
		
	\end{tikzpicture}
	\caption{Illustration of Lemma~\ref{lem:approx:6}. On each arc is written the flow followed by the rounded flow between square brackets.}
	\label{fig:approx:2}
\end{figure}
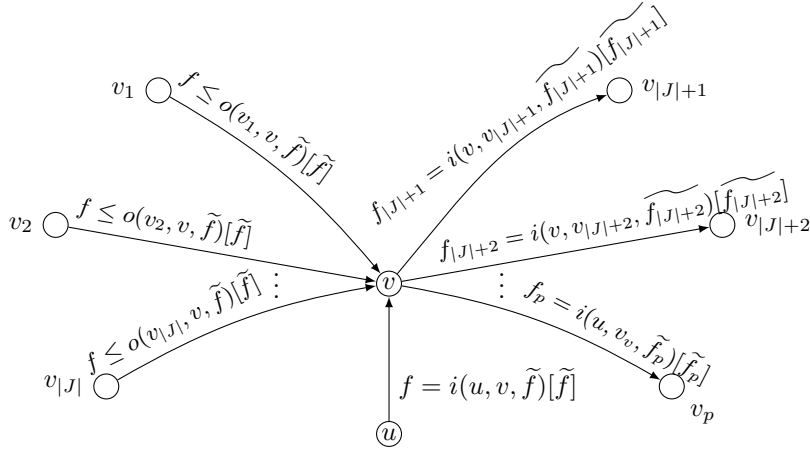

Let $v_1, v_2, \dots, v_p$ be the $p$ neighbors of $v$ in $T_v$ (every neighbor distinct from $u$). In this part, we assume $v \in W$ and $\capacity{v} = +\infty$. We will describe the formula of $\mininputflowsm{u}{v}{\round{f}}$ and explain how to adapt it to sources, sinks and switches with finite capacity. 

The following lemma describes the function $h_i$ mentioned in Lemma~\ref{lem:approx:7} that, given $\maxoutputflowsm{w}{v}{\round{f'}}$ and $\mininputflowsm{v}{w}{\round{f}'}$ for every neighbor $w$ of $v$ except $u$ and every rounded flow $\round{f'}$, computes $\mininputflowsm{u}{v}{\round{f}}$. 

\begin{lemma}
	\label{lem:approx:6}
	Let $J  \subseteq \llbracket 1; p \rrbracket$, and $\round{f_j}  \in \round{F}$ for $j \not\in J$ such that $\round{f} = \oplus((\round{f_j})_{j \not\in J}, \lvert J\rvert  + 1)$.
	
	We set
	$$f(J, (\round{f_j})_{j \not\in J}) = \begin{cases}\frac{1}{\lvert J\rvert  + 1}\sum\limits_{j \not\in J} \mininputflowsm{v}{v_j}{\round{f_j}} &\text{ if this value is lower than } \min\limits_{j \in J} \maxoutputflowsm{v_j}{v}{\round{f}}\\
		+\infty & \text{ otherwise}
	\end{cases}$$
	
	If $J = \emptyset$, we set $\min\limits\limits_{j \in J} \maxoutputflowsm{v_j}{v}{\round{f}} = +\infty$.

	Then
	
	$$\mininputflowsm{u}{v}{\round{f}} = \min\limits_{J \subseteq \llbracket 1; p \rrbracket}\min\limits_{\substack{\round{f_j}  \in \round{F}\\ j \not\in J\\\oplus((\round{f_j})_{j \not\in J}, \lvert J\rvert  + 1) = \round{f}}} f(J, (\round{f_j})_{j \not\in J}) $$
	
\end{lemma}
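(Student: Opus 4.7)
The plan is to establish the equality by proving both inequalities, following the pattern of the proof of Lemma~\ref{lem:approx:5}. Let $mi$ denote the right-hand side of the claimed formula.

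For the direction $\mininputflowsm{u}{v}{\round{f}} \geq mi$, I would assume the left-hand side is finite and pick a feasible semi-orientation $\mathcal{HO}$ outgoing from $(u, v)$ witnessing $f = \mininputflowsm{u}{v}{\round{f}}$, with $\roundedflow{\mathcal{HO}}{u, v} = \round{f}$ and $\round{m} \leq \roundedminload{\mathcal{HO}} \leq \roundedmaxload{\mathcal{HO}} \leq \round{M}$. I would then set $J = \{j \in \llbracket 1; p \rrbracket : \mathcal{HO}([v, v_j]) = (v_j, v)\}$ and $\round{f_j} = \roundedflow{\mathcal{HO}}{v, v_j}$ for $j \not\in J$. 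Because $v$ is a switch with infinite capacity and in-degree $\lvert J\rvert + 1$, Definitions~\ref{def:valid:3} and \ref{def:roundedflow} give $f = \frac{1}{\lvert J\rvert + 1}\sum_{j \not\in J} \flow{\mathcal{HO}}{v, v_j}$ and $\round{f} = \oplus((\round{f_j})_{j \not\in J}, \lvert J\rvert + 1)$. For $j \not\in J$, the restriction of $\mathcal{HO}$ to $T_{v_j} \cup \{[v, v_j]\}$ is a feasible semi-orientation outgoing from $(v, v_j)$ respecting the load interval, so $\flow{\mathcal{HO}}{v, v_j} \geq \mininputflowsm{v}{v_j}{\round{f_j}}$. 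For $j \in J$, the same restriction is a feasible semi-orientation entering $(v_j, v)$ with flow $f$ and rounded flow $\round{f}$ (the values common to every arc entering the switch $v$), so $f \leq \maxoutputflowsm{v_j}{v}{\round{f}}$. Combining, $\frac{1}{\lvert J\rvert + 1}\sum_{j \not\in J} \mininputflowsm{v}{v_j}{\round{f_j}} \leq f \leq \min_{j \in J} \maxoutputflowsm{v_j}{v}{\round{f}}$ (the upper bound vacuous when $J = \emptyset$), which forces $f(J, (\round{f_j})_{j \not\in J}) \leq f$, and therefore $mi \leq f$.

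For the reverse inequality $\mininputflowsm{u}{v}{\round{f}} \leq mi$, I would assume $mi < +\infty$ and pick $(J, (\round{f_j})_{j \not\in J})$ achieving $mi$. For each $j \not\in J$, I would take a feasible semi-orientation $\mathcal{HO}_j$ outgoing from $(v, v_j)$ with $\flow{\mathcal{HO}_j}{v, v_j} = \mininputflowsm{v}{v_j}{\round{f_j}}$ and $\roundedflow{\mathcal{HO}_j}{v, v_j} = \round{f_j}$, respecting the load bounds. For each $j \in J$, I would take $\mathcal{HO}_j$ entering $(v_j, v)$ feasible with $\maxoutputflowsm{v_j}{v}{\round{f}}$ and $\round{f}$; since $mi \leq \maxoutputflowsm{v_j}{v}{\round{f}}$, Lemma~\ref{lem:approx:3} guarantees that $(\mathcal{HO}_j, mi, \round{f})$ is still feasible. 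Gluing all these subsolutions together, orienting $[u, v]$ as $(u, v)$ and each $[v, v_j]$ according to $J$, yields a semi-orientation $\mathcal{HO}$ outgoing from $(u, v)$. The demand constraint at $v$ holds because $(u, v)$ enters it; at every other node of $T_v$ it is inherited from the $\mathcal{HO}_j$; the capacity at $v$ is vacuous. A direct application of Definitions~\ref{def:valid:3} and \ref{def:roundedflow} confirms $\flow{\mathcal{HO}}{u, v} = mi$ and $\roundedflow{\mathcal{HO}}{u, v} = \round{f}$, and the load bounds are inherited, so $\mathcal{HO}$ certifies $\mininputflowsm{u}{v}{\round{f}} \leq mi$.

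The main subtlety will be the bookkeeping of the order-sensitive $\oplus$ operator: the list $(\round{f_j})_{j \not\in J}$ must be read in the ordering induced by $\mathcal{N}$ used in Definition~\ref{def:roundedflow}, which is precisely the convention implicit in the hypothesis $\round{f} = \oplus((\round{f_j})_{j \not\in J}, \lvert J\rvert + 1)$. Once this convention is fixed, the edge case $J = \emptyset$ falls out cleanly: $d^-(v) = 1$, the condition $f \leq \min_{j \in J} \maxoutputflowsm{v_j}{v}{\round{f}}$ becomes vacuous by the stated $+\infty$ convention, and both sides collapse to $\sum_{j = 1}^p \mininputflowsm{v}{v_j}{\round{f_j}}$.
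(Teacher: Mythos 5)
Your proposal is correct and follows essentially the same route as the paper's proof: both directions are established by, on one side, extracting $J$ and the rounded flows from an optimal outgoing semi-orientation and restricting to the subtrees $T_{v_j}$, and, on the other side, gluing the witnessing sub-semi-orientations with Lemma~\ref{lem:approx:3} used to lower the flow on the entering ones to the common value. Your explicit handling of the $\mathcal{N}$-ordering in $\oplus$ and of the $J=\emptyset$ case matches the conventions the paper relies on implicitly.
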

\begin{example}
	\label{example:iexample}
	\begin{figure}[!ht]
		\centering
		\begin{tikzpicture}
			\clip (-3.75,-1.5) rectangle (0.75, 2);
			
			\node[source] [label=right:$$]  [label=above:$$]  (S1) at (-3,1) {};
			\node[source] [label=left:$s_2$] [label=above:$7.5/20$] (S2) at (0,1) {};
			
			\node[switch] (W1) at (-3,0) {};
			\node[switch] [label=above:$u$] (W12) at (-1.5,0) {};
			\node[switch] [label=right:$v$] (W2) at (0,0) {};
			
			\node[puits]  (P1) at (-3,-1) {};
			\node[puits] [label=left:$15$] [label=right:$p_2$] (P2) at (0,-1) {};
			\node[puits]  (P3) at (-1.5,1) {};
			
			\draw[-, >=latex, dotted] (S1) -- (W1);
			\draw[-, >=latex, dotted] (S1) -- (P3);
			\draw[->, >=latex] (S2) -- (W2);
			
			\draw[-, >=latex, dotted] (W1) -- (P1);
			\draw[-, >=latex, dotted] (W12) -- (W1);
			\draw[->, >=latex] (W12) -- (W2);
			\draw[-, >=latex] (W2) -- (P2);
		\end{tikzpicture}
		\caption{Illustation of Example~\ref{example:iexample}.}
		\label{fig:iexample}
	\end{figure}
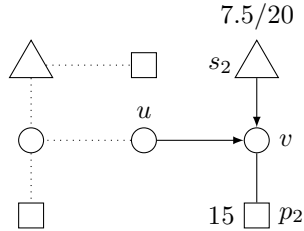
	
	We use the instance of Figure~\ref{fig:iexample} to illustrate the Lemma. We assume that $\varepsilon = 1$ to simplify the calculations of rounding values. We also assume that $\tilde{m} = 0.2$ and $\tilde{M} = 0.5$. We would like to compute $\mininputflowsm{u}{v}{4}$ that is the lower power that can be called by $v$ to $u$ if the rounded flow between them is $4$. Intuitively, this value is $7.5$ as the sink $p_2$ calls $15$ to $v$, which can either equally calls $7.5$ to each of $u$ and $s_2$ or calls $15$ to $u$ alone. 
	
	Now lets compute the formula of Lemma~\ref{lem:approx:6}. We get:
	
	$$\mininputflowsm{u}{v}{4} = \min \begin{cases}
		\frac{1}{2} \cdot \mininputflowsm{v}{p_2}{\round{f_1}} &\text{ if it is lower than }  \maxoutputflowsm{s_2}{v}{4} \text{ where } 4 = \oplus((\round{f_1}), 2)\\
		\frac{1}{2} \cdot \mininputflowsm{v}{s_2}{\round{f_2}} &\text{ if it is lower than }  \maxoutputflowsm{p_2}{v}{4} \text{ where } 4 = \oplus((\round{f_2}), 2)\\
		\mininputflowsm{v}{p_2}{\round{f_1}} + \mininputflowsm{v}{s_2}{\round{f_2}} &\text{ if it is lower than } +\infty \text{ where } 4 = \oplus((\round{f_1}, \round{f_2}), 1)\\
		\frac{1}{3} \cdot 0 &\text{ if it is lower than } \min(\maxoutputflowsm{p_2}{v}{4}, \maxoutputflowsm{s_2}{v}{4}) 
	\end{cases}$$

	The value $\maxoutputflowsm{p_2}{v}{4}$ is necessarily $-\infty$. Indeed, there is not feasible semi-orientation where the arc $(p_2, v)$ is directed toward $v$ as the demand constraint would not be satisfied for $p_2$. This leaves us with the first and third case of the formula.
	
	The value $\mininputflowsm{v}{s_2}{\round{f_2}}$ is $+\infty$ if $\round{f_2} \neq 0$, indeed, no power is called by $s_2$ if $(v, s_2)$ is directed toward $s_2$. However, $\mininputflowsm{v}{s_2}{0}$ is also infinite as the source would have a rounded load of 0 and this value is not between $\round{m}$ and $\round{M}$. Consequently,
	
	$$\mininputflowsm{u}{v}{4} = \frac{1}{2} \cdot \mininputflowsm{v}{p_2}{\round{f_1}} \text{ if it is lower than }  \maxoutputflowsm{s_2}{v}{4} \text{ where } 4 = \oplus((\round{f_1}), 2)$$
	
	The value $\mininputflowsm{v}{p_2}{\round{f_1}}$ is $+\infty$ if $\round{f_1}$ is not the rounded value of $\power{p_2}$, that is $a(15) = 8$ if $\varepsilon = 1$, and $\mininputflowsm{v}{p_2}{8} = \power{p_2} = 15$. We have $\oplus((8), 2) = 4$. The rounded load of $s_2$ is $4/20$ if the rounded flow of $s_2$ is 4, which is between $\round{m}$ and $\round{M}$. Thus $\maxoutputflowsm{s_2}{v}{4}$ is 20. Finally, $\frac{1}{2} \mininputflowsm{v}{p_2}{8} \leq \maxoutputflowsm{s_2}{v}{4}$. 
	
	Therefore, we get $\mininputflowsm{u}{v}{4} = \frac{1}{2} \cdot \mininputflowsm{v}{p_2}{8} = 7.5$ as predicted.
\end{example}
\begin{remark}[Intuition]
	Figure~\ref{fig:approx:2} illustrates Lemma~\ref{lem:approx:6}. Given a subset $J$, and rounded flows $(\round{f_j})_{j \not\in J}$, the value $f(J, (\round{f_j})_{j \not\in J})$ is the minimum flow requested to $v$ if, for every $j \not\in J$, the rounded flow in $(v, v_j)$ is $\round{f_j}$ and if, for every $j \in J$, all the edges $(v_j, v)$ are directed toward $v$. Assuming the rounded flow in $(u, v)$ is $\round{f}$, as it is equal in all the input arcs of $v$, the rounded flow is then also $\round{f}$ in $(v_j, v)$, and that value should equal $\oplus((\round{f_j})_{j \not\in J}, \lvert J\rvert  + 1)$.
	
	As the flow is also equal in all the input arcs, the minimum flow requested to $v$ cannot be more than $\min_{j \in J} \maxoutputflowsm{v_j}{v}{\round{f}}$.
	
	Note that, as $v$ is a switch, it looks tempting to set $J = \llbracket 1; n \rrbracket$ so that the flow in $(u, v)$ is 0 (and thus minimum). However, three problems may occur: first, it does not work if $\round{f} \neq 0$, secondly, this solution may not satisfy the choice of $\round{m}$ and $\round{M}$, and thirdly, we may have $\maxoutputflowsm{v_j}{v}{\round{f}} = -\infty$.
\end{remark}
\begin{proof}
	Let $mi = \min\limits_{J \subseteq \llbracket 1; p \rrbracket}\min\limits_{\substack{\round{f_j}  \in \round{F}\\ j \not\in J\\\oplus((\round{f_j})_{j \not\in J}, \lvert J\rvert  + 1) = \round{f}}} f(J, (\round{f_j})_{j \not\in J})$.
	
	If $\mininputflowsm{u}{v}{\round{f}} \neq +\infty$, then there exists a feasible semi-orientation $\mathcal{HO}$ outgoing from $(u, v)$ such that $\roundedmaxload{\mathcal{HO}} \leq \round{M}$, $\roundedminload{\mathcal{HO}} \geq \round{m}$, $\flow{\mathcal{HO}}{u, v} = \mininputflowsm{u}{v}{\round{f}}$ and $\roundedflow{\mathcal{HO}}{u, v} = \round{f}$. We hereinafter denote by $f$ the value $\mininputflowsm{u}{v}{\round{f}}$. Let $J$ be the set of indexes $j \in \llbracket 1;p \rrbracket$ such that $\mathcal{HO}([v, v_j]) = (v_j, v)$. We write $\round{f_j} = \roundedflow{\mathcal{HO}}{v, v_j}$ for $j \not\in J$. By definition of the rounded flow, $\oplus((\round{f_j})_{j \not\in J}, \lvert J\rvert  + 1) = \round{f} = \roundedflow{\mathcal{HO}}{v_j, v}$ for $j \in J$. Finally, by definition of the flow, for $j \in J$, $\round{f} = \roundedflow{\mathcal{HO}}{v_j, v} = 1/(\lvert J\rvert  + 1)\sum_{j \not\in J} f_j$.
	
	Let $T_{v_j}$ be the subtree of $T_v$ containing $v_j$ in the forest $T_v \backslash [v, v_j]$, let $j \in J$, and let $\mathcal{HO}_j$ be the semi-orientation entering $(v_j, v)$ that equals $\mathcal{HO}$ restricted to $T_{v_j}$ then $\mathcal{HO}_j$ with $f$ and $\round{f}$ is feasible. In addition, $\roundedmaxload{\mathcal{HO}_j, f, \round{f}} \leq \roundedmaxload{\mathcal{HO}} \leq \round{M}$ and $\roundedminload{\mathcal{HO}_j, f, \round{f}} \geq \roundedminload{\mathcal{HO}} \geq \round{m}$ thus $f \leq \maxoutputflowsm{v_j}{v}{\round{f}}$, and consequently $f \leq \min_{j \in J} \maxoutputflowsm{v_j}{v}{\round{g}}$. We can similarly show that, for $j \not\in J$, $f_j \geq \mininputflowsm{v}{v_j}{\round{f_j}}$. Finally $f = 1/(\lvert J\rvert  + 1)\sum_{j \not\in J} f_j$, consequently, $1/(\lvert J\rvert  + 1)\sum_{j \not\in J} \mininputflowsm{v}{v_j}{\round{f_j}} \leq \min\limits_{j \in J} \maxoutputflowsm{u_j}{u}{\round{f}}$. Therefore $f(J, (\round{f_j})_{j \not\in J})$ is not infinite and equals $f$; and then $mi \leq f(J, (\round{f_j})_{j \not\in J}) \leq f$. As $f =\mininputflowsm{u}{v}{\round{f}}$, we conclude that $mi \leq \mininputflowsm{u}{v}{\round{f}}$. We can also deduce that $mi \neq +\infty$.
	
	We now assume $mi \neq +\infty$. Let $J$ and $\round{f_j}$, for $j \not\in J$, such that $\oplus((\round{f_j})_{j \not\in J}, \lvert J\rvert  + 1) = \round{f}$ and $f(J, (\round{f_j})_{j \not\in J}) = mi = 1/(\lvert J\rvert  + 1)\sum_{j \not\in J} \mininputflowsm{v}{v_j}{\round{f_j}}$. We build an semi-orientation $\mathcal{HO}$ outgoing from $(u, v)$ satisfying $mi \geq \mininputflowsm{u}{v}{\round{f}}$. To do so, we set $\mathcal{HO}([v_j, v]) = (v_j, v)$ if $j \in J$ and $(v, v_j)$ otherwise. As $mi \neq +\infty$ then $f(J, (\round{f_j})_{j \not\in J}) \leq \min\limits_{j \in J} \maxoutputflowsm{v_j}{v}{\round{f}}$. Thus if $j \in J$, $\maxoutputflowsm{v_j}{v}{\round{f}} \neq -\infty$ and, if $j \not\in J$, $\mininputflowsm{v}{v_j}{\round{f_j}} \neq +\infty$. Consequently, there exists, for every $j \in J$, a feasible semi-orientation $\mathcal{HO}_j$ entering $(v_j, v)$ with $\maxoutputflowsm{v_j}{v}{\round{f}}$ and $\round{f}$ such that $\roundedmaxload{\mathcal{HO}_j, \maxoutputflowsm{v_j}{v}{\round{f}}, \round{f}} \leq \round{M}$ and $\roundedminload{\mathcal{HO}_j, \maxoutputflowsm{v_j}{v}{\round{f}}, \round{f}} \geq \round{m}$. Similarly, for every $j \not\in J$, there exists an semi-orientation $\mathcal{HO}_j$ outgoing from $(v, v_j)$ such that $\roundedmaxload{\mathcal{HO}_j} \leq \round{M}$, $\roundedminload{\mathcal{HO}_j} \geq \round{m}$, $\flow{\mathcal{HO}_j}{v, v_j} = \mininputflowsm{v}{v_j}{\round{f_j}}$ and $\roundedflow{\mathcal{HO}_j}{v, v_j} = \round{f_j}$. Let $\mathcal{HO}$ be the union of all those semi-orientations plus $\mathcal{HO}([u, v]) = (u, v)$. 
	
	If $\mathcal{HO}$ is feasible, $\flow{\mathcal{HO}}{u, v} = mi$ and $\roundedflow{\mathcal{HO}}{u, v} = \round{f}$, then, because we have $\roundedmaxload{\mathcal{HO}, f, \round{f}} \leq \round{M}$ and $\roundedminload{\mathcal{HO}, f, \round{f}} \geq \round{m}$, then $mi \geq \mininputflowsm{u}{v}{\round{f}}$ and then $\mininputflowsm{u}{v}{\round{f}} \neq +\infty$. Note firstly that the demand constraint is satisfied for $u$ as it has an input arc and it is satisfied for all the other nodes as all the semi-orientations are feasible. For all $j \not\in J$, $\flow{\mathcal{HO}}{v, v_j} = \flow{\mathcal{HO}_j}{v, v_j} = \mininputflowsm{v}{v_j}{\round{f_j}}$ and $\round{F}_{\mathcal{HO}}(v, v_j) = \round{F}_{\mathcal{HO}_j}(v, v_j) = \round{f_j}$. For all $j \in J$, $\flow{\mathcal{HO}}{u, v} = \flow{\mathcal{HO}}{v_j, v} = 1/(\lvert J\rvert  + 1) \cdot \sum_{j \not\in J} \flow{\mathcal{HO}}{v, v_j} = mi$. In addition, we made the assumption that $\oplus((\round{f_j})_{j \not\in J}, \lvert J\rvert  + 1) = \round{f}$, hence $\roundedflow{\mathcal{HO}}{u, v} = \roundedflow{\mathcal{HO}}{v_j, v} = \round{f}$. Finally, for every $j \in J$, $(\mathcal{HO}_j, \maxoutputflowsm{v_j}{v}{\round{f}}, \round{f})$ is feasible. As $mi \leq \min_{j \in J} \maxoutputflowsm{v_j}{v}{\round{f}}$ then, by Lemma~\ref{lem:approx:3}, the triplet $(\mathcal{HO}_j, \flow{\mathcal{HO}}{v_j, v}, \round{f})$ is also feasible. As a consequence $\mathcal{HO}$ is feasible.
\end{proof}

The following lemma proves we can compute the formula given in Lemma~\ref{lem:approx:6} in polynomial time. This lemma makes a hypothesis that is proven in Lemma~\ref{lem:approx:rec}.  

\begin{lemma}
	\label{lem:approx:9}\hfill
	\begin{itemize}
		\item If we have access to $\maxoutputflowsm{v_j}{v}{\round{f}}$ and $\mininputflowsm{v}{v_j}{\round{f}}$ in $O(1)$ for every $j \in \llbracket 1; p \rrbracket$ and $\round{f} \in \round{F}$,
		\item and if, given $d \in \llbracket 0; p \rrbracket$ and $k \in \llbracket 1; p - d + 1 \rrbracket$, we can compute $$\min\limits_{\substack{J \subseteq \llbracket 1; p \rrbracket\\\lvert J\rvert  = d\\\min(J) = k}}\min\limits_{\substack{\round{f_j}  \in \round{F}\\ j \not\in J\\\oplus((\round{f_j})_{j \not\in J}, d + 1) = \round{f}}}
		\sum\limits_{j \not\in J} \mininputflowsm{v}{v_j}{\round{f_j}}$$ in $O(n^2 \lvert \round{F}\rvert ^3)$ (where $\min(\emptyset)$ is arbitrarily set to $p + 1$ so that the formula is defined for $d = 0$)
	\end{itemize}
then we can compute $\mininputflowsm{u}{v}{\round{f}}$ in $O(n^4\lvert \round{F}\rvert ^3)$.
\end{lemma}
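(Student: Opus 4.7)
My plan is to mirror the proof of Lemma~\ref{lem:approx:8}: start from the characterization of $\mininputflowsm{u}{v}{\round{f}}$ given by Lemma~\ref{lem:approx:6} and decompose the double minimum by stratifying on $d = \lvert J\rvert$ and a secondary index $k$ that determines the value $\min_{j \in J}\maxoutputflowsm{v_j}{v}{\round{f}}$. Since $\round{f}$ is fixed by the input, I read the $p$ quantities $\maxoutputflowsm{v_j}{v}{\round{f}}$ in $O(p)$ via the first hypothesis, sort them in ascending order as $o_1 \leq o_2 \leq \ldots \leq o_p$, and relabel the neighbors $v_1,\ldots,v_p$ according to this permutation. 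Under this relabeling, the constraint $\min_{j \in J}\maxoutputflowsm{v_j}{v}{\round{f}} = o_k$ becomes equivalent to the constraint $\min(J) = k$ appearing in the second hypothesis.

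After the relabeling, Lemma~\ref{lem:approx:6} rewrites as
\begin{align*}
\mininputflowsm{u}{v}{\round{f}} \;=\; \min_{0 \leq d \leq p}\ \min_{1 \leq k \leq p-d+1}\ g(d, k),
\end{align*}
where $g(d,k) = \tfrac{1}{d+1} S_{d,k}$ whenever $\tfrac{1}{d+1} S_{d,k} \leq o_k$ and $g(d,k) = +\infty$ otherwise, with
\begin{align*}
S_{d,k} \;=\; \min_{\substack{J \subseteq \llbracket 1;p\rrbracket \\ \lvert J\rvert = d,\ \min(J) = k}}\ \min_{\substack{(\round{f_j})_{j \not\in J} \in \round{F}^{p-d} \\ \oplus((\round{f_j})_{j \not\in J},\, d+1) = \round{f}}} \sum_{j \not\in J} \mininputflowsm{v}{v_j}{\round{f_j}}.
\end{align*}
The convention $o_{p+1} = +\infty$ handles the $d = 0$ case (where $J = \emptyset$ forces $\min(J) = p+1$ by the lemma's convention, so the threshold inequality is vacuous). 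By the second hypothesis, each $S_{d,k}$ is computed in $O(n^2 \lvert\round{F}\rvert^3)$; there are $O(n^2)$ pairs $(d,k)$, for a total of $O(n^4 \lvert\round{F}\rvert^3)$. The initial sort and the constant-time lookups of the $o_k$'s (via the first hypothesis) are absorbed by this bound.

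\textbf{Comparison with Lemma~\ref{lem:approx:8} and main subtlety.} The $\lvert\round{F}\rvert^4$ bound in Lemma~\ref{lem:approx:8} comes from additionally enumerating an auxiliary rounded flow $\round{g}$ along the input arcs of $u$; here the analogous role is already filled by the given input $\round{f}$, since in Lemma~\ref{lem:approx:6} all input arcs of $v$ share the same rounded flow $\round{f}$. This is precisely why one factor of $\lvert\round{F}\rvert$ is saved, yielding $O(n^4 \lvert\round{F}\rvert^3)$ instead of $O(n^4 \lvert\round{F}\rvert^4)$. The only delicate point is justifying the sort-and-relabel step: one must verify that the outer minimization over $(J, (\round{f_j}))$ commutes with reparametrizing $J$ by the rank $k$, which is immediate because after relabeling, the threshold $o_k$ depends on $J$ only through $k$ and the remaining inner minimum is exactly what the second hypothesis delivers.
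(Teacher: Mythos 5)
Your proof is correct and follows essentially the same route as the paper: both start from Lemma~\ref{lem:approx:6}, stratify the outer minimum over $d = \lvert J\rvert$ and $k = \min(J)$ after ordering the neighbors so that the values $\maxoutputflowsm{v_j}{v}{\round{f}}$ are ascending (the paper does this relabeling implicitly where you make it explicit), invoke the second hypothesis for each of the $O(n^2)$ pairs $(d,k)$, and compare the result to $o_k(\round{f})$. The observation that fixing $\round{f}$ on the input arcs of $v$ saves the extra factor $\lvert\round{F}\rvert$ relative to Lemma~\ref{lem:approx:8}, and the resulting $O(n^4\lvert\round{F}\rvert^3)$ accounting, match the paper exactly.
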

\begin{proof}
	By Lemma~\ref{lem:approx:6}
	
	\begin{align*}
		\mininputflowsm{u}{v}{\round{f}} &= \min\limits_{J \subseteq \llbracket 1; p \rrbracket}\min\limits_{\substack{\round{f_j}  \in \round{F}\\ j \not\in J\\\oplus((\round{f_j})_{j \not\in J}, \lvert J\rvert  + 1) = \round{f}}} f(J, (\round{f_j})_{j \not\in J})\\
		\mininputflowsm{u}{v}{\round{f}} &= \min\limits_{d = 0}^p \min\limits_{\substack{J \subseteq \llbracket 1; p \rrbracket\\\lvert J\rvert  = d}}\min\limits_{\substack{\round{f_j}  \in \round{F}\\ j \not\in J\\\oplus((\round{f_j})_{j \not\in J}, d + 1) = \round{f}}} f(J, (\round{f_j})_{j \not\in J})\\
		\mininputflowsm{u}{v}{\round{f}} &= \min\limits_{d = 0}^p \min\limits_{k = 1}^{p-d +1} \min\limits_{\substack{J \subseteq \llbracket 1; p \rrbracket\\\lvert J\rvert  = d\\\min(J) = k}}\min\limits_{\substack{\round{f_j}  \in \round{F}\\ j \not\in J\\\oplus((\round{f_j})_{j \not\in J}, d + 1) = \round{f}}}f(J, (\round{f_j})_{j \not\in J})\\
		\intertext{If we write $o_1(\round{f}), o_2(\round{f}), \dots, o_p(\round{f})$ the $p$ values $\maxoutputflowsm{v_j}{v}{\round{f}}$ sorted in ascending order}
		\mininputflowsm{u}{v}{\round{f}} &= \min\limits_{d = 0}^p \min\limits_{k = 1}^{p-d +1} \min\limits_{\substack{J \subseteq \llbracket 1; p \rrbracket\\\lvert J\rvert  = d\\\min(J) = k}}\min\limits_{\substack{\round{f_j}  \in \round{F}\\ j \not\in J\\\oplus((\round{f_j})_{j \not\in J}, d + 1) = \round{f}}} \begin{cases}\frac{1}{d + 1}\sum\limits_{j \not\in J} \mininputflowsm{v}{v_j}{\round{f_j}} & \\\text{ if this value is lower than } o_k(\round{f})\\
			+\infty \text{ otherwise}
		\end{cases}
	\end{align*}

	Given $d$ and $k$, we can compute $\min\limits_{\substack{J \subseteq \llbracket 1; p \rrbracket\\\lvert J\rvert  = d\\\min(J) = k}}\min\limits_{\substack{\round{f_j}  \in \round{F}\\ j \not\in J\\\oplus((\round{f_j})_{j \not\in J}, d + 1) = \round{f}}}
	\sum\limits_{j \not\in J} \mininputflowsm{v}{v_j}{\round{f_j}}$ to solve the problem, and this can be computed in time $O(n^2 \lvert \round{F}\rvert ^3)$ by hypothesis. We then compare the returned value to $o_k(\round{f})$, if it is lower, we return it, otherwise we return $+\infty$. 
	
	As a conclusion $\mininputflowsm{u}{v}{\round{f}}$ can be computed in $O(n^4 \lvert \round{F}\rvert ^3)$ by enumerating all the possible values of $d$ and $k$, and by returning the minimum result obtained with all the couples.
\end{proof}

\begin{remark}
	We can adapt the formula of Lemma~\ref{lem:approx:6} to other cases:
	\begin{itemize}
		\item if $v$ is a switch with a finite capacity, we must replace $\min\limits_{j \in J} \maxoutputflowsm{v_j}{v}{\round{g}}$ by $\min(\capacity{v}, \min\limits_{j \in J} \maxoutputflowsm{v_j}{v}{\round{g}})$ in the formula of the function $f(J, (\round{f_j})_{j \not\in J})$.
		\item if $v$ is a source, we must replace $\frac{1}{\lvert J\rvert  + 1}\sum\limits_{j \not\in J} \mininputflowsm{v}{v_j}{\round{f_j}}$ if this value is lower than $\min\limits_{j \in J} \maxoutputflowsm{v_j}{v}{\round{f}}$ by $\frac{1}{\lvert J\rvert  + 2}\sum\limits_{j \not\in J} \mininputflowsm{v}{v_j}{\round{f_j}}$ if this value is lower than $\min(\production{v}, \min\limits_{j \in J} \maxoutputflowsm{v_j}{v}{\round{f}})$; in addition, $\oplus((\round{f_j})_{j \not\in J}, \lvert J\rvert  + 1) = \round{f}$ must be replaced by $\oplus((\round{f_j})_{j \not\in J}, \lvert J\rvert  + 2) = \round{f}$ and, finally, we must set $\mininputflowsm{u}{v}{\round{f}}$ to $+\infty$ if we do not check the inequalities $\round{m} \leq \round{f} / \production{v} \leq \round{M}$
		\item if $v$ is a sink, we must replace the formula $\oplus((\round{f_j})_{j \not\in J}, \lvert J\rvert  + 1) = \round{f}$ by $\oplus((\power{v}, \round{f_j})_{j \not\in J}, \lvert J\rvert  + 1) = \round{f}$; and replace $\sum\limits_{j \not\in J} \mininputflowsm{v}{v_j}{\round{f_j}}$ by $\sum\limits_{j \not\in J} \mininputflowsm{v}{v_j}{\round{f_j}} + \power{v}$ in the formula of $f(J, (\round{f_j})_{j \not\in J})$.
	\end{itemize}
\end{remark}

\subsubsection{Hypothesis of Lemmas~\ref{lem:approx:8} and \ref{lem:approx:9}}
\label{subsec:approx:6}

This final part proves the hypothesis made in the two lemmas. Let $\round{f}, \round{g} \in \round{F}$, $u$ be a node of $T$ and $u_1, u_2, \dots, u_p$ be a subset of the neighbors of $u$ in $T$. Let $d \in \llbracket 0; p \rrbracket$, $d' \in \llbracket d; d+1 \rrbracket$ and $k \in \llbracket 1; p - d + 1 \rrbracket$ if $d \neq 0$ and $k = p + 1$ otherwise. We arbitrarily set $\min(\emptyset) = p + 1$. We compute the following value:

$$\Xi = \min\limits_{\substack{J  \subseteq \llbracket 1; p \rrbracket\\ \lvert J\rvert  = d\\\min(J) = k}}
\min\limits_{\substack{\round{f_j}  \in \round{F}\\ j \not\in J\\ \oplus((\round{f}, \round{f_j}_{j \not\in J}), d') = \round{g}}}
\sum\limits_{j \not\in J} \mininputflowsm{u}{u_j}{\round{f_j}}$$

\begin{remark}
	\label{rem:arrondi:2}
	Recall the remark on page~\pageref{rem:arrondi} interrogating why the formula operator $\oplus$ introduces many rounding errors. This hypothesis is where this choice intervenes as it makes Lemma~\ref{lem:approx:rec} true. 
\end{remark}

\begin{lemma}
	\label{lem:approx:rec}
	Assuming we can access to $\mininputflowsm{u}{u_j}{\round{f}}$ in $O(1)$ for every $j \in \llbracket 1; p \rrbracket$ and $\round{f} \in \round{F}$, then we can compute $\Xi$ in $O(n^2 \lvert \round{F}\rvert ^3)$.
\end{lemma}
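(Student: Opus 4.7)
The plan is a left-to-right dynamic programming algorithm processing the indices $j = 1, 2, \ldots, p$ in the order induced by $\mathcal{N}$, exploiting the recursive definition of $\oplus$ from Definition~\ref{def:arrondi:2}: since $\oplus((g_1, \ldots, g_\ell), d')$ equals $a(\oplus((g_1, \ldots, g_{\ell-1}), d') + g_\ell / d')$, every partial $\oplus$-value is itself a rounded flow in $\round{F}$ and can thus be used as a scalar summary of the partial state. This is precisely the point raised in Remark~\ref{rem:arrondi} and Remark~\ref{rem:arrondi:2}: with the alternative formula $a(\sum_j g_j / d')$, the intermediate unrounded sum would not lie in $\round{F}$ and the DP would blow up.

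I maintain a table $D[i][c][\round{h}]$ indexed by $i \in \{0, \ldots, p\}$, $c \in \{0, \ldots, d\}$ and $\round{h} \in \round{F}$, storing the minimum value of $\sum_{j \notin J,\, j \leq i} \mininputflowsm{u}{u_j}{\round{f_j}}$ taken over all partial choices of $J \cap \{1, \ldots, i\}$ and $(\round{f_j})_{j \notin J,\, j \leq i}$ such that $\lvert J \cap \{1, \ldots, i\}\rvert = c$, the partial $\oplus$-value $\oplus((\round{f}, (\round{f_j})_{j \notin J,\, j \leq i}), d')$ equals $\round{h}$, and the constraint $\min(J) = k$ is respected on the prefix (i.e.\ $j \notin J$ for every $j < k$ with $j \leq i$, and $k \in J$ whenever $k \leq i$). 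The base case is $D[0][0][a(\round{f}/d')] = 0$ and $+\infty$ elsewhere.

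The transition from $i$ to $i+1$ splits into two branches. Either $i+1$ is added to $J$, which is allowed only when $i+1 \geq k$ (and, when $i+1 = k$, only starting from $c = 0$, enforcing that $k$ is the minimum of $J$); this contributes $D[i+1][c+1][\round{h}] \leftarrow \min(D[i+1][c+1][\round{h}], D[i][c][\round{h}])$. Or $i+1$ stays outside $J$, which is forbidden when $i+1 = k$; then for each $\round{f_{i+1}} \in \round{F}$ we set $\round{h}' = a(\round{h} + \round{f_{i+1}} / d')$ and update $D[i+1][c][\round{h}'] \leftarrow \min(D[i+1][c][\round{h}'], D[i][c][\round{h}] + \mininputflowsm{u}{u_{i+1}}{\round{f_{i+1}}})$. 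The desired value is $\Xi = D[p][d][\round{g}]$; the edge case $d = 0$, $k = p + 1$ is handled naturally since the first branch is then never triggered.

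For the complexity, the table has $O(p \cdot (d+1) \cdot \lvert\round{F}\rvert) = O(n^2 \lvert\round{F}\rvert)$ cells, and each cell triggers at most $O(\lvert\round{F}\rvert)$ updates (one per candidate $\round{f_{i+1}}$) in constant time, using the $O(1)$ access hypothesis on $\mininputflowsm{u}{u_j}{\cdot}$. The total running time is thus $O(n^2 \lvert\round{F}\rvert^2)$, well within the claimed $O(n^2 \lvert\round{F}\rvert^3)$ bound. The only delicate point is bookkeeping the constraint $\min(J) = k$ along the prefix, which is resolved by the case split above; everything else follows from the recursive structure of $\oplus$ matching the left-to-right traversal.
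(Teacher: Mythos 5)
Your proof is correct and follows essentially the same approach as the paper: a dynamic program indexed by the prefix length, the number of indices placed in $J$ so far, and the partial $\oplus$-value (which lies in $\round{F}$ precisely because of the recursive definition of $\oplus$), with the $\min(J)=k$ constraint enforced by the same case split on $i+1$ versus $k$. The only difference is that your ``push'' formulation enumerates one candidate $\round{f_{i+1}}$ per cell rather than all couples $(\round{f''},\round{f_i})$ per target cell, which gives the slightly tighter bound $O(n^2\lvert\round{F}\rvert^2)$, still within the claimed $O(n^2\lvert\round{F}\rvert^3)$.
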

\begin{proof}
	$\Xi$ can be equivalently seen as the search of $p$ couples $(\round{f_j}, b_j)_{p \in \llbracket 1; p \rrbracket}$ where $b_j$ is a boolean determining whether $j \in J$ or not and where: 
	
	\begin{enumerate}
		\item $b_k$ is true and there exist $d-1$ other true $b_j$ with $j > k$; every other boolean is false;
		\item if $b_j$ is true then $\round{f_j} = 0$;
		\item $\round{g} = \oplus((\round{f}, \round{f_1}, \round{f_2}, \dots, \round{f_{p}}), d')$;
		\item $\sum\limits_{\substack{j = 1\\b_j \text{ is false}}}^p \mininputflowsm{u}{u_j}{\round{f_j}}$ is minimum.
	\end{enumerate}
Note that if $d = 0$ and $k = p + 1$, then all the booleans are false. 
	
	We use a dynamic programming algorithm to compute the following recursive function $H$. We set $H(i, \round{f}', s)$ as the function that equals the minimum value $\sum\limits_{\substack{j = 1\\b_j \text{ is false}}}^i \mininputflowsm{u}{u_j}{\round{f_j}}$ that can be obtained with $i$ couples $(\round{f_1}, b_1), (\round{f_2}, b_2), \dots, (\round{f_i}, b_i)$ satisfying
	\begin{enumerate}
		\item if $i \geq k$ then $b_k$ is true and there exist $s-1$ other true $b_j$ with $j > k$; every other boolean is false; if $i < k$ then $s = 0$ and all the booleans are false. 
		\item if $b_j$ is true then $\round{f_j} = 0$
		\item $\round{f}' = \oplus((\round{f}, \round{f_1}, \round{f_2}, \dots, \round{f_{i}}), d')$
	\end{enumerate}
	
	If no such pairs exists then $H(i, \round{f'}, s) = +\infty$. By computing $H(p, \round{g}, d)$, we obtain the desired result. 
	
	We then have the following recurrence formula. 
	
	$$ H(0, \round{f'}, s) = 0 \text{ if } s = 0 \text{ and } \round{f'} = a(\frac{\round{f}}{d'}) \text{, and } +\infty \text{ otherwise}$$
	
	Indeed, if $i = 0$, then the sum we want to minimize is null, no boolean is true and $\round{f}' = \oplus((\round{f}), d')$.
	
	$$H(i, \round{f'}, s) = \min \begin{cases}
		H(i - 1, \round{f"}, s) - \mininputflowsm{u}{u_i}{\round{f_i}} \text{ if } i \neq k \text{ and } a(\round{f"} + \frac{\round{f_i}}{d'}) = \round{f'}\\
		H(i - 1, \round{f'}, s - 1) \text{ if } i \geq k\\
		+\infty
	\end{cases}$$

	Note that, if $d = 0$ and $k = p + 1$, then the second case never occurs, meaning that $s$ should equal $0$ to reach the terminal state $H(0, \round{f'}, 0)$. 
	
	The first case sets $b_i$ to false and the second one to true. With a dynamic programming algorithm, this formula can be computed in time $O(n^2 \lvert \round{F}\rvert ^3)$ as there are $n$ possible values for $i$ and $s$, and $\lvert \round{F}\rvert $ for $\round{f}'$; and as the first part of the formula is computed in time $O(\lvert \round{F}^2\rvert )$ by enumerating all the couples $(\round{f"}, \round{f_i})$ and the second one in $O(1)$.
\end{proof}

\subsubsection{Main result}
	
\begin{theorem}
	There exists an FPTAS for \maxminm and an FPTAS with absolute ratio for \maxminm and \minr running in time $O(n^7 \lvert \round{F}\rvert ^7)$ where $\lvert \round{F}\rvert  = O(\frac{1}{\varepsilon'} \cdot (n^3\log(n) + n^2\log(\Pi)))$.
\end{theorem}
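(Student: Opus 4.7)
The plan is to combine the preceding lemmas into a single algorithmic framework used for both \maxminm and \minr. The core routine fixes an arbitrary edge $[u,v]$ of $T$ and enumerates all pairs $(\round{m},\round{M})\in\round{L}^2$ with $\round{m}\leq\round{M}$. For each such pair, I would compute the full tables $\mininputflow{u'}{v'}{\round{f}}{\round{M}}{\round{m}}$ and $\maxoutputflow{u'}{v'}{\round{f}}{\round{M}}{\round{m}}$ for every directed edge $(u',v')$ of $T$ and every $\round{f}\in\round{F}$, using the recurrences of Lemmas~\ref{lem:approx:5} and~\ref{lem:approx:6}. By Lemmas~\ref{lem:approx:8}, \ref{lem:approx:9} and~\ref{lem:approx:rec}, these recurrences fulfill the hypotheses of Lemma~\ref{lem:approx:7}, so computing every entry of both tables for a fixed $(\round{m},\round{M})$ takes time $O(n^5|\round{F}|^5)$. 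Lemma~\ref{lem:approx:4} then decides whether a feasible orientation $\mathcal{O}$ with $\roundedminload{\mathcal{O}}\geq\round{m}$ and $\roundedmaxload{\mathcal{O}}\leq\round{M}$ exists, by scanning all $\round{f}\in\round{F}$ and both orientations of $[u,v]$. For \maxminm I retain the largest $\round{m}$ witnessing existence (fixing $\round{M}=1$); for \minr I retain the pair with smallest $\round{M}-\round{m}$.

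The corresponding orientation is recovered by a standard traceback through the dynamic programs of Lemmas~\ref{lem:approx:5}, \ref{lem:approx:6}, and~\ref{lem:approx:rec}: storing at each DP cell the arguments $(J,(\round{f_j})_{j\notin J})$ realizing the extremum allows me to walk outward from $[u,v]$ through the tree and fix the orientation of every edge, without increasing the asymptotic complexity.

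For the global complexity, $|\round{L}|=|\round{F}|\cdot|S|=O(n|\round{F}|)$, so the enumeration of pairs $(\round{m},\round{M})$ multiplies the inner cost by $O(n^2|\round{F}|^2)$, giving the announced total $O(n^7|\round{F}|^7)$. Substituting the bound $|\round{F}|=O(\frac{1}{\varepsilon'}(n^3\log n+n^2\log\Pi))$ of Lemma~\ref{lem:nbflots} makes the running time polynomial both in the size of the instance and in $1/\varepsilon'$. The approximation guarantees are immediate from the two approximation lemmas: Lemma~\ref{lem:approx:maxminm} yields $\minload{\round{\mathcal{O}}}\geq(1-\varepsilon')\minload{\mathcal{O}^*}$ and $\minload{\round{\mathcal{O}}}\geq\minload{\mathcal{O}^*}-\varepsilon'$, so setting $\varepsilon'=\varepsilon$ produces both a relative FPTAS and an absolute-ratio FPTAS for \maxminm; Lemma~\ref{lem:approx:minr} yields $\loadreserve{\round{\mathcal{O}}}\leq\loadreserve{\mathcal{O}^*}+3\varepsilon'$, so setting $\varepsilon'=\varepsilon/3$ produces an absolute-ratio FPTAS for \minr.

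The main delicate point is preserving throughout the pipeline the distinction already highlighted at the beginning of the section: the rounded flow is used only as the ranking metric, while Definitions~\ref{def:approx:2} and~\ref{def:approx:1} insist that $\mininputflowst$ and $\maxoutputflowst$ encode feasibility with respect to the real flow. Because both flows are carried side by side in the recurrences of Lemmas~\ref{lem:approx:5} and~\ref{lem:approx:6}, the orientation returned by the traceback automatically satisfies the original demand and capacity constraints of Definition~\ref{def:valid:3}, which is what certifies that the algorithm outputs a genuine feasible solution and not merely one that looked feasible under the lossy rounded flow.
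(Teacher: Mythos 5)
Your proposal is correct and follows essentially the same route as the paper: enumerate the pairs $(\round{m},\round{M})\in\round{L}^2$, decide existence of a feasible orientation within those rounded-load bounds via Lemma~\ref{lem:approx:4} and the dynamic programs of Lemmas~\ref{lem:approx:7}--\ref{lem:approx:rec} in $O(n^5\lvert\round{F}\rvert^5)$ per pair, and conclude with Lemmas~\ref{lem:approx:maxminm} and~\ref{lem:approx:minr} and the bound $\lvert\round{L}\rvert=O(n\lvert\round{F}\rvert)$. Your explicit treatment of the traceback and of the rescaling $\varepsilon'=\varepsilon/3$ for \minr is slightly more careful than the paper's own wording, but the argument is the same.
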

\begin{proof}
	By Lemma~\ref{lem:approx:4}, \ref{lem:approx:7}, \ref{lem:approx:8}, \ref{lem:approx:9} and 
	\ref{lem:approx:rec}, given $\round{M}, \round{m} \in \round{L}$, we can decide whether there exists a feasible orientation $\mathcal{O}$ where $\round{M} \geq \roundedmaxload{\mathcal{O}}$ and $\round{m} \leq \roundedminload{\mathcal{O}}$ or not in $O(n^5 \lvert \round{F}\rvert ^5)$. By computing this for every value of $\round{M}$ and $\round{m}$, in $O(n^5 \lvert \round{F}\rvert ^5 \lvert \round{L}\rvert ^2)$, we can get a feasible orientation $\mathcal{O}_r$ minimizing $\roundedloadreserve{\mathcal{O}}$ or a feasible orientation $\mathcal{O}_m$ maximizing $\roundedminload{\mathcal{O}}$. By Lemmas~\ref{lem:approx:maxminm} and \ref{lem:approx:minr}, $\mathcal{O}_r$ is an approximate solution with absolute ratio $\varepsilon'$, and $\mathcal{O}_m$ is an approximate solution with absolute ratio $\varepsilon'$ and relative ratio $(1 - \varepsilon')$.
	
	Consequently, this algorithm is an FPTAS for \maxminm and an FPTAS with absolute ratio for \maxminm and \minr. The complexity follows from the fact that $\lvert \round{L}\rvert  = O(n \lvert \round{F}\rvert )$ and from Lemma~\ref{lem:nbflots}.
\end{proof}

\section{Conclusion}

In this paper, we have studied the theoretical complexity and approximability of three new optimization flow problems where a constraint of equality is added at the entrance of each node, and whose objective is to return the orientation that balance as much as possible the load of the sources of the network. A reasonable continuation of this work is to implement and evaluate the performance of the polynomial algorithm for \minmaxm and the FPTAS algorithms for \maxminm and \minr on real electrical distribution networks, to determine their runtime efficiency and the resilience of the proposed configurations.

With this in mind, further work is needed first to improve the complexity of the algorithms. For example, it may not be necessary to enumerate all the values of $\round{M}$, $\round{m}$ when searching for a solution optimizing the rounded load reserve with Lemma~\ref{lem:approx:4}. The computation of the functions $\mininputflowst$ and $\maxoutputflowst$ should also be improved. 

Another continuation of the work would be to adapt the algorithms to networks that are \emph{almost} trees, such as graphs with small treewidth or with small cyclomatic number. 

\section*{Declarations}

\begin{itemize}
	\item This work is partially supported by the ANR project Labcom HYPHES  21-LCV1-0002.
	\item The authors have no relevant financial or non-financial interests to
	disclose.
	\item No extra material was used to produce this paper
\end{itemize}

\bibliographystyle{acm}
\bibliography{edla}

\begin{thebibliography}{10}

\bibitem{Atkins2009}
{\sc Atkins, K., Chen, J., Kumar, V.~A., and Marathe, A.}
\newblock {The structure of electrical networks: a graph theory based
  analysis}.
\newblock {\em International Journal of Critical Infrastructures 5}, 3 (2009),
  265.

\bibitem{Barth2022a}
{\sc Barth, D., Cohen-Boulakia, B., and Ehounou, W.}
\newblock Distributed reinforcement learning for the management of a smart grid
  interconnecting independent prosumers.
\newblock {\em Energies 15}, 4 (2022).

\bibitem{Barth2021}
{\sc Barth, D., Mautor, T., {de Moissac}, A., Watel, D., and Weisser, M.-A.}
\newblock Optimisation of electrical network configuration: Complexity and
  algorithms for ring topologies.
\newblock {\em Theoretical Computer Science 859\/} (2021), 162--173.

\bibitem{Barth2022}
{\sc Barth, D., Mautor, T., Watel, D., and Weisser, M.~A.}
\newblock {A polynomial algorithm for deciding the validity of an electrical
  distribution tree}.
\newblock {\em Information Processing Letters 176\/} (2022).

\bibitem{Blasi2022}
{\sc Blasi, T.~M., de~Aquino, C.~C., Pinto, R.~S., de~Lara~Filho, M.~O.,
  Fernandes, T. S.~P., and Tabarro, F.~H.}
\newblock {Active Distribution Networks with Microgrid and Distributed Energy
  Resources Optimization Using Hierarchical Model}.
\newblock {\em Energies}, 15 (2022).

\bibitem{Calhau2019}
{\sc Calhau, F.~G., and Martins, J. S.~B.}
\newblock A electric network reconfiguration strategy with case-based reasoning
  for the smart grid.
\newblock In {\em 8th Brazilian Conference on Intelligent Systems\/} (2019).

\bibitem{Cejka2021}
{\sc Cejka, S., Frieden, D., and Kitzmüller, K.}
\newblock {Implementation of self-consumption and energy communities in
  Austria’s and EU member states’ national law: A perspective on system
  integration and grid tariffs}.
\newblock In {\em CIRED 2021 Conference\/} (2021).

\bibitem{Christiano2011}
{\sc Christiano, P., Kelner, J.~A., Madry, A., Spielman, D.~A., and Teng,
  S.-H.}
\newblock {Electrical flows, laplacian systems, and faster approximation of
  maximum flow in undirected graphs}.
\newblock In {\em Proceedings of the 43rd annual ACM symposium on Theory of
  computing\/} (New York, New York, USA, 2011), ACM Press, p.~273.

\bibitem{Devaux2014}
{\sc Devaux, O.}
\newblock The advance management of distribution grid.
\newblock
  \url{https://www.edf.fr/sites/default/files/Lot%203/CHERCHEURS/Publications/advanced-management-distribution-grids.pdf},
  2014.
\newblock Online report published in EDF Technology unveiled, accessed
  08-November-2023.

\bibitem{Frieden2021}
{\sc Frieden, D., Tuerk, A., Neumann, C., d'Herbemont, S., and Roberts, J.}
\newblock Collective self-consumption and energy communities: Trends and
  challenges in the transposition of the eu framework.
\newblock {\em Theoretical Computer Science\/} (2020).

\bibitem{Ghiani05}
{\sc Ghiani, E., Mocci, S., and Pilo, F.}
\newblock Optimal reconfiguration of distribution network according to the
  microgrid paradigm.
\newblock In {\em Proceedings of 2005 International Confereence Futur Power
  System\/} (2005).

\bibitem{Guo2016}
{\sc Guo, J., Hug, G., and Tonguz, O.~K.}
\newblock {Intelligent Partitioning in Distributed Optimization of Electric
  Power Systems}.
\newblock {\em IEEE Transactions on Smart Grid 7}, 3 (may 2016), 1249--1258.

\bibitem{Hong2017}
{\sc Hong, H., Hu, Z., Guo, R., Ma, J., and Tian, J.}
\newblock Directed graph-based distribution network reconfiguration for
  operation mode adjustment and service restoration considering distributed
  generation.
\newblock {\em Journal of Modern Power Systems and Clean Energy 5}, 1 (2017),
  142--149.

\bibitem{Ibarra1975}
{\sc Ibarra, O.~H., and Kim, C.~E.}
\newblock {Fast Approximation Algorithms for the Knapsack and Sum of Subset
  Problems}.
\newblock {\em Journal of the ACM (JACM) 22}, 4 (1975), 463--468.

\bibitem{Labrini2015}
{\sc {Labrini, Haytham}}.
\newblock Graph-based model for distribution systems: Application to planning
  problem.
\newblock Master's thesis, 2015.

\bibitem{Li2010}
{\sc Li, J.}
\newblock {\em {Reconfiguration of power networks based on graph-theoretic
  algorithms}}.
\newblock PhD thesis, Iowa State University, Digital Repository, Ames, 2010.

\bibitem{Li2014}
{\sc Li, J., Ma, X.-Y., Liu, C.-C., and Schneider, K.~P.}
\newblock {Distribution System Restoration With Microgrids Using Spanning Tree
  Search}.
\newblock {\em IEEE Transactions on Power Systems 29}, 6 (nov 2014),
  3021--3029.

\bibitem{Moslehi2010}
{\sc Moslehi, K., and Kumar, R.}
\newblock {A Reliability Perspective of the Smart Grid}.
\newblock {\em IEEE Transactions on Smart Grid 1}, 1 (jun 2010), 57--64.

\bibitem{Quiros-Tortos2013}
{\sc Quirós-Tortós, J., and Terzija, V.}
\newblock A graph theory based new approach for power system restoration.
\newblock In {\em 2013 IEEE Grenoble Conference\/} (2013), pp.~1--6.

\bibitem{Sahua2022}
{\sc Sahua, A., Utkarsh, K., and Ding, F.}
\newblock {A Fast and Scalable Genetic Algorithm- Based Approach for Planning
  of Microgrids in Distribution Networks}.
\newblock In {\em IEEE Power and Energy Society General Meeting Denver\/}
  (2022).

\bibitem{Shen2018}
{\sc Shen, T., Li, Y., Xiang, J., Shen, T., Li, Y., and Xiang, J.}
\newblock {A Graph-Based Power Flow Method for Balanced Distribution Systems}.
\newblock {\em Energies 11}, 3 (feb 2018), 511.

\bibitem{Tang2014}
{\sc Tang, L., Yang, F., and Ma, J.}
\newblock A survey on distribution system feeder reconfiguration: Objectives
  and solutions.
\newblock In {\em 2014 IEEE Innovative Smart Grid Technologies - Asia (ISGT
  ASIA)\/} (2014), pp.~62--67.

\end{thebibliography}

\end{document}